\newtheorem{defn}{Definition}[section]
\newtheorem{rem}[defn]{Remark}
\newtheorem{thm}[defn]{Theorem}
\newtheorem{lemma}[defn]{Lemma}
\newtheorem{prop}[defn]{Proposition}
\newcommand\crule[3][black]{\textcolor{#1}{\rule{#2}{#3}}}
\newcommand{\ra}{\rightarrow}
\newcommand{\lra}{\longrightarrow}
\newcommand{\Ra}{\Rightarrow}
\newcommand{\Da}{\Downarrow}
\newcommand{\onto}[1]{\stackrel{#1}{\leadsto}}
\newcommand{\midsp}{\;|\;}
\newcommand{\telos}{\hfill$\Box$}
\newcommand{\type}[1]{{\tt #1}}
\newcommand{\iso}{\backsimeq}
\newcommand{\val}[1]{\mbox{$[\![#1]\!]$}}
\newcommand{\forces}{\Vdash}
\newcommand{\dforces}{\forces^{\!\!\partial}}
\newcommand{\yvval}[1]{\mbox{$(\!|#1|\!) $}}
\newcommand{\infrule}[2]{\frac{\mbox{\rm $#1$}}{\mbox{\rm $#2$}}}
\newcommand{\proves}{\vdash}
\newcommand{\vmodels}{\mbox{$\medvert\!\!\!\!\approx\;$}}
\newcommand{\upv}{\upVdash}
\newcommand{\rperp}{\mbox{${}^{\upv}$}}
\newcommand{\gphi}{{\mathcal  G}(Z_\partial)}
\newcommand{\gpsi}{{\mathcal  G}(Z_1)}
\newcommand{\bbox}{\blacksquare}
\newcommand{\lperp}{{}\rperp}
\newcommand{\ldd}{\mbox{$\largediamond\hspace*{-10pt}\Diamond\;$}}
\newcommand{\lbvert}{\mbox{\Large \mbox{$\boxvert$}}}
\newcommand{\lbminus}{\mbox{\Large \mbox{$\boxminus$}}}
\newcommand{\ldvert}{\raisebox{0.5pt}{\Large \mbox{$\diamondvert$}}}
\newcommand{\ldminus}{\raisebox{0.5pt}{\Large \mbox{$\diamondminus$}}}
\newcommand{\dd}{\diamonddiamond}
\newcommand{\lbb}{\mbox{\raisebox{1pt}{$\largesquare\hspace*{-7.6pt}\Box\;$}}}
\newcommand{\bb}{\boxbox}
\newcommand{\lbbox}{\raisebox{-1.2pt}[0pt][0pt]{\crule[black]{0.27cm}{0.27cm}}\hspace*{1pt}}
\newcommand{\filt}{\mbox{\rm Filt}}
\newcommand{\idl}{\mbox{\rm Idl}}
\newcommand{\rfspoon}{\rightfilledspoon}
\newcommand{\tright}{\triangleright}
\newcommand{\Mtright}{\mathrel{\mbox{$|\!\!\largetriangleright$}}}
\newcommand{\dfootl}{\downfootline}
\newcommand{\ufootl}{\upfootline}
\title{Distribution-Free Normal Modal Logics}
\author{Chrysafis Hartonas\\
University of Thessaly, Greece; hartonas@uth.gr}
\begin{document}
\maketitle
\begin{abstract}
  This article initiates the semantic study of distribution-free normal modal logic systems, laying the semantic foundations and anticipating further research in the area. The article explores roughly the same area, though taking a different approach, with a recent article by Bezhanishvili, de Groot, Dmitrieva and Morachini, who studied a distribution-free version of Dunn's Positive Modal Logic (PML). Unlike PML, we consider logics that may drop distribution and which are equipped with both an implication connective and modal operators.   
We adopt a uniform relational semantics approach, relying on recent results on representation and duality for normal lattice expansions.  We prove canonicity and completeness in the relational semantics of the minimal distribution-free normal modal logic, assuming just the K-axiom, as well as of its axiomatic extensions obtained by adding any of the  D, T, B, S4 or S5 axioms.  Adding distribution can be easily accommodated and, as a side result, we also obtain a new semantic treatment of Intuitionistic Modal Logic.  
\\
{\bf Keywords}: Sub-classical modal logic; Intuitionistic modal logic; Distribution-free modal logic; Completeness via canonicity 
\end{abstract}

\section{Introduction}
\label{intro}
The first, other than \texttt{IML} (Intuitionistic Modal Logic), sub-classical modal logic system studied was Dunn's Positive Modal Logic (\texttt{PML}) \cite{pml}, weakening the lattice-base axioms of the logic to those of a mere distributive lattice. Both negation and implication are absent from the language of \texttt{PML}. Dunn's article has attracted some interest. Gehrke, Nagahashi and Venema \cite{Gehrke-Venema} studied the Sahlqvist theory of a system \texttt{DML} (Distributive Modal Logic), which is the fusion of its two monomodal fragments and is weaker than Dunn's \texttt{PML}. Celani and Jansana \cite{celani-jansana} proposed a new semantics for \texttt{PML}, to model the consequence relation of the logic. Finally, and more relevant to our purposes, Bezhanishvili, Dmitrieva, de Groot and Morachini \cite{choice-free-dmitrieva-bezanishvili}  focused on a distribution-free version of \texttt{PML}. A single relation $R$ in frames generates both $\Box,\Diamond$, the interpretation for modal formulae is standard, except that in pursuing a way to define frames that validate interaction axioms they find it necessary to abandon distribution of diamonds over joins. The authors do not work with canonical extensions, they model their logic(s) of interest in (topological) semilattices and they develop a related correspondence theory. A duality for distribution-free modal lattices is presented in \cite{choice-free-dmitrieva-bezanishvili}, building on the Hofmann, Mislove and Stralka \cite{hms-duality} duality for meet semilattices, and the authors use it to build a new semantic framework for the logic of their focus.

The logics we consider may, or may not be distributive and, axiomatically, they are the sub-classical analogues of the classical basic systems  of normal modal logic. Their equivalent algebraic semantics is given by the variety of implicative modal bounded lattices $\mathbf{L}=(L,\leq,\wedge,\vee,0,1,\ra,\Box,\Diamond)$. The implication fragment of the logics  may be as weak as the same fragment of the non-associative full Lambek calculus and as strong as the Intuitionistic (or Boolean) propositional calculus. We adopt a uniform relational semantics approach, based on a generalization of the J\'{o}nsson-Tarski framework \cite{jt1} for the relational representation of normal lattice operators given in \cite{duality2}.

The relational semantics of distribution-free systems is typically given in two-sorted frames \cite{Suzuki-polarity-frames,mai-gen,mai-grishin,choiceFreeStLog,vb}, with a distinguished relation generating a Galois connection on the powersets of the sorts and modeling sentences as sets that are stable under the closure operator generated by the Galois connection. It has been observed that a two-sorted framework makes it hard to transfer and use results and techniques established for the classical setting. Indeed, the possibility of direct transfer and use of results is missing and much has to be established afresh. However, recent developments seem to gradually fill-in the gap with the classical (distributive, or Boolean) setting. Other than the generalization of the concept of canonical (perfect) extensions to bounded lattices \cite{mai-harding} and the topological dualities of \cite{kata2z,duality2,choiceFreeHA,choiceFreeStLog}, a Sahlqvist theory for distribution-free logics in the RS-framework has been developed by Conradie and Palmigiano \cite{conradie-palmigiano}, a Goldblatt-Thomason theorem for non-distributive logics in a two-sorted semantic framework has been published by Goldblatt \cite{goldblatt-morphisms2019}, a van Benthem characterization result for logics without distribution has been detailed in \cite{vb} by the author and several distribution-free systems have been modeled in the two-sorted approach, amongst which we may mention \cite{mai-grishin,Almeida09,COUMANS201450,redm,choiceFreeStLog}.

This article fills in a gap in the literature by presenting a systematic study of the relational semantics of the basic normal modal logic systems that may lack distribution of conjunction over disjunction and conversely.  Relying on the extension of canonical extensions to the case of bounded lattices \cite{mai-harding}, it specifies what a canonical frame for the distribution-free systems is, it identifies frame classes corresponding to the basic normal modal logic systems and it proves canonicity and completeness in the relational semantics. 

Section~\ref{algebras and logics} defines normal lattice expansions and, in particular, the variety of implicative modal lattices. It specifies the basic proof system for any normal lattice expansion, built on top of a proof system for Positive Lattice Logic (\texttt{PLL}) and including only axioms and rules from which the distribution and normality properties of the logical operators are obtained. The section is concluded with a duality between the lattice of axiomatic extensions of the basic logic and the lattice of subvarieties of the variety determined by the equivalent algebraic semantics of the basic logic system. 

Section~\ref{frames introduction section} introduces sorted residuated frames with sorted relations and it defines their complete lattices of stable sets. Sorted image operators are defined as in the classical J\'{o}nsson-Tarski framework \cite{jt1} (which precedes by a decade and logically underlies Kripke semantics for modal logic, or Routley-Meyer semantics for relevance logic). Stable-set sorted operators are defined from the image operators and it is proven that, under a mild assumption of smoothness of the frame relations, they are completely additive. Normal lattice operators of distribution types depending on the sort of the generating relation are subsequently derived and the full complex algebra of a frame is defined. Relational semantics is presented in terms of both a relation of satisfaction $\forces$ and a relation of co-satisfaction (refutation) $\dforces$. The sorted relational framework is a smooth generalization of classical relational semantics (see Remark~\ref{jt rem1} and Remark~\ref{jt rem2}). The section is concluded with instantiating definitions to the case of implication and the modal operators and deriving soundness of the logic obtained as the fusion of its fragments with the single operators $\ra, \Box$ and $\Diamond$ and this completes the presentation of the basic background needed.

Section~\ref{minimal normal section} considers the minimal distribution-free normal modal logic system, including just the K-axiom $\Box(p\ra q)\proves\Box p\ra\Box q$, on top of the standard distribution and normality axioms for each of $\Box$ and $\Diamond$, but without the interdefinability of box and diamond axiom. To define frames that validate the K-axiom, it is first pointed out that the lattice $\gpsi$ of stable sets and the MacNeille completion of a sub-poset $Q\subseteq\gpsi$ of principal (closed and open) elements are identical. We refer to the sub-poset $Q$ as the intermediate structure of the frame. The significant issue arising is that in frames that can be axiomatically extended to validate the K-axiom the upper MacNeille extension $\lbb^u$ of the definition of the box operator $\lbb$ in the intermediate structure must lie below its lower MacNeille extension $\lbb^\ell$. A restricted (second-order) join-distributivity of $\lbb$ over joins, $\lbb\bigvee_{x\in A}\Gamma x\subseteq\bigvee_{x\in A}\lbb\Gamma x$ is postulated and the class of refined frames defined. The section is concluded with a soundness proof for the class of refined frames whose intermediate structure validates the K-axiom, a requirement that can be easily phrased as a first-order frame axiom.

Section~\ref{extensions section} specifies frame classes corresponding to the D, T, B, S4 and S5 axioms and proves soundness results. 

Section~\ref{canonical frame section} constructs the canonical frame, as an instance of a general representation result for normal lattice expansions. It is argued that the full complex algebra of the canonical frame is a canonical extension of the Lindenbaum-Tarski algebra of the logic, that the frame is refined and that the frame axioms corresponding to logic axioms hold. This establishes canonicity and completeness in the relational semantics.

In Section~\ref{distributive/Heyting section} we argue that the semantic framework and approach of the previous sections is scalable and it can be adapted to handle the cases where the underlying lattice logic is distributive, or intuitionistic. For the Heyting algebra case, we show that validity of the K-axiom holds without a need to assume a second-order axiomatization of the frame.

\section{Implicative Modal Lattices and Logics}
\label{algebras and logics}

\subsection{Implicative Modal Lattices}
Let $\{1,\partial\}$ be a 2-element set, $\mathbf{L}^1=\mathbf{L}$ and $\mathbf{L}^\partial=\mathbf{L}^\mathrm{op}$ (the opposite lattice). Extending established terminology \cite{jt1}, a function $f:\mathbf{L}_1\times\cdots\times\mathbf{L}_n\lra\mathbf{L}_{n+1}$ will be called {\em additive} and {\em normal}, or a {\em normal operator}, if it distributes over finite joins of the lattice $\mathbf{L}_i$, for each $i=1,\ldots n$, delivering a join in $\mathbf{L}_{n+1}$.

An $n$-ary operation $f$ on a bounded lattice $\mathbf L$ is {\em a normal lattice operator of distribution type  $\delta(f)=(j_1,\ldots,j_n;j_{n+1})\in\{1,\partial\}^{n+1}$}  if it is a normal additive function  $f:{\mathbf L}^{j_1}\times\cdots\times{\mathbf L}^{j_n}\lra{\mathbf L}^{j_{n+1}}$ (distributing over finite joins in each argument place), where  each $j_k$, for  $k=1,\ldots,n+1$,   is in the set $\{1,\partial\}$, hence ${\mathbf L}^{j_k}$ is either $\mathbf L$, or ${\mathbf L}^\partial$.

If $\tau$ is a tuple (sequence) of distribution types, a {\em normal lattice expansion (NLE) of (similarity) type $\tau$} is a lattice with a normal lattice operator of distribution type $\delta$ for each $\delta$ in $\tau$. By its definition, the class of normal lattice expansions of some type $\tau$ is equationally definable (a variety) $\mathbb{NLE}_\tau$.

\begin{defn} \label{implicative lattice defn}
An {\em implicative lattice}  is a bounded lattice with a binary (implication) operation,  $\mathbf{L}=(L,\leq,\wedge,\vee,0,1,\ra)$, of distribution type $\delta(\ra)=(1,\partial;\partial)$, i.e. the following axioms hold, in addition to the axioms for bounded lattices.
\begin{tabbing}
\hskip5mm\=(A1)\hskip5mm\= $(a\vee b)\ra c=(a\ra c)\wedge(b\ra c)$\\
\>(A2)\> $a\ra(b\wedge c)=(a\ra b)\wedge(a\ra c)$\\
\>(N)\> $(0\ra a) = 1 = (a\ra 1)$
\end{tabbing}
The lattice is  {\em distributive} if axiom (A3) is assumed.
\begin{tabbing}
\hskip5mm\=(A3)\hskip5mm\= $a\wedge(b\vee c)=(a\wedge b)\vee(a\wedge c)$
\end{tabbing}
It is a Heyting algebra if axioms (H1)-(H2) are added to the axioms for bounded lattices
\begin{tabbing}
\hskip5mm\=(H1)\hskip5mm\= $a\wedge(a\ra b)\leq b$\\
\>(H2)\> $b\leq a\ra(a\wedge b)$
\end{tabbing}
An {\em implicative modal lattice}  $\mathbf{L}=(L,\leq,\wedge,\vee,0,1,\ra,\Box,\Diamond)$ is an implicative lattice with two unary (modal) operators $\Box,\Diamond$ satisfying the axioms
\begin{tabbing}
\hskip5mm\=(M$\Box$)\hskip5mm\= $\Box(a\wedge b)=\Box a\wedge\Box b$\hskip1cm\=(M$\Diamond$)\hskip5mm\=$\Diamond(a\vee b)=\Diamond a\vee\Diamond b$\\
\>(N$\Box$) \> $\Box 1=1$\>(N$\Diamond$) \> $\Diamond 0=0$
\end{tabbing}
\end{defn}

The weak modal logics we will consider arise by adding to the axiomatization any of the K, D, T, B, or S4, S5 axioms, as in the standard systems of normal modal logic, except for dropping the assumption that the underlying propositional logic is classical and replacing it by the weaker assumption that it is the logic of an implicative modal lattice.
\begin{tabbing}
\hskip5mm\=(K)\hskip5mm\= $\Box(a\ra b)\leq\Box a\ra\Box b$\hskip1cm\=
   (B)\hskip8mm\=  $a\leq{\Box\Diamond} a$ and ${\Diamond\Box}a\leq a$\\
\>(D)\> $\Box a\leq \Diamond a$ \>(S4$_\Box$)\> $\Box a\leq{\Box\Box} a$\\
\>(T$_\Box$)\> $\Box a\leq a$ \>(S4$_\Diamond$)\> ${\Diamond\Diamond}a\leq\Diamond a$\\
\>(T$_\Diamond$)\> $a\leq \Diamond a$ \>(S5)\> $\Diamond a\leq{\Box\Diamond}a$ and ${\Diamond\Box} a\leq{\Box} a$
\end{tabbing}

To fix an understanding of what constitutes a canonical frame and model for our logics, we review some basic definitions on completions and canonical extensions of lattices.

A {\em completion} $(\alpha,\mathbf{C})$ of an NLE $\mathbf{L}$ is a complete NLE $\mathbf{C}$ of the same similarity type with an embedding $\alpha:\mathbf{L}\lra\mathbf{C}$. In \cite{jt1} a notion of canonical (perfect) extension of Boolean algebras was identified, extended by Gehrke and J\'{o}nsson \cite{mai-jons} to the case of distributive lattices and by Gehrke and Harding \cite{mai-harding} to the case of mere bounded lattices. For the purposes of this article, the completions we work with are canonical extensions, traditionally associated to a canonical frame construction in completeness proofs. 

A canonical extension \cite{mai-harding} of a lattice $\mathbf{L}$ is a completion $(\alpha,\mathbf{C})$ of the lattice such that the following density and compactness requirements are satisfied
\begin{quote}
$\bullet$ (density) $\alpha[{\mathbf L}]$ is {\em dense} in $\mathbf{C}$, where the latter means that every element of $\mathbf{C}$ can be expressed both as a meet of joins and as a join of meets of elements in $\alpha[{\mathbf L}]$\\[0.5mm]
$\bullet$ (compactness) for any set $A$ of closed elements and any set  $B$ of open elements of $C$, $\bigwedge A\leq\bigvee B$ iff there exist finite subcollections $A_1\subseteq A, B_1\subseteq B$ such that $\bigwedge A_1\leq\bigvee B_1$
\end{quote}
where the {\em closed elements} of $\mathbf{C}$ are defined in \cite{mai-harding} as the elements in the meet-closure of the embedding map $\alpha$ and the {\em open elements} of $\mathbf{C}$ are defined dually as the join-closure of the image of $\alpha$. Canonical extensions are unique up to an isomorphism that commutes with the embeddings \cite[Proposition~2.7]{mai-harding}.

For a unary lattice operator $f:\mathbf{L}\lra\mathbf{L}$,  its $\sigma$ and $\pi$-extension in a canonical extension $\mathbf{C}$ of the lattice $\mathbf{L}$ is defined in \cite[Definition~4.1, Lemma~4.3]{mai-harding} by equation \eqref{sigma/pi-extn}, where $\type{K}$ is the set of closed elements of $\mathbf{C}$ and $\type{O}$ is its set of open elements
\begin{equation}\label{sigma/pi-extn}
\begin{array}{lcl}
f^\sigma(k)=\bigwedge\{f(a)\midsp k\leq a\in L\} &\hskip1cm& f_\sigma(u)=\bigvee\{f^\sigma(k)\midsp\type{K}\ni k\leq u\}\\
f^\pi(o)=\bigvee \{f(a)\midsp L\ni a\leq o\} && f_\pi(u)=\bigwedge\{f^\pi(o)\midsp u\leq o\in\type{O}\}
\end{array}
\end{equation}
where in these definitions $\mathbf{L}$ is identified with its isomorphic image in $\mathbf{C}$ and $a, f(a)\in {\cal L}$ are identified with their representation images.

\begin{defn}\label{canonical extn of NLEs}
A {\em canonical extension of a normal lattice expansion} $\mathbf{L}=(L,\leq,\wedge,\vee,0,1,(f_j)_{j\in J})$ is a canonical extension $(\alpha,\mathbf{L}^\sigma)$ of the underlying bounded lattice, with normal lattice operators $F_j$ extending $f_j$, for each $j\in J$, and such that $F_j=f^\sigma_j$, whenever $\delta_j(n+1)=1$ (the operator $f_j$ returns joins) and $F_j=f^\pi_j$, whenever $\delta_j(n+1)=\partial$ (the operator $f_j$ returns meets).

A {\em variety of normal lattice expansions is canonical} if it is closed under canonical extensions of its members. A logic is canonical if the variety of its equivalent algebraic semantics is canonical. Canonicity implies completeness in relational semantics, if the full complex algebra of the canonical frame for the logic is a canonical extension of the Lindenbaum-Tarski algebra of the logic.
\end{defn}

\subsection{Languages, Logics and Algebraic Semantics}
\label{logis and algebraic semantics}
Let $J$ be a countable set and $\delta:J\lra\{1,\partial\}^{n(j)+1}$ a map. The propositional language $\mathcal{L}=\mathcal{L}(J,\delta)$ of normal lattice expansions of similarity type $\tau=(\delta_j)_{j\in J}$ is defined by the grammar
\begin{eqnarray}
\mathcal{L}\ni\varphi &=& p_i(i\in \mathbb{N})\midsp\top\midsp\bot\midsp \varphi\wedge\varphi\midsp\varphi\vee\varphi\midsp (f_j(\varphi_1,\ldots,\varphi_{n(j)}))_{j\in J}.\label{language}
\end{eqnarray}
 Where $\delta_j=(j_1,\ldots,j_{n(j)};j_{n(j)+1})$ we let $\delta_j(k)=j_k$ for $1\leq k\leq j_{n(j)+1}$. 
 The  language $\mathcal{L}^\ra_{\Box\Diamond}$ is an instance of the language $\mathcal{L}_\tau(J,\delta)$ of NLEs of an arbitrary type  $\tau$, where in the current context we may let the index set be $J=\{\Diamond,\Box,\ra\}$ and $f_\Diamond=\dd,f_\Box=\bb, f_\ra={\rfspoon}$  (and with $\delta_\Diamond=(1;1), \delta_\Box=(\partial,\partial)$ and  $\delta_\ra=(1,\partial;\partial)$), displayed below 
 \begin{eqnarray}
\mathcal{L}^\ra_{\Box\Diamond}\ni\varphi &=& p_i(i\in \mathbb{N})\midsp\ufootl\midsp\dfootl\midsp \varphi\wedge\varphi\midsp\varphi\vee\varphi\midsp \bb\varphi\midsp\dd\varphi\midsp\varphi\rfspoon\varphi.\label{modal language}
\end{eqnarray}

For a fixed $\tau$ and an algebra $\mathbf{A}$ (a normal lattice expansion of type $\tau$) a model on $\mathbf{A}$ is defined as usual as a pair $\mathfrak{M}=(\mathbf{A},v)$, given an interpretation function $v:P\lra A$, where $A$ is the underlying set of $\mathbf{A}$, and letting also $v(\bot)=0, v(\top)=1$. The interpretation $\val{\;}_\mathfrak{M}$ of all sentences is obtained as usual, as the unique homomorphic extension of $v$ from the absolutely free term-algebra on $\mathcal{L}$ to $\mathbf{A}$. Treating $\mathcal{L}$ also as a term language for $\mathbf{A}$, a formal equation is a pair of sentences (terms, designating lattice elements), written $\varphi\approx\psi$. A formal equation is true in a model $\mathfrak{M}=(\mathbf{A},v)$, written $\mathfrak{M}\models\varphi\approx\psi$ iff $\val{\varphi}_\mathfrak{M}=\val{\psi}_\mathfrak{M}$. An algebra $\mathbf{A}$ validates a formal equation, $\mathbf{A}\models\varphi\approx\psi$, iff for any model $\mathfrak{M}=(\mathbf{A},v)$ on $\mathbf{A}$ we have $\mathfrak{M}\models\varphi\approx\psi$. Similarly for a class $\mathbb{C}$ of algebras. A formal inequation is syntactically defined as usual, by $(\varphi\preceq\psi) = (\varphi\approx\varphi\wedge\psi)$. 

For our purposes, we define a logic on the language $\mathcal{L}_\tau$ as a set of pairs $(\varphi,\psi)$, which we write as $\varphi\proves\psi$ and refer to as `sequents'. A proof system for the logic is a set of initial sequents (axioms) and rules {\small $\infrule{S_1\cdots S_n}{S}$}, where $S_i,S$ are sequents. Proofs (in a proof system) are defined as usual. The logic generated by a proof system is the set of the sequents it proves. A sequent is valid in a model, written $\mathfrak{M}\forces\varphi\proves\psi$ (or $\varphi\forces_\mathfrak{M}\psi$, omitting the subscript if clear from context) iff $\val{\varphi}_\mathfrak{M}\leq\val{\psi}_\mathfrak{M}$, which is exactly the condition for validity of inequations $\mathfrak{M}\models\varphi\preceq\psi$. Analogously for validity in an algebra or in a class of algebras. A proof system is sound in a class of algebras iff every provable sequent (theorem) $\varphi\proves\psi$ is valid in this class iff the initial sequents are valid and every rule is valid in this class.

The minimal logic $\mathbf{\Lambda}_\tau$ (for any similarity type $\tau$), whose proof system we define in Table~\ref{minimal proof system}, is defined as an extension of a proof system for Positive Lattice Logic $\texttt{PLL}$, the logic of bounded lattices. To simplify, we use vectorial notation $\vec{\varphi}$ for a tuple of sentences, we let $\vec{\varphi}[\;]_k$ be the vector with an empty place at the $k$-th argument position and $\vec{\varphi}[\psi]_k$ either to display the sentence at the $k$-position, or to designate the result of filling the empty place in $\vec{\varphi}[\;]_k$, or the result of replacing the entry $\varphi_k$ with $\psi$ at the $k$-th argument place. 

\begin{table}[t]
\caption{Proof System for the Minimal Logic $\mathbf{\Lambda}_\tau$}
\label{minimal proof system}
\mbox{}\\[5mm]
\begin{tabular}{llllll}
\multicolumn{3}{l}{\underline{{\bf PLL} axioms and rules}}\\[3mm]
 $\varphi\proves\varphi$ & $\infrule{\varphi\proves\vartheta}{\varphi\wedge\psi\proves\vartheta}$  & $\infrule{\psi\proves\vartheta}{\varphi\wedge\psi\proves\vartheta}$  &
 $\infrule{\varphi\proves\psi\hskip5mm\varphi\proves\vartheta}{\varphi\proves\psi\wedge\vartheta}$  &  $\infrule{\varphi\proves\psi\hskip5mm\psi\proves\vartheta}{\varphi\proves\vartheta}$(Cut)  \\[3mm]
 $\bot\proves\varphi$ & $\varphi\proves\top$ &
  $\infrule{\vartheta\proves\varphi}{\vartheta\proves\varphi\vee\psi}$ & $\infrule{\vartheta\proves\psi}{\vartheta\proves\varphi\vee\psi}$ & $\infrule{\varphi\proves\vartheta\hskip5mm\psi\proves\vartheta}{\varphi\vee\psi\proves\vartheta}$\\[3mm]
 \multicolumn{3}{l}{Substitution rule} & $\infrule{\varphi(p)\proves\psi(p)}{\varphi[\vartheta/p]\proves\psi[\vartheta/p]}$
  \\[5mm]
\multicolumn{3}{l}{\underline{Additional axioms/rules for $\mathbf{\Lambda}_\tau$}}\\[3mm]
  \multicolumn{5}{l}{
 For each $j\in J$ and where $\delta_j=(j_1,\ldots,j_k,\ldots, j_{n(j)};j_{n(j)+1})$ is the distribution type of $f_j$}\\[2mm]
 \multicolumn{3}{l}{$\bullet$ Monotonicity/Antitonicity rules}\\[2mm]
   \multicolumn{3}{l}{
  $\infrule{\psi\proves\vartheta}{f_j(\vec{\varphi}[\psi]_k)\proves f_j(\vec{\varphi}[\vartheta]_k)}$ if $j_k=j_{n(j)+1}$
  }
  &
  \multicolumn{3}{l}{
  $\infrule{\psi\proves\vartheta}{f_j(\vec{\varphi}[\vartheta]_k)\proves f_j(\vec{\varphi}[\psi]_k)}$ if $j_k\neq j_{n(j)+1}$
  }\\[4mm]
\multicolumn{3}{l}{$\bullet$  Distribution axioms}\\[2mm]
 \multicolumn{4}{l}{
 $f_j(\vec{\varphi}[\psi\vee\chi]_k)\proves f_j(\vec{\varphi}[\psi]_k)\vee f_j(\vec{\varphi}[\chi]_k)$\hskip1cm if $j_k=1=j_{n(j)+1}$
 }\\[3mm]
  \multicolumn{4}{l}{
 $f_j(\vec{\varphi}[\psi]_k)\wedge f_j(\vec{\varphi}[\chi]_k)\proves f_j(\vec{\varphi}[\psi\wedge\chi]_k)$ \hskip1cm if $j_k=\partial=j_{n(j)+1}$
 }\\[4mm]
 \multicolumn{3}{l}{$\bullet$  Co-Distribution axioms}\\[2mm]
 \multicolumn{4}{l}{
$f_j(\vec{\varphi}[\psi\vee\chi]_k)\proves f_j(\vec{\varphi}[\psi]_k)\wedge f_j(\vec{\varphi}[\chi]_k)$\hskip1cm   if $j_k=1\neq\partial=j_{n(j)+1}$}\\[3mm]
 \multicolumn{4}{l}{
 $f_j(\vec{\varphi}[\psi]_k)\vee f_j(\vec{\varphi}[\chi]_k)\proves  f_j(\vec{\varphi}[\psi\wedge\chi]_k)$
  \hskip1cm if $j_k=\partial\neq 1=j_{n(j)+1}$}
  \\[5mm]
\multicolumn{3}{l}{$\bullet$  Normality axioms}\\[2mm] 
 \multicolumn{4}{l}{
$f_j(\vec{\varphi}[\bot]_k)\proves\bot$ \hskip1cm if $j_k=1=j_{n(j)+1}$}\\[3mm]
 \multicolumn{4}{l}{
$\top\proves f_j(\vec{\varphi}[\top]_k)$ \hskip1cm if $j_k=\partial=j_{n(j)+1}$}\\[3mm]
 \multicolumn{4}{l}{
 $\top\proves f_j(\vec{\varphi}[\bot]_k)$ \hskip1cm if  $j_k=1\neq\partial=j_{n(j)+1}$}\\[3mm]
 \multicolumn{4}{l}{
 $f_j(\vec{\varphi}[\top]_k)\proves\bot$      \hskip1cm   if $j_k=\partial\neq 1=j_{n(j)+1}$}\\[3mm]
\end{tabular}

\vspace*{2mm}
\hrulefill
\end{table}

Specializing rules for the case of the logical operators $\Box,\Diamond$ and $\ra$ can be safely left to the reader.

Writing $\varphi\equiv\psi$ to mean that both $\varphi\proves\psi$ and $\psi\proves\varphi$ are provable sequents and $[\varphi]$ for the equivalence class of $\varphi$, the axioms and rules of the minimal logic $\mathbf{\Lambda}_\tau$ ensure that $\equiv$ is a congruence and that the Lindenbaum-Tarski algebra of the logic is a normal lattice expansion of type $\tau=(\delta_j)_{j\in J}$. This implies a completeness theorem, stated below.
\begin{thm}
\label{minimal algebraic completeness}
For any similarity type $\tau$ the minimal logic $\mathbf{\Lambda}_\tau$ is sound and complete in the variety $\mathbb{NLE}_\tau$ of normal lattice expansions of similarity type $\tau$.\telos
\end{thm}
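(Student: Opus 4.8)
The plan is the classical Lindenbaum--Tarski argument, in two halves: soundness of the proof system over $\mathbb{NLE}_\tau$, and completeness via the term algebra quotiented by provable interderivability. For \textbf{soundness} it suffices, by definition of validity of a sequent in a model, to check that each initial sequent of Table~\ref{minimal proof system} denotes an inequality $\val{\varphi}_{\mathfrak M}\leq\val{\psi}_{\mathfrak M}$ in every model $\mathfrak M=(\mathbf A,v)$ with $\mathbf A\in\mathbb{NLE}_\tau$, and that each rule preserves this. The \texttt{PLL} axioms and rules hold because $\mathbf A$ is a bounded lattice and $\val{\;}$ is a lattice homomorphism; the monotonicity/antitonicity rules hold because a normal operator $f_j$ of type $\delta_j=(j_1,\ldots,j_{n(j)};j_{n(j)+1})$ is order preserving in the $k$-th place when $j_k=j_{n(j)+1}$ and order reversing when $j_k\neq j_{n(j)+1}$; and the four distribution/co-distribution and four normality schemes are nothing but the sequent rendering, case by case on whether $j_k$ and $j_{n(j)+1}$ agree, of the statement that $f_j$ distributes over finite joins of $\mathbf L^{j_k}$ delivering a join of $\mathbf L^{j_{n(j)+1}}$. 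The substitution rule is sound since $\val{\;}$ factors through the free term algebra.

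For \textbf{completeness}, set $\varphi\equiv\psi$ iff both $\varphi\proves\psi$ and $\psi\proves\varphi$ are provable in $\mathbf{\Lambda}_\tau$. Reflexivity and symmetry are immediate and transitivity is Cut; that $\equiv$ is a congruence with respect to $\wedge,\vee$ follows from the \texttt{PLL} rules, and with respect to each $f_j$ from the monotonicity/antitonicity rules applied one argument place at a time (combining the two provable directions). Hence the quotient $\mathcal L/{\equiv}$ of the term algebra carries well-defined operations $[\varphi]\wedge[\psi]=[\varphi\wedge\psi]$, $[\varphi]\vee[\psi]=[\varphi\vee\psi]$, $f_j([\varphi_1],\ldots,[\varphi_{n(j)}])=[f_j(\vec\varphi)]$, with $0=[\bot]$ and $1=[\top]$. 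The \texttt{PLL} fragment makes this a bounded lattice whose order satisfies $[\varphi]\leq[\psi]$ iff $\varphi\proves\psi$ is provable, since $[\varphi]\leq[\psi]$ iff $\varphi\wedge\psi\equiv\varphi$, the direction $\varphi\wedge\psi\proves\varphi$ is a \texttt{PLL} theorem, and $\varphi\proves\varphi\wedge\psi$ is interderivable with $\varphi\proves\psi$. Reading the distribution, co-distribution and normality axioms modulo this order shows that each $f_j$ on $\mathcal L/{\equiv}$ is additive and normal of type $\delta_j$, i.e.\ $\mathcal L/{\equiv}\in\mathbb{NLE}_\tau$.

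Now let $v$ be the canonical valuation $v(p_i)=[p_i]$ and $\mathfrak M_0=(\mathcal L/{\equiv},v)$. A straightforward induction on $\varphi$, using that the quotient operations act on representatives, yields $\val{\varphi}_{\mathfrak M_0}=[\varphi]$. If $\varphi\proves\psi$ is valid throughout $\mathbb{NLE}_\tau$ then in particular it is valid in $\mathfrak M_0$, so $[\varphi]=\val{\varphi}_{\mathfrak M_0}\leq\val{\psi}_{\mathfrak M_0}=[\psi]$, whence $\varphi\proves\psi$ is provable by the description of the order; combined with soundness this gives the stated equivalence between provability and validity over $\mathbb{NLE}_\tau$.

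The only step needing real care---everything else being the familiar routine---is the bookkeeping that ties the sequent calculus to the equational signature: for each $j\in J$ and each argument place $k$ one must track whether $j_k=j_{n(j)+1}$, invoke the matching monotonicity-or-antitonicity rule and the matching one of the four (co-)distribution and four normality schemes, and verify that these combine, across all places, to exactly the assertion that $f_j:\mathbf L^{j_1}\times\cdots\times\mathbf L^{j_{n(j)}}\to\mathbf L^{j_{n(j)+1}}$ is a normal additive map. Once this dictionary between the proof system and the defining equations of $\mathbb{NLE}_\tau$ is in place, both soundness and the membership $\mathcal L/{\equiv}\in\mathbb{NLE}_\tau$ are immediate.
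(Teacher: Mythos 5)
Your proposal is correct and is exactly the argument the paper intends: the paper compresses the whole thing into the remark preceding the theorem, namely that the axioms and rules make $\equiv$ a congruence and the Lindenbaum--Tarski quotient a normal lattice expansion of type $\tau$, and leaves the routine verification (which you carry out) to the reader. Your added bookkeeping matching each (co-)distribution, normality and (anti)tonicity scheme to the case analysis on $j_k$ versus $j_{n(j)+1}$ is precisely the dictionary the paper relies on implicitly.
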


If $\Gamma$ is a set of sequents, then $\mathbf{\Lambda}_\tau+\Gamma$ stands for the logic that results by adding members of $\Gamma$ as new initial sequents (notice that we have included an explicit substitution rule in the proof system). Denote by $\texttt{Ext}\mathbf{\Lambda}_\tau$ the family of such extensions of $\mathbf{\Lambda}_\tau$ and observe that \texttt{Ext}$\mathbf{\Lambda}_\tau$ is a complete lattice under intersection.

Every extension $\mathbf{\Lambda}=\mathbf{\Lambda}_\tau+\Gamma$ determines a subvariety $\mathbb{V}\in\texttt{Sub}\mathbb{NLE}_\tau$ whose equational theory is $\mathcal{E}(\mathbb{V})=\mathcal{E}(\mathbb{NLE}_\tau)\cup\{\varphi\preceq\psi\midsp (\varphi,\psi)\in\Gamma\}$.

\begin{defn}\rm
The maps $\mathrm{V}:\texttt{Ext}\mathbf{\Lambda}_\tau\leftrightarrows\texttt{Sub}\mathbb{NLE}_\tau:\Lambda$ are defined on a logic $\mathbf{\Lambda}\in\texttt{Ext}\mathbf{\Lambda}_\tau$ and a variety $\mathbb{V}\in\texttt{Sub}\mathbb{NLE}_\tau$ by
\begin{align}
\mathrm{V}(\mathbf{\Lambda})&=\{\mathbf{L}\in\mathbb{NLE}_\tau\midsp\forall(\varphi,\psi)\in\mathbf{\Lambda}\;\varphi\forces_\mathbf{L}\psi\}\\
\Lambda(\mathbb{V})&=\{(\varphi,\psi)\midsp\varphi\forces_\mathbb{V}\psi\}
\end{align}
\end{defn}

The following Propositions have rather easy proofs, left to the interested reader. Analogous proofs can be found in \cite[Propositions~2.7--2.9]{choiceFreeStLog}.
\begin{prop}[Definability]\rm
\label{definability}
For any subvariety $\mathbb{V}\in\texttt{Sub}\mathbb{NLE}_\tau$ and any normal lattice expansion $\mathbf{L}\in\mathbb{NLE}_\tau$, $\mathbf{L}\in\mathbb{V}$ iff $\mathbf{L}\forces\Lambda(\mathbb{V})$. Equivalently, $\mathrm{V}\Lambda(\mathbb{V})=\mathbb{V}$.\telos
\end{prop}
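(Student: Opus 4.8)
The plan is to split the biconditional into its two directions and observe that $\mathrm{V}\Lambda(\mathbb{V})=\mathbb{V}$ is then just an unfolding of the definition of $\mathrm{V}$. The left-to-right implication is immediate: if $\mathbf{L}\in\mathbb{V}$, then every pair $(\varphi,\psi)\in\Lambda(\mathbb{V})$ satisfies $\varphi\forces_\mathbb{V}\psi$ by the definition of $\Lambda(\mathbb{V})$, and in particular $\varphi\forces_\mathbf{L}\psi$; hence $\mathbf{L}\forces\Lambda(\mathbb{V})$.

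For the converse I would use that $\mathbb{V}\in\texttt{Sub}\mathbb{NLE}_\tau$ is by assumption a variety, so by Birkhoff's theorem $\mathbb{V}$ is exactly the class of NLEs of type $\tau$ that satisfy its equational theory $\mathcal{E}(\mathbb{V})$. The key linking observation is the one recorded in the excerpt just after the definition of formal (in)equations: validity of the sequent $\chi\proves\vartheta$ in a model, $\val{\chi}\leq\val{\vartheta}$, is exactly validity of the inequation $\chi\preceq\vartheta$; and each equation $\varphi\approx\psi$ holds in an algebra precisely when the two inequations $\varphi\preceq\psi$ and $\psi\preceq\varphi$ do, since $\wedge$ is available. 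So I would associate to every equation $\varphi\approx\psi\in\mathcal{E}(\mathbb{V})$ the two sequents $(\varphi,\psi)$ and $(\psi,\varphi)$; these are valid throughout $\mathbb{V}$, hence lie in $\Lambda(\mathbb{V})$. If $\mathbf{L}\forces\Lambda(\mathbb{V})$, then $\mathbf{L}$ validates all of these sequents, hence satisfies every equation of $\mathcal{E}(\mathbb{V})$, hence $\mathbf{L}\in\mathbb{V}$. Finally $\mathrm{V}\Lambda(\mathbb{V})=\{\mathbf{L}\in\mathbb{NLE}_\tau\midsp\mathbf{L}\forces\Lambda(\mathbb{V})\}=\{\mathbf{L}\in\mathbb{NLE}_\tau\midsp\mathbf{L}\in\mathbb{V}\}=\mathbb{V}$, the last step using $\mathbb{V}\subseteq\mathbb{NLE}_\tau$.

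The only delicate point — and it is genuinely light — is the bridge from ``$\mathbb{V}$ is a variety'' to ``$\mathbb{V}$ is cut out by a set of sequents contained in $\Lambda(\mathbb{V})$'': this is the interdefinability of equations and inequations in the lattice signature together with the appeal to $\mathbb{V}=\mathrm{Mod}(\mathcal{E}(\mathbb{V}))$. Notably, \emph{no} induction on formulas or on derivations and \emph{no} completeness theorem for the proof system of $\mathbf{\Lambda}_\tau$ is required here, precisely because $\Lambda(\mathbb{V})$ is taken to be the full set of sequents valid in $\mathbb{V}$ rather than some syntactic axiomatization; that is what lets the converse direction go through directly. An entirely analogous argument appears in \cite[Propositions~2.7--2.9]{choiceFreeStLog}.
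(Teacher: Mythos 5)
Your proof is correct and is exactly the standard argument the paper has in mind: the paper omits the proof, deferring to the analogous \cite[Propositions~2.7--2.9]{choiceFreeStLog}, and your route — the easy direction by definition of $\Lambda(\mathbb{V})$, and the converse via the correspondence between sequent validity and inequational validity plus $\mathbb{V}=\mathrm{Mod}(\mathcal{E}(\mathbb{V}))$ — is the intended one. Your closing remark that no appeal to the proof system or to completeness is needed, because $\Lambda(\mathbb{V})$ is the full set of semantically valid sequents, correctly identifies why the proposition is "rather easy" as the paper claims.
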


\begin{prop}[Completeness]\rm
\label{completeness}
For every logic $\mathbf{\Lambda}\in\texttt{Ext}\mathbf{\Lambda}_\tau$ and every pair of sentences $(\varphi,\psi)$, $\varphi\proves\psi\in\mathbf{\Lambda}$ iff $\varphi\forces_{\mathrm{V}(\mathbf{\Lambda})}\psi$. Equivalently, $\Lambda\mathrm{V} (\mathbf{\Lambda})=\mathbf{\Lambda}$.\telos
\end{prop}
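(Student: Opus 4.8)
The plan is to run a standard Lindenbaum--Tarski argument, establishing the two directions of the stated equivalence separately; the identity $\Lambda\mathrm{V}(\mathbf{\Lambda})=\mathbf{\Lambda}$ is then a mere restatement, since by definition $\Lambda\mathrm{V}(\mathbf{\Lambda})=\{(\varphi,\psi)\midsp\varphi\forces_{\mathrm{V}(\mathbf{\Lambda})}\psi\}$, so no separate argument is needed for the ``Equivalently'' clause.

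\emph{Soundness} (if $\varphi\proves\psi\in\mathbf{\Lambda}$ then $\varphi\forces_{\mathrm{V}(\mathbf{\Lambda})}\psi$). Fix $\mathbf{L}\in\mathrm{V}(\mathbf{\Lambda})$ and a model $\mathfrak{M}=(\mathbf{L},v)$, and argue by induction on the length of a proof of $\varphi\proves\psi$ in $\mathbf{\Lambda}=\mathbf{\Lambda}_\tau+\Gamma$ that $\val{\varphi}_\mathfrak{M}\leq\val{\psi}_\mathfrak{M}$. The PLL axioms and rules, together with the monotonicity/antitonicity, (co-)distribution and normality axioms and rules of Table~\ref{minimal proof system}, are valid in every member of $\mathbb{NLE}_\tau$ --- this is exactly the soundness half of Theorem~\ref{minimal algebraic completeness} --- and the substitution rule is sound because validity in an algebra already quantifies over all valuations. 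The new initial sequents, i.e. the members of $\Gamma\subseteq\mathbf{\Lambda}$, are valid in $\mathbf{L}$ by the very definition of $\mathrm{V}(\mathbf{\Lambda})$. Since the set of sequents valid in $\mathfrak{M}$ is closed under all the rules and contains all the axioms of $\mathbf{\Lambda}$, it contains $\mathbf{\Lambda}$; as $\mathfrak{M}$ ranges over all models on all $\mathbf{L}\in\mathrm{V}(\mathbf{\Lambda})$, we get $\varphi\forces_{\mathrm{V}(\mathbf{\Lambda})}\psi$.

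\emph{Completeness} (if $\varphi\forces_{\mathrm{V}(\mathbf{\Lambda})}\psi$ then $\varphi\proves\psi\in\mathbf{\Lambda}$), by contraposition. Let $\equiv_\mathbf{\Lambda}$ be mutual $\mathbf{\Lambda}$-derivability on $\mathcal{L}$ and form the Lindenbaum--Tarski algebra $\mathbf{L}_\mathbf{\Lambda}=\mathcal{L}/{\equiv_\mathbf{\Lambda}}$. Since $\mathbf{\Lambda}\supseteq\mathbf{\Lambda}_\tau$, the monotonicity rules make $\equiv_\mathbf{\Lambda}$ a congruence and the lattice, (co-)distribution and normality axioms make $\mathbf{L}_\mathbf{\Lambda}$ a normal lattice expansion of type $\tau$, exactly as recorded for the minimal logic just after Theorem~\ref{minimal algebraic completeness}. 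Moreover $\mathbf{L}_\mathbf{\Lambda}\in\mathrm{V}(\mathbf{\Lambda})$: given $(\gamma,\delta)\in\mathbf{\Lambda}$ and a valuation $v:P\lra L_\mathbf{\Lambda}$, choose representatives $\vartheta_i$ with $v(p_i)=[\vartheta_i]$ for the finitely many variables occurring in $\gamma,\delta$; finitely many applications of the substitution rule give $\gamma[\vec{\vartheta}/\vec{p}]\proves\delta[\vec{\vartheta}/\vec{p}]\in\mathbf{\Lambda}$, and since $\val{\chi}_{(\mathbf{L}_\mathbf{\Lambda},v)}=[\chi[\vec{\vartheta}/\vec{p}]]$ for every $\chi$, this yields $\val{\gamma}\leq\val{\delta}$; hence every sequent of $\mathbf{\Lambda}$, in particular every member of $\Gamma$, is valid in $\mathbf{L}_\mathbf{\Lambda}$. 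Now take the canonical model $\mathfrak{M}_\mathbf{\Lambda}=(\mathbf{L}_\mathbf{\Lambda},v_\mathbf{\Lambda})$ with $v_\mathbf{\Lambda}(p)=[p]$, so that $\val{\chi}_{\mathfrak{M}_\mathbf{\Lambda}}=[\chi]$ for all $\chi$. In $\mathbf{L}_\mathbf{\Lambda}$ one has $[\varphi]\leq[\psi]$ iff $[\varphi\wedge\psi]=[\varphi]$ iff $\varphi\wedge\psi\equiv_\mathbf{\Lambda}\varphi$ iff $\varphi\proves\psi\in\mathbf{\Lambda}$ (for ``$\Leftarrow$'' use $\varphi\proves\varphi\wedge\psi$, derivable from $\varphi\proves\varphi$ and $\varphi\proves\psi$ by the $\wedge$-introduction rule; for ``$\Rightarrow$'' use $\varphi\wedge\psi\proves\varphi$ and Cut). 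Thus if $\varphi\proves\psi\notin\mathbf{\Lambda}$, then $\val{\varphi}_{\mathfrak{M}_\mathbf{\Lambda}}\not\leq\val{\psi}_{\mathfrak{M}_\mathbf{\Lambda}}$, i.e. $\mathfrak{M}_\mathbf{\Lambda}\not\forces\varphi\proves\psi$, and since $\mathbf{L}_\mathbf{\Lambda}\in\mathrm{V}(\mathbf{\Lambda})$ this witnesses $\varphi\not\forces_{\mathrm{V}(\mathbf{\Lambda})}\psi$.

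The one point that requires real care is the claim $\mathbf{L}_\mathbf{\Lambda}\in\mathrm{V}(\mathbf{\Lambda})$: one must check both that passing to the coarser congruence $\equiv_\mathbf{\Lambda}$ still yields an NLE of type $\tau$ (harmless, since all defining axioms and rules of $\mathbf{\Lambda}_\tau$ remain provable in the extension) and that $\mathbf{L}_\mathbf{\Lambda}$ validates every $\Gamma$-sequent \emph{under all valuations}, which is precisely where the explicit substitution rule of Table~\ref{minimal proof system} is used. Everything else is routine bookkeeping of the sort already carried out in \cite[Propositions~2.7--2.9]{choiceFreeStLog}.
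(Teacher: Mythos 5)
Your proof is correct and is exactly the standard Lindenbaum--Tarski argument the paper has in mind: it leaves this proposition ``to the interested reader'' with a pointer to the analogous \cite[Propositions~2.7--2.9]{choiceFreeStLog}, and your completeness half --- building $\mathbf{L}_\mathbf{\Lambda}$, using the explicit substitution rule to show $\mathbf{L}_\mathbf{\Lambda}\in\mathrm{V}(\mathbf{\Lambda})$ under arbitrary valuations, and refuting unprovable sequents via $v(p)=[p]$ --- is precisely the essential content. One small simplification: since a logic in $\texttt{Ext}\mathbf{\Lambda}_\tau$ is by definition the deductively closed set of sequents its proof system proves, the soundness direction is immediate from the definition of $\mathrm{V}(\mathbf{\Lambda})=\{\mathbf{L}\midsp\forall(\varphi,\psi)\in\mathbf{\Lambda}\;\varphi\forces_\mathbf{L}\psi\}$, so your induction on proof length, while harmless, is not needed for this particular statement.
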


\begin{prop}[Duality]\rm
\label{duality}
The maps $\mathrm{V},\Lambda$ constitute a complete lattice dual isomorphism $\mathrm{V}:\texttt{Ext}\mathbf{\Lambda}_\tau\iso\texttt{Sub}\mathbb{NLE}_\tau^\mathrm{op}:\Lambda$.\telos
\end{prop}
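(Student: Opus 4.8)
The plan is to obtain the duality as a purely formal consequence of Propositions~\ref{definability} and \ref{completeness}, once the two maps are seen to be well defined and antitone.

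First I would check well-definedness. For $\mathbf{\Lambda}\in\texttt{Ext}\mathbf{\Lambda}_\tau$, the class $\mathrm{V}(\mathbf{\Lambda})$ is cut out from $\mathbb{NLE}_\tau$ by the inequations $\varphi\preceq\psi$ for $(\varphi,\psi)\in\mathbf{\Lambda}$, hence is equationally definable and thus a subvariety; so $\mathrm{V}$ lands in $\texttt{Sub}\mathbb{NLE}_\tau$. For $\mathbb{V}\in\texttt{Sub}\mathbb{NLE}_\tau$, the set $\Lambda(\mathbb{V})$ contains every theorem of $\mathbf{\Lambda}_\tau$ (by Theorem~\ref{minimal algebraic completeness} together with $\mathbb{V}\subseteq\mathbb{NLE}_\tau$) and is closed under the PLL rules, the substitution rule and the monotonicity/distribution/co-distribution/normality rules, since each of these is validity-preserving on any normal lattice expansion; hence $\Lambda(\mathbb{V})=\mathbf{\Lambda}_\tau+\Gamma$ with $\Gamma=\Lambda(\mathbb{V})$, so $\Lambda$ lands in $\texttt{Ext}\mathbf{\Lambda}_\tau$. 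Antitonicity is then immediate from the definitions: enlarging $\mathbf{\Lambda}$ imposes more constraints and so shrinks $\mathrm{V}(\mathbf{\Lambda})$, while enlarging $\mathbb{V}$ adds algebras on which a sequent must hold and so shrinks $\Lambda(\mathbb{V})$. Reading the codomain of $\mathrm{V}$ as $\texttt{Sub}\mathbb{NLE}_\tau^{\mathrm{op}}$, both $\mathrm{V}$ and $\Lambda$ are therefore monotone maps of posets.

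Next I would invoke Propositions~\ref{definability} and \ref{completeness}, which say precisely that $\mathrm{V}\Lambda=\mathrm{id}_{\texttt{Sub}\mathbb{NLE}_\tau}$ and $\Lambda\mathrm{V}=\mathrm{id}_{\texttt{Ext}\mathbf{\Lambda}_\tau}$. A pair of monotone maps between posets that compose to the identities on both sides is an order isomorphism, and any order isomorphism between complete lattices automatically preserves arbitrary joins and meets. Since $\texttt{Ext}\mathbf{\Lambda}_\tau$ is a complete lattice under intersection (as already noted) and $\texttt{Sub}\mathbb{NLE}_\tau$ is a complete lattice of subvarieties ordered by inclusion, it follows that $\mathrm{V}:\texttt{Ext}\mathbf{\Lambda}_\tau\to\texttt{Sub}\mathbb{NLE}_\tau^{\mathrm{op}}$ is a complete lattice isomorphism with inverse $\Lambda$, i.e. exactly the asserted complete lattice dual isomorphism $\mathrm{V}:\texttt{Ext}\mathbf{\Lambda}_\tau\iso\texttt{Sub}\mathbb{NLE}_\tau^{\mathrm{op}}:\Lambda$.

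There is no genuine obstacle at this stage: the real content lies in Propositions~\ref{definability} and \ref{completeness} (the Galois-type identities $\mathrm{V}\Lambda(\mathbb{V})=\mathbb{V}$ and $\Lambda\mathrm{V}(\mathbf{\Lambda})=\mathbf{\Lambda}$), which in turn rest on the algebraic completeness of the minimal logic (Theorem~\ref{minimal algebraic completeness}) and on Birkhoff's theorem identifying subvarieties of $\mathbb{NLE}_\tau$ with equational theories extending $\mathcal{E}(\mathbb{NLE}_\tau)$. Granting those, the present proposition is the routine fact that mutually inverse antitone maps between complete lattices constitute a dual isomorphism; the one point deserving a line of care is the well-definedness of $\Lambda$, namely that $\Lambda(\mathbb{V})$ is closed under all the proof rules of Table~\ref{minimal proof system}, which is precisely soundness of those rules over arbitrary normal lattice expansions.
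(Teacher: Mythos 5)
Your proposal is correct and follows exactly the route the paper intends: the paper leaves the proof to the reader as a routine consequence of Propositions~\ref{definability} and \ref{completeness} (the identities $\mathrm{V}\Lambda=\mathrm{id}$ and $\Lambda\mathrm{V}=\mathrm{id}$), and your argument---well-definedness via Birkhoff/soundness, antitonicity, and the standard fact that mutually inverse monotone maps between complete lattices preserve arbitrary meets and joins---is precisely that routine argument, with the right emphasis on the only point needing care (that $\Lambda(\mathbb{V})$ is deductively closed).
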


This dual isomorphism allows us to switch from logics to varieties of normal lattice expansions and back without any loss of information.

\section{Frames and Relational Semantics}
\label{frames introduction section}
\subsection{Frames and Models}
By a (relational) frame we mean a structure $\mathfrak{F}=(s,Z,I,(R_j)_{j\in J},\sigma)$, where $s$ is a list of sorts, $Z=(Z_t)_{t\in s}$ is a nonempty sorted set (i.e. none of the sorts $Z_t$ is allowed to be empty), where we make no assumption of disjointness of sorts, $I\subseteq\prod_{t\in s}Z_t$ is a distinguished sorted relation, $\sigma$ is a sorting map on $J$ with  $\sigma(j)\in s^{n(j)+1}$ and $(R_j)_{j\in J}$ is a family of sorted relations such that if $\sigma(j)=(j_{n(j)+1};j_1,\ldots,j_{n(j)})$, then $R_j\subseteq Z_{j_{n(j)+1}}\times\prod_{k=1}^{n(j)}Z_{j_k}$. The {\em sort $\sigma(R_j)$ (or just $\sigma(j)$, or $\sigma_j$) of the relation} $R_j$ is the tuple $\sigma(j)=(j_{n(j)+1};j_1\cdots j_{n(j)})$. The similarity type of the structure $\mathfrak{F}=(s,Z,I,(R_j)_{j\in J},\sigma)$ is the tuple $\langle\sigma(j)\rangle_{j\in J}$ of the sorts of the relations $R_j$ in the structure.

We will work with structures where
$s=\{1,\partial\}$, $Z=(Z_1,Z_\partial)$ is a sorted set and $I\subseteq Z_1\times Z_\partial$ is a distinguished sorted relation. We often display the sort of a relation as a superscript, as in $\mathfrak{F}=(s,Z,I,(R^{\sigma(j)}_j)_{j\in J})$. For example, $R^{11}, T^{\partial 1\partial}$ designate sorted relations $R\subseteq Z_1\times Z_1$ and $T\subseteq Z_\partial\times Z_1\times Z_\partial$. In the intended application of the present article the frame relations considered are $R^{11}_\Diamond, R^{\partial\partial}_\Box$ and $R^{\partial 1\partial}_\ra$, but we shall use $T$ for the latter (or $T^{\partial 1\partial}$, displaying its sort), as this makes it easier to relate to results obtained in \cite{choiceFreeHA}.

The relation $I$ generates a residuated pair $\lambda:\powerset(Z_1)\leftrightarrows\powerset(Z_\partial):\rho$, defined as usual by
\[
\lambda(U)=\{y\in Z_\partial\midsp\exists x\in Z_1(xIy\wedge x\in U)\}\hskip2mm
\rho(V)=\{x\in Z_1\midsp\forall y\in Z_\partial(xIy\lra y\in V)\}.
\]
We may also use the notation $\largediamond U$ for $\lambda U$ and $\lbbox V$ for $\rho V$, as we have often done in previous published work. 
The complement of $I$ will be designated by $\upv$ and we refer to it as the {\em Galois relation of the frame}. It generates a Galois connection $(\;)\rperp:\powerset(Z_1)\leftrightarrows\powerset(Z_\partial)^{\rm op}:\rperp(\;)$ defined by
\[
U\rperp=\{y\in Z_\partial\midsp\forall u\in Z_1 (u\in U\lra u\upv y)\}\hskip2mm
\rperp V=\{x\in Z_1\midsp\forall y\in Z_\partial(y\in V\lra x\upv y)\}.
\]

Observe that the closure operators generated by the residuated pair and the Galois connection are identical, i.e. $\rho\lambda U=\rperp(U\rperp)$ and $\lambda\rho V=(\rperp V)\rperp$. This follows from the fact that $U\rperp=\largesquare(-U)$ and ${}\rperp V=\lbbox(-V)$. 

To simplify, we often use a priming notation for both Galois maps $(\;)\rperp$ and $\rperp(\;)$, i.e. we let $U'=U\rperp$, for $U\subseteq Z_1$, and $V'=\rperp V$, for $V\subseteq Z_\partial$. Hence $U''=\rperp(U\rperp)=\rho\lambda U$ and $V''=(\rperp V)\rperp=\lambda\rho V$. 

The complete lattice of all {\em Galois stable} sets $Z_1\supseteq U=U''$ will be designated by $\mathcal{G}(Z_1)$ and the complete lattice of all {\em Galois co-stable} sets $Z_\partial\supseteq V=V''$ will be similarly denoted by $\mathcal{G}(Z_\partial)$. We refer to Galois stable and co-stable sets as {\em Galois sets}. Note that each of $Z_1, Z_\partial$ is a Galois set, but the empty set need not be Galois. Hence the bottom element in each of $\gpsi,\gphi$ is the closure of the empty set, $\emptyset''$.

For an element $u$ in either $Z_1$ or $Z_\partial$ and a subset $W$, respectively of $Z_\partial$ or $Z_1$, we write $u|W$, under a well-sorting assumption, to stand for either $u\upv W$ (which stands for $u\upv w$, for all $w\in W$), or $W\upv u$ (which stands for $w\upv u$, for all $w\in W$), where well-sorting means that either $u\in Z_1, W\subseteq Z_\partial$, or $W\subseteq Z_1$ and $u\in Z_\partial$, respectively. Similarly for the notation $u|v$, where $u,v$ are elements of different sort.

A preorder relation is defined on each of $Z_1,Z_\partial$ by $u\preceq w$ iff $\{u\}'\subseteq\{w\}'$. We call a frame {\em separated} if $\preceq$ is in fact a partial order $\leq$. For an element $u$ (of either $Z_1$ or $Z_\partial$) we write $\Gamma u$ for the set of elements $\preceq$-above it. We hereafter assume that frames are separated.

Sets $\Gamma w$ and $\{w\}'$ will be referred to as {\em principal elements}.  $\Gamma w$ will be referred to as a {\em closed element} and $\{w\}'$ as an {\em open element}. $\texttt{K}\mathcal{G}(Z_1), \texttt{O}\mathcal{G}(Z_1)$ designate the families of closed and open elements of $\gpsi$, and analogously for $\gphi$. A closed element $\Gamma x\in\gpsi$ is clopen if there exists an open element $\{y\}'\in\gpsi$ (which is unique, when it exists, in separated frames) such that $\Gamma x=\{y\}'$. We set $\texttt{KO}\mathcal{G}(Z_1)$ for the family of clopen elements. Similarly for $\gphi$.

A point $w$ such that $\Gamma w$ is a clopen element will be also referred to as a {\em clopen point}.

The following basic facts will be often used without reference to  Lemma~\ref{basic facts}.
\begin{lemma}
\label{basic facts}
Let $\mathfrak{F}=(s,Z,I,(R_j)_{j\in J},\sigma)$ be a  frame, $u\in Z=Z_1\cup Z_\partial$ and $\upv$ the Galois relation of the frame. Let $v|G$ refer to either $G\upv v$, if $G\in\gpsi, v\in Z_\partial$, or $v\upv G$, if $v\in G_1$ and $G\in\gphi$.
\begin{enumerate}
\item $\upv$ is increasing in each argument place (and thereby its complement $I$ is decreasing in each argument place).
\item $(\Gamma u)'=\{u\}'$ and $\Gamma u=\{u\}^{\prime\prime}$ is a Galois set.
\item Galois sets are increasing, i.e. $u\in G$ implies $\Gamma u\subseteq G$.
\item For a Galois set $G$, $G=\bigcup_{u\in G}\Gamma u$.
\item For a Galois set $G$, $G=\bigvee_{u\in G}\Gamma u=\bigcap_{v|G}\{v\}'$.
\item For a Galois set $G$ and any set $W$, $W^{\prime\prime}\subseteq G$ iff $W\subseteq G$.
\end{enumerate}
\end{lemma}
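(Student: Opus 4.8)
\textbf{Proof plan for Lemma~\ref{basic facts}.}

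The plan is to prove the six items essentially in order, since each relies only on the definitions of the Galois connection $(\;)'$ and of the preorder $\preceq$, together with the standard abstract properties of a Galois connection. First, for item~1, I would unfold the definition of $\upv$ as the complement of $I\subseteq Z_1\times Z_\partial$: since $xIy$ is decreasing in each place is \emph{equivalent} to $x\upv y$ being increasing in each place, it suffices to recall that $I$ is part of the frame data and that this monotonicity is exactly what I want to establish --- but in fact the excerpt presents this as the defining convention, so item~1 is really a bookkeeping remark; I would state that $u\preceq u_1$ and $u_1\upv y$ give $\{u\}'\subseteq\{u_1\}'$ hence $u\upv y$, and symmetrically in the second coordinate, deriving increasingness of $\upv$ from the definition of $\preceq$.

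For item~2, the equation $(\Gamma u)'=\{u\}'$ is the crux: the inclusion $(\Gamma u)'\subseteq\{u\}'$ is immediate because $u\in\Gamma u$ (as $\preceq$ is reflexive), so any $y$ with $w\upv y$ for all $w\in\Gamma u$ in particular satisfies $u\upv y$. Conversely, if $u\upv y$, i.e. $y\in\{u\}'$, and $w\in\Gamma u$, i.e. $\{u\}'\subseteq\{w\}'$, then $y\in\{w\}'$, so $w\upv y$; hence $y\in(\Gamma u)'$. From $(\Gamma u)'=\{u\}'$ we get $(\Gamma u)''=\{u\}''$, and then I must show $\Gamma u=\{u\}''$. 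One inclusion: $u\preceq w$ means $\{u\}'\subseteq\{w\}'$, equivalently $\{w\}'\subseteq\{u\}'$ fails in general --- I need the right direction, so let me be careful: $w\in\{u\}''=\rperp(\{u\}')$ means for all $y\in\{u\}'$, $w\upv y$, i.e. $\{u\}'\subseteq\{w\}'$, i.e. $u\preceq w$, i.e. $w\in\Gamma u$. So $\Gamma u=\{u\}''$ directly, and it is Galois because $\{u\}''=(\{u\}')' $ applied once more is stable by the standard closure-operator identity $X'''=X'$. Items~3 and~4 then follow quickly: if $u\in G$ with $G=G''$, then $\{u\}'\supseteq G'$ (antitone), so $\{u\}''\subseteq G''=G$, i.e. $\Gamma u\subseteq G$ by item~2 --- that is item~3; and item~4 is the two inclusions $G\subseteq\bigcup_{u\in G}\Gamma u$ (since $u\in\Gamma u$) and $\bigcup_{u\in G}\Gamma u\subseteq G$ (by item~3).

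For item~5, I would note that in the complete lattice $\mathcal{G}(Z_1)$ the join is $\bigvee_i G_i=(\bigcup_i G_i)''$; applying this to the family $\{\Gamma u\}_{u\in G}$ gives $\bigvee_{u\in G}\Gamma u=(\bigcup_{u\in G}\Gamma u)''=G''=G$ using item~4. For the meet side, $\bigcap_{v\mid G}\{v\}'$: by definition $v\mid G$ means (with appropriate sorting) $w\upv v$ for all $w\in G$, i.e. $v\in\bigcap_{w\in G}\{w\}' = G'$; so the family $\{\{v\}' : v\mid G\}$ is exactly $\{\{v\}' : v\in G'\}$, whose intersection is $\bigcap_{v\in G'}\{v\}' = (G')' = G'' = G$ --- here I reuse the identity from item~4 applied in $\mathcal{G}(Z_\partial)$, namely $G'=\bigcup_{v\in G'}\Gamma v$ and antitonicity of $'$ turning a union into an intersection. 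Finally item~6 is the standard Galois-closure fact: $W''\subseteq G$ implies $W\subseteq W''\subseteq G$ since $W\subseteq W''$ always; conversely $W\subseteq G$ implies $W''\subseteq G''=G$ by monotonicity of the closure operator and $G$ being Galois.

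\textbf{Main obstacle.} None of the steps is deep; the only place demanding care is item~2, both in getting the direction of the inclusions right when translating between $u\preceq w$, $\{u\}'\subseteq\{w\}'$, and membership in $\{u\}''$, and in the sorting discipline (whether $'$ denotes $(\;)\rperp$ or $\rperp(\;)$ depending on which of $Z_1,Z_\partial$ the point lives in). Once item~2 is nailed down, items~3--6 are formal consequences of the closure-operator axioms $W\subseteq W''$, $W''''=W''$, and antitonicity of $(\;)'$, so I would present those compactly.
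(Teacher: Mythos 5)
Your proposal is essentially sound and takes the only route available: the paper itself offers no argument beyond ``by simple calculation'' with a pointer to an external reference and a one-line hint for claim 4, and everything you write for items 2--6 is exactly the routine Galois-connection bookkeeping that citation is meant to cover. Your handling of item 2 (the two inclusions for $(\Gamma u)'=\{u\}'$, the identification $w\in\{u\}''$ iff $\{u\}'\subseteq\{w\}'$ iff $u\preceq w$, and stability via $X'''=X'$), of item 3 via antitonicity, and of item 5 via $\bigcap_{v\in G'}\{v\}'=(G')'=G''=G$ are all correct. The one step that fails as written is item 1: you claim that $u\preceq u_1$ and $u_1\upv y$ yield $u\upv y$, but $y\in\{u_1\}'$ together with $\{u\}'\subseteq\{u_1\}'$ says nothing about membership of $y$ in the \emph{smaller} set $\{u\}'$, and in any case the statement you would be proving is that $\upv$ is decreasing, not increasing, in its first argument. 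The correct argument transposes hypothesis and conclusion: from $u\upv y$, i.e. $y\in\{u\}'$, and $u\preceq u_1$, i.e. $\{u\}'\subseteq\{u_1\}'$, conclude $y\in\{u_1\}'$, i.e. $u_1\upv y$; symmetrically in the second coordinate. This is a slip of direction rather than a missing idea, and once repaired the rest of your proof goes through unchanged.
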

\begin{proof}
  By simple calculation. Proof details are included in \cite[Lemma 2.2]{sdl-exp}.  For claim 4, $\bigcup_{u\in G}\Gamma u\subseteq G$ by claim 3 (Galois sets are upsets). 
\end{proof}

\begin{defn}
\label{Galois dual relation}
For a sorted $(n+1)$-ary frame relation $R_j$, its {\em Galois dual relation} $R^\prime_j$ is defined by $R_j^\prime u_1\cdots u_n=(R_ju_1\cdots u_n)'$.
\end{defn} 

Notation is simplified by using vectors $\vec{u}=u_1\cdots u_n$, so that the definition is $R_j^\prime\vec{u}=(R_j\vec{u})'$. We let $\vec{u}[\;]_k$ be the vector with a hole (or just a place-holder) at the $k$-th position and write $u[w]_k$ either to display the element at the $k$-th place, or to designate the result of filling the $k$-th place of $u[\;]_k$, or to denote the result of replacing the element $u_k$ in $\vec{u}$ by the element $w$.
For $1\leq k\leq n$, the {\em $k$-th section of an $(n+1)$-ary relation $S$} is the set $wS\vec{u}[\;]_k$. For $k=n+1$ the section is simply the set $S\vec{u}$. 

\begin{defn}
\label{smooth defn}
Call a frame relation $R_j$ {\em smooth} iff every section of its Galois dual relation $R_j^\prime$ is a Galois set (stable, or co-stable, according to the sort $\sigma(R_j)$ of the relation). 
\end{defn}

Hereafter, when considering a structure $\mathfrak{F}=(s,Z,I,(R_j)_{j\in J},\sigma)$ we always assume that the frame is separated and that all frame relations are smooth.
Since no other kind of frame (relational structure) will be considered in this article, we shall refer to relational structures $\mathfrak{F}=(s,Z,I,(R_j)_{j\in J},\sigma)$ simply as frames, or sorted residuated frames.

\begin{rem}[Classical Kripke Frames]\label{jt rem1}
Structures $\mathfrak{F}=(s,Z,I,(R_j)_{j\in J},\sigma)$, as described above, generalize classical Kripke frames for (poly)modal logic with polyadic modalities, arising by letting  $Z_1=Z_\partial$ and where $I\subseteq Z_1\times Z_\partial$ is the identity relation.
  
For a frame relation $R_j$, for example a binary relation $R_\Diamond$, its Galois dual relation is defined (when $\mathfrak{F}$ is a classical Kripke frame) by $yR'_\Diamond x$ iff for all $z$, if $zR_\Diamond x$, then $z\neq y$, so that the section $R'_\Diamond x$ is the complement $-R_\Diamond x$ of the section $R_\Diamond x$. The smoothness requirement for frame relations is trivially satisfied, since all sets are stable, $--U=U$.

Furthermore, the preorder $x\preceq z$ iff $\{x\}'\subseteq\{z\}'$ iff (in a classical Kripke frame) $-\{x\}\subseteq -\{z\}$ is  the identity relation, so that $\Gamma x=\{x\}$ and ${}\rperp\{y\}=-\{y\}$. In other words, the principal elements (closed and open) in the frame are precisely the atoms and co-atoms of a powerset Boolean algebra. 
\end{rem}

A relational model $\mathfrak{M}=(\mathfrak{F},V)$ consists of a frame $\mathfrak{F}$ and a sorted valuation $V=(V^1,V^\partial)$ of propositional variables, interpreting a variable $p$ as a Galois stable set $V^1(p)\in\gpsi$ and co-interpreting it as a Galois co-stable set $V^\partial(p)=V^1(p)\rperp\in\gphi$.  Interpretations and co-interpretations determine each other in the sense that for any sentence $\varphi\in\mathcal{L}_\tau$, if $\val{\varphi}\in\mathcal{G}(Z_1)$ is an interpretation extending a valuation $V^1$ of propositional variables as stable sets, then $\val{\varphi}\rperp=\yvval{\varphi}\in\mathcal{G}(Z_\partial)$  is the co-interpretation extending the valuation $V^\partial$.

Satisfaction ${\forces}\subseteq Z_1\times\mathcal{L}_\tau$ and co-satisfaction (refutation) ${\dforces}\subseteq Z_\partial\times\mathcal{L}_\tau$ relations are then defined as expected, by $Z_1\ni x\forces\varphi$ iff $x\in\val{\varphi}$ and $Z_\partial\ni y\dforces\varphi$ iff $y\in\yvval{\varphi}$. Since satisfaction and co-satisfaction determine each other, for each operator it suffices to provide either its satisfaction, or its co-satisfaction (refutation) clause, in line with the principle of order-dual relational semantics introduced in \cite{odigpl}, as we do in Table \ref{sat}. The relation $R^{111}$ that appears in the satisfaction clause for implication is defined in the definition below.
\begin{defn}
\label{derived relations defn}
Define the relation $R^{111}\subseteq Z_1\times(Z_1\times Z_1)$ from the frame relation $T$ as follows:
\begin{tabbing}
\hskip2mm\=$T^{11\partial}$ \hskip3mm\= Galois dual relation of $T^{\partial 1\partial}$ \hskip3mm\= $xT^{11\partial}zv$ iff $\forall y\in Z_\partial(yT^{\partial 1\partial}zv\lra x\upv y)$\\[1mm]
\>$R^{\partial 11}$ \> argument permutation \> $vR^{\partial 11}zx$ iff $xT^{11\partial}zv$\\
\>$R^{111}$ \> Galois dual relation of $R^{\partial 11}$ \> $uR^{111}zx$ iff $\forall v\in Z_\partial(vR^{\partial 11}zx\lra u\upv v)$.
\end{tabbing}
\end{defn} 

\begin{table}[t]
\caption{(Co)Satisfaction relations}
\label{sat}
\begin{tabbing}
$x\forces p_i$\hskip8mm\=iff\hskip3mm\= $x\in V^1(p_i)$\\
$x\forces\ufootl$ \>iff\> $x=x$\\
$y\dforces\dfootl$\>iff\> $y=y$\\
$x\forces\varphi\wedge\psi$\>iff\> $x\forces\varphi$ and $x\forces\psi$\\
$y\dforces\varphi\vee\psi$\>iff\> $y\dforces\varphi$ and $y\dforces\psi$\\[2mm]
$y\dforces \dd\varphi$ \hskip6mm\>iff\> $\forall v\in Z_\partial\;(yR_\Diamond''v\lra v\dforces\varphi)$\\
$x\forces \bb\varphi$ \>iff\> $\forall z\in Z_1\;(xR_\Box''z\lra z\forces\varphi)$\\
$x\forces\varphi\rfspoon\psi$\>iff\> $\forall u\in Z_1\forall y\in Z_\partial(u\forces\varphi\;\wedge\; y\dforces\psi\lra xT'uy)$\\
\>iff\> $\forall u,z\in Z_1(u\forces\varphi\wedge zR^{111}ux\lra z\forces\psi)$
\end{tabbing}
\hrule
\end{table}
\noindent
In the satisfaction clause for implication, $T'$ is the Galois dual relation of $T$. In the clauses for the modal operators, we define the double dual $yR_\Diamond''=\lperp(yR_\Diamond')$ (and recall that the Galois dual $R_\Diamond'$ is defined from $R_\Diamond\subseteq Z_1\times Z_1$ by setting $R_\Diamond'z=(R_\Diamond z)\rperp$) and similarly for the double dual $R_\Box''$, defined from $R_\Box\subseteq Z_\partial\times Z_\partial$ by first letting $R_\Box'y=\lperp(R_\Box y)$ be the Galois dual relation, then defining $xR_\Box''=\lperp(xR_\Box')$.

There are two alternative but equivalent semantic clauses for implication, the first using the Galois dual $T'$ of $T^{\partial 1\partial}$, while the second uses a ternary relation $R^{111}$ on $Z_1$, derived from $T$. Equivalence of the two clauses is shown using  \cite[Proposition~3.6]{choiceFreeHA} and Proposition~\ref{Ra long and short}. The first clause is more familiar in a non-distributive setting, see for example \cite{suzuki-morphisms,mai-gen}. The second clause is familiar from the relational semantics of the implication connective of Relevance Logic \cite{relational,relevance2,relevance3}.

\subsection{Full Complex Algebras}
\label{full complex algebras section}
The logic $\mathbf{\Lambda}_{\Box\Diamond}^\ra$ is just the fusion $\mathbf{\Lambda}_{\Box\Diamond}^\ra=\mathbf{\Lambda}_\Box\oplus\mathbf{\Lambda}_\Diamond\oplus\mathbf{\Lambda}^\ra$ of its single-operator sub-systems, axiomatized only by the distribution and normality axioms (on top of the axiomatization for positive lattice logic) as in Table~\ref{minimal proof system}, hence its soundness follows from the soundness of its constituent systems. The proofs are given in Section~\ref{soundness for minimal},  Proposition~\ref{soundness for implication}, Proposition~\ref{soundness for box} and Proposition~\ref{soundness for diamond}, one for each subsystem. In all three Propositions, the argument proceeds by using the frame relations to define operations in the complete lattice $\gpsi$ of stable sets. This process is uniform, it relies on the duality for normal lattice expansions given in \cite{duality2} and it results in defining the dual full complex algebra $\mathfrak{F}^+$ of a frame $\mathfrak{F}$. A brief review of the case of an arbitrary normal lattice operator is given below, to be instantiated to the specific case of the implication, box and diamond operators in Sections~\ref{implication section}, \ref{necessity section} and \ref{possibility section}, respectively.

Given a frame $\mathfrak{F}=(s,Z,I,(R_j)_{j\in J},\sigma)$, each relation $R_j\subseteq Z_{j_{n(j)+1}}\times\prod_{k=1}^{n(j)}Z_{j_k}$ generates a sorted image operator, defined as in the Boolean case, except for the sorting
\begin{align}\label{sorted image ops}
F_j(\vec{W})&=\;\{w\in Z_{i_{n(j)+1}}\midsp \exists \vec{w}\;(wR_j\vec{w}\wedge\bigwedge_{s=1}^{n(j)}(w_s\in W_s))\} &=\; \bigcup_{\vec{w}\in\vec{W}}R_j\vec{w}.
\end{align}
Then $\mathbf{P}=(\largediamond:\powerset(Z_1)\leftrightarrows\powerset(Z_\partial):\lbbox, (F_j)_{j\in J} )$ is a 2-sorted powerset (poly)modal algebra, where $\largediamond$ and all $F_j$ are completely normal additive sorted operators and $\lbbox$ is completely multiplicative and normal. Alternatively, we represent $\mathbf{P}$ using the Gallois connection $\mathbf{P}=((\;)\rperp:\powerset(Z_1)\leftrightarrows\powerset(Z_\partial):{}\rperp(\;), (F_j)_{j\in J} )$, since the Galois connection and the residuated modal operators are interdefinable.

If $F_j$ is the (sorted) image operator generated by the frame relation $R_j$, let $\overline{F}_j$ be the closure of the restriction of $F_j$ to Galois sets (stable, or co-stable, according to sort).
\[
\xymatrix{
\prod_{k=1}^{n(j)}\powerset(Z_{j_k})\ar^{F_j}[rr]\ar^{(\;)''}@<0.5ex>@{->>}[d] && \powerset(Z_{j_{n(j)+1}})\ar^{(\;)''}@<0.5ex>@{->>}[d]\\
\prod_{k=1}^{n(j)}\mathcal{G}(Z_{j_k})\ar^{\overline{F}_j}[rr]\ar@<0.5ex>@{^{(}->}[u] && \mathcal{G}(Z_{j_{n(j)+1}})\ar@<0.5ex>@{^{(}->}[u]
}
\]
Then  $\overline{F}_j(\vec{G})=(F_j(\vec{G}))''$ and it follows that $\overline{F}_j$ is defined on a tuple $\vec{P}$ of Galois sets by $\overline{F}_j(\vec{P})=\bigvee_{\vec{w}\in\vec{G}}(R_j\vec{w})''$, where by $\vec{w}\in\vec{P}$ we mean the conjunction of coordinatewise membership statements $w_k\in P_k$, for $k=1,\ldots,n(j)$.

\begin{thm}
\label{dist from section stability}
The sorted operator $\overline{F}_j:\prod_{k=1}^{n(j)}\mathcal{G}(Z_{j_k})\lra \mathcal{G}(Z_{j_{n(j)+1}})$ distributes over arbitrary joins of Galois sets, in each argument place, returning a join in $\mathcal{G}(Z_{j_{n(j)+1}})$.
\end{thm}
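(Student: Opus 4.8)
The plan is to reduce the claim to the corresponding fact about the (unrestricted, two-sorted) image operator $F_j$ on powersets, which is completely additive essentially by construction (see the displayed formula $F_j(\vec W)=\bigcup_{\vec w\in\vec W}R_j\vec w$), and then transfer complete additivity through the closure map $(\;)''$. Fix an argument place $k$ and a family $\{P^i\}_{i\in\mathcal I}$ of Galois sets (stable or co-stable according to $\sigma_j(k)$), with the remaining arguments $\vec P[\;]_k$ held fixed. Write $J=\bigvee_{i}P^i$ for the join computed in the appropriate Galois lattice; since that join is the closure of the union, $J=(\bigcup_i P^i)''$. What we must show is
\[
\overline F_j\big(\vec P[J]_k\big)\;=\;\bigvee_{i\in\mathcal I}\overline F_j\big(\vec P[P^i]_k\big),
\]
the join on the right again taken in $\mathcal G(Z_{j_{n(j)+1}})$.

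First I would dispatch the inequality $\supseteq$: it is immediate from monotonicity of $\overline F_j$ (each $P^i\subseteq J$, and $\overline F_j$ is monotone because $F_j$ is monotone and $(\;)''$ is a monotone closure), together with the fact that $\bigvee$ is the least upper bound. For the converse $\subseteq$, the key is to unwind the definitions and use smoothness. By the formula $\overline F_j(\vec G)=\big(F_j(\vec G)\big)''$ and Lemma~\ref{basic facts}(6), it suffices to prove the inclusion at the level of the underlying unclosed sets, i.e.
\[
F_j\big(\vec P[J]_k\big)\;\subseteq\;\Big(\bigcup_{i\in\mathcal I} F_j\big(\vec P[P^i]_k\big)\Big)''.
\]
Here is where the argument has to work a little. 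A point $w$ in the left-hand side satisfies $wR_j\vec w$ for some $\vec w$ with $w_k\in J=(\bigcup_i P^i)''$ and $w_s\in P_s$ for $s\neq k$. The point is that $w\in R_j\vec w$, and by smoothness the $k$-th section $w\,R_j'\,\vec w[\;]_k$ — equivalently the relevant section of the Galois dual relation — is a Galois set; hence the map sending $u$ in the $k$-th coordinate to $R_j(\vec w[u]_k)''$, or rather the statement ``$w\in R_j(\vec w[u]_k)''$'', cuts out a Galois-closed set of $u$'s. Since it contains each $P^i$ (as $w\in R_j\vec w'$ whenever $w_k'\in P^i$, because then $wR_j\vec w'$), and Galois sets are closed under the closure operator (Lemma~\ref{basic facts}(2),(6)), it contains $J$. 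Concretely: from $w_k\in(\bigcup_i P^i)''$ and the fact that $\{u : w\in R_j(\vec w[u]_k)''\}$ is a Galois set containing $\bigcup_i P^i$, we get $w\in R_j(\vec w[w_k]_k)''=(R_j\vec w)''\subseteq\big(\bigcup_i F_j(\vec P[P^i]_k)\big)''$, using that each witness with $u\in P^i$ lands in $F_j(\vec P[P^i]_k)$. Applying $(\;)''$ and Lemma~\ref{basic facts}(6) once more gives the desired inclusion of closures, and combining the two directions yields equality. Finally, the two base/normality cases ($\mathcal I=\emptyset$, giving $\overline F_j(\vec P[\emptyset'']_k)=\emptyset''$ — the bottom of the Galois lattice — from $F_j(\ldots\emptyset\ldots)=\emptyset$) are handled the same way and complete the ``arbitrary joins'' claim.

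The main obstacle is the transfer step: complete additivity of $F_j$ on raw powersets is trivial, but pushing it through $(\;)''$ is not automatic, because in general $(\bigcup_i A_i)''$ is strictly larger than $\bigcup_i A_i$, so one cannot simply apply $F_j$ termwise. The device that rescues the argument is precisely the smoothness hypothesis: it guarantees that each section of the Galois dual relation is Galois-closed, which lets one replace the set-membership condition defining $F_j(\vec P[J]_k)$ by a condition on $u\in J$ that is insensitive to the closure, i.e. holds for $u\in J$ as soon as it holds for all $u\in\bigcup_i P^i$. I would state that sectional-closure fact as the crux of the proof (it is the content of Definition~\ref{smooth defn} plus Lemma~\ref{basic facts}(6)), do the $\supseteq$ direction in one line by monotonicity, and spend the bulk of the write-up on making the $\subseteq$ direction precise, being careful that ``$\vec w[u]_k$'' ranges correctly and that the fixed coordinates $s\neq k$ never interfere.
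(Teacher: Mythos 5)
Your overall strategy --- reduce to the completely additive powerset operator $F_j$, then transfer through the closure using Lemma~\ref{basic facts}(6) and smoothness --- is the right one; the $\supseteq$ direction and the reduction to showing $F_j(\vec P[J]_k)\subseteq\big(\bigcup_i F_j(\vec P[P^i]_k)\big)''$ are fine. (The paper itself only cites \cite[Theorem~3.12]{duality2} here, so the comparison is with the intended argument, which is exactly this reduction.) However, your execution of the key step has a genuine gap. First, the ``section'' $wR'_j\vec w[\;]_k$ you invoke is not well-sorted: the first coordinate of the Galois dual relation $R'_j$ ranges over the sort \emph{dual} to the output sort $j_{n(j)+1}$ (since $R'_j\vec u=(R_j\vec u)'$), whereas your $w$ lies in $Z_{j_{n(j)+1}}$ itself. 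Second, the set you actually use, $S=\{u\midsp w\in(R_j\vec w[u]_k)''\}$, is not a section of $R'_j$: unwinding the definition, $S$ is the complement of $\bigcup\{vR'_j\vec w[\;]_k\midsp v\notin\{w\}'\}$, and complements of (unions of) Galois sets need not be Galois, so smoothness does not make $S$ stable. Third, the containment $\bigcup_iP^i\subseteq S$ is unjustified and generally false: from $wR_j\vec w$ with $w_k\in J$ you cannot conclude $w\in(R_j\vec w[u]_k)''$ for other values $u\in P^i$ of the $k$-th coordinate, and the final inclusion $(R_j\vec w)''\subseteq\big(\bigcup_iF_j(\vec P[P^i]_k)\big)''$ is precisely what needs proof, since $w_k$ need not lie in any $P^i$.

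The repair is to quantify over the dual sort first. Let $W=\bigcup_iF_j(\vec P[P^i]_k)$ and fix $w\in R_j\vec w$ with $w_k\in J$ and $w_s\in P_s$ for $s\neq k$; to get $w\in W''$ take an arbitrary $v\in W'$. For every $u\in\bigcup_iP^i$ one has $R_j(\vec w[u]_k)\subseteq W$, hence $v\in(R_j\vec w[u]_k)'$, i.e.\ $u\in vR'_j\vec w[\;]_k$. This section \emph{is} a Galois set by smoothness, so by Lemma~\ref{basic facts}(6) it contains $J=(\bigcup_iP^i)''$ and in particular $w_k$; thus $v\in(R_j\vec w)'$ and, since $w\in R_j\vec w$, $w$ and $v$ are Galois-related. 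As $v\in W'$ was arbitrary, $w\in W''$. With this substitution your proof goes through. Your closing remark that normality ($\mathcal I=\emptyset$) is ``handled the same way'' also needs care, since $F_j(\vec P[\emptyset'']_k)$ need not be empty; the paper derives normality from residuation instead.
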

\begin{proof}
The claim was proven in \cite[Theorem~3.12]{duality2}, using the smoothness property of the frame relation $R_j$.
\end{proof}
By the complete distribution property, $\overline{F}_j$ is residuated at each argument place and, from residuation, it follows that $\overline{F}_j$ is normal, i.e. $\overline{F}_j(\vec{G}[\emptyset'']_k)=\emptyset''$.

Note that for each $j\in J$ the sorted set operator $F_j:\prod_{k=1}^{n(j)}\powerset(Z_{j_k})\lra\powerset(Z_{j_{n(j)+1}})$ in the (sorted) powerset algebra $\mathbf{P}$ is completely additive (it distributes over arbitrary unions) in each argument place. Hence it is residuated, i.e. for each $1\leq k\leq n(j)$ there exists a set map $G_{j,k}$ such that $F_j(\vec{W}[V]_k)\subseteq U$ iff $V\subseteq G_{j,k}(\vec{W}[U]_k)$ which is defined by equation~\eqref{k-residual}
\begin{equation}\label{k-residual} 
G_{j,k}(\vec{W}[U]_k)=\bigcup\{V\subseteq Z_{j_k}\midsp F_j(\vec{W}[V]_k)\subseteq U\}.
\end{equation}

\begin{thm}\label{preservation of residuals}
If $G_{j,k}$ is the right residual of $F_j$ at the $k$-th argument place, then its restriction $\overline{G}_{j,k}$ to Galois sets is the right residual of $\overline{F}_j$ at the $k$-th argument place. Letting $P,Q,E$ range over Galois sets (and $\vec{P},\vec{Q}$ over tuples thereof) the right $k$-residual $\overline{G}_{j,k}$  of $\overline{F}_j$ can be defined in any of the equivalent ways in equation \eqref{residuals def}
  \begin{equation}\label{residuals def}
  \begin{array}{ccl}
  \overline{G}_{j,k}(\overline{P}[Q]_k) &=& \bigcup\{E\in\mathcal{G}(Z_{j_k})\midsp F_j(\vec{P}[E]_k)\subseteq Q\} \\
  &=&  \bigcup\{\Gamma u\in\mathcal{G}(Z_{j_k})\midsp F_j(\vec{P}[\Gamma u]_k)\subseteq Q\}  \\
  &=& \{u\in Z_{j_k}\midsp F_j(\vec{P}[\Gamma u]_k)\subseteq Q\}.
  \end{array}
  \end{equation}
\end{thm}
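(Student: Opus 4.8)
The plan is to prove the three displayed equalities in \eqref{residuals def} in a cycle or by chains of mutual inclusions, using Theorem~\ref{dist from section stability} (which guarantees $\overline{F}_j$ is completely additive, hence residuated in each place) together with the basic facts of Lemma~\ref{basic facts} about Galois sets. First I would observe that, since $\overline{F}_j$ distributes over arbitrary joins of Galois sets in the $k$-th place, the map $Q\mapsto \bigvee\{E\in\mathcal{G}(Z_{j_k})\midsp \overline{F}_j(\vec{P}[E]_k)\leq Q\}$ is a well-defined right residual $\overline{G}_{j,k}$ in $\mathcal{G}(Z_{j_k})$; the only thing to check is that this join of Galois sets is itself a witness, i.e. that $\overline{F}_j(\vec{P}[\bigvee E]_k)\leq Q$, which is immediate from complete distributivity. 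So the real content is to rewrite this abstract residual in the three concrete forms involving the \emph{set} operator $F_j$ rather than $\overline{F}_j$, and in particular to replace arbitrary Galois-set witnesses $E$ by principal closed witnesses $\Gamma u$.

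The key step is the link between $\overline{F}_j$ and $F_j$: by the definition preceding the theorem, $\overline{F}_j(\vec{P}) = (F_j(\vec{P}))''$, and by Lemma~\ref{basic facts}(6), for a Galois set $Q$ we have $F_j(\vec{P}[E]_k)\subseteq Q$ iff $(F_j(\vec{P}[E]_k))''\subseteq Q$, i.e. iff $\overline{F}_j(\vec{P}[E]_k)\leq Q$. This equivalence turns the abstract condition ``$\overline{F}_j(\vec{P}[E]_k)\leq Q$'' into the set-theoretic condition ``$F_j(\vec{P}[E]_k)\subseteq Q$'', which gives the first displayed line (noting that the join $\bigvee$ of Galois sets whose members all satisfy the condition equals the \emph{union} of those members, since that union is already below $Q$ and hence, being contained in a Galois set, its closure is too; so $\bigcup$ and $\bigvee$ coincide here). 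For the second line, the inclusion $\supseteq$ is trivial since each $\Gamma u$ is a Galois set; for $\subseteq$, I would use Lemma~\ref{basic facts}(4), $E = \bigcup_{u\in E}\Gamma u$, together with the fact that $F_j$ distributes over unions in the $k$-th place (it is the sorted image operator of a relation, hence completely additive by \eqref{sorted image ops}), so $F_j(\vec{P}[E]_k)\subseteq Q$ forces $F_j(\vec{P}[\Gamma u]_k)\subseteq Q$ for each $u\in E$; thus every point of $E$ lies in the union on the right. The third line follows from the second: the set $\{u\in Z_{j_k}\midsp F_j(\vec{P}[\Gamma u]_k)\subseteq Q\}$ contains, for each of its members $u$, the whole set $\Gamma u$ — because if $u\preceq w$ then $\Gamma w\subseteq\Gamma u$ (Lemma~\ref{basic facts}(3) applied to the Galois set $\Gamma u$, since $w\in\Gamma u$), hence $F_j(\vec{P}[\Gamma w]_k)\subseteq F_j(\vec{P}[\Gamma u]_k)\subseteq Q$ by monotonicity of $F_j$ — so this set equals $\bigcup_{u}\{\Gamma u : F_j(\vec{P}[\Gamma u]_k)\subseteq Q\}$, matching the middle line. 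Finally, the claim that $\overline{G}_{j,k}$ is the restriction of $G_{j,k}$ to Galois sets is read off by comparing \eqref{k-residual} with the first line of \eqref{residuals def}, restricted to Galois-set arguments and using that $\overline{F}_j$ and $F_j$ agree on Galois-set inputs up to closure.

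I expect the main obstacle to be the bookkeeping around the $k$-th argument place while the remaining coordinates $\vec{P}$ stay fixed and Galois: one must be careful that $F_j(\vec{P}[\,\cdot\,]_k)$ is monotone and additive in the $k$-th place with the other coordinates frozen (which it is, being a section of a sorted image operator), and that ``$\overline{F}_j(\vec{P}[E]_k)\leq Q$'' genuinely matches the residuation inequality whose other side ranges over Galois sets. A secondary subtlety is the $\bigcup$ versus $\bigvee$ distinction: writing $\bigcup$ in \eqref{residuals def} is only legitimate because the union of witnesses is automatically contained in the Galois set $Q$, so no extra closure is needed; I would state this explicitly rather than leave it implicit. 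None of this is deep, but it is exactly where a sloppy argument would go wrong, so the write-up should foreground the $E = \bigcup_{u\in E}\Gamma u$ decomposition and the additivity of the frozen-section map as the two load-bearing facts.
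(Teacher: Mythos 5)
Your proposal is correct, and its skeleton matches the paper's: residuation of $\overline{F}_j$ comes from Theorem~\ref{dist from section stability}, and the residual is then unpacked into the three forms of \eqref{residuals def}. The difference is one of self-containedness rather than of route: the paper's own proof stops after writing the residual as $\bigvee\{E\in\mathcal{G}(Z_{j_k})\midsp F_j(\vec{P}[E]_k)\subseteq Q\}$ and cites \cite[Proposition 3.14, Lemma 3.15]{duality2} for the fact that this join is a union and that the three lines agree, whereas you actually prove those equivalences, using exactly the right ingredients (Lemma~\ref{basic facts}(6) to trade $\overline{F}_j(\vec{P}[E]_k)\subseteq Q$ for $F_j(\vec{P}[E]_k)\subseteq Q$, the decomposition $E=\bigcup_{u\in E}\Gamma u$ with additivity of the frozen section of $F_j$, and the upset property of $\{u\midsp F_j(\vec{P}[\Gamma u]_k)\subseteq Q\}$). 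One local repair: your parenthetical justification that $\bigcup$ and $\bigvee$ coincide in the first line (``the union is already below $Q$ and hence its closure is too'') does not by itself show the union is Galois. The clean argument is the one you already set up: since $\overline{F}_j(\vec{P}[\bigvee E]_k)\subseteq Q$ by complete additivity, the join $(\bigcup E)''$ is itself a member of the collection, hence is simultaneously an upper bound for and a member of it, so the union equals that join and is in particular stable. With that sentence replaced, the write-up is a complete proof of what the paper delegates to the citation.
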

\begin{proof}
With the smoothness assumption on the relation $R_j$, for each $j\in J$, Theorem~\ref{dist from section stability} established that the (sorted) operator $\mathfrak{R}(F_j)=\overline{F}_j$ distributes over arbitrary joins in each argument place $k$. Hence it is residuated  and the residual is canonically defined by $\overline{G}_{j,k}(\overline{P}[Q]_k) = \bigvee\{E\in\mathcal{G}(Z_{j_k})\midsp F_j(\vec{P}[E]_k)\subseteq Q\}$, where the join is taken in $\mathcal{G}(Z_{j_k})$. The first line of equation~\eqref{residuals def} means that the join in question is actually a union. That the three lines of the equation are equivalent ways of defining the residual was proven in \cite[Proposition 3.14, Lemma 3.15]{duality2}.
\end{proof}

The Galois connection is a dual isomorphism of the complete lattices of stable and co-stable sets, $(\;)\rperp:\mathcal{G}(Z_1)\iso\mathcal{G}(Z_\partial)^\mathrm{op}:\rperp(\;)$. This allows for extracting single-sorted operators $\overline{F}_j^1:\prod_{k=1}^{n(j)}\mathcal{G}(Z_1)\lra\mathcal{G}(Z_1)$ and $\overline{F}_j^\partial:\prod_{k=1}^{n(j)}\mathcal{G}(Z_\partial)\lra\mathcal{G}(Z_\partial)$, by composition with the Galois connection maps
\begin{equation}\label{2single-sorted}
  \overline{F}^1_j(A_1,\ldots,A_{n(j)})=
  \left\{
  \begin{array}{cl}
  \overline{F}_j(\ldots,\underbrace{A_k}_{j_k=1},\ldots,\underbrace{A'_r}_{j_r=\partial},\ldots) & \mbox{if } j_{n(j)+1}=1
  \\
  (\overline{F}_j(\ldots,\underbrace{A_k}_{j_k=1},\ldots,\underbrace{A'_r}_{j_r=\partial},\ldots))' & \mbox{if } j_{n(j)+1}=\partial.
  \end{array}
  \right.
\end{equation}
From the residual $\overline{G}_{j,k}$  of $\overline{F}_j$ we can similarly extract a single-sorted operator $\overline{G}^1_{j,k}$.

It follows from Theorem~\ref{dist from section stability} and the definition of $\overline{F}^1_j$, that if the sort of $R_j$ is $\sigma(j)=(j_{n+1};j_1\cdots j_n)$, then $\overline{F}^1_j$ is a normal lattice operator of distribution type $\delta(j)=(j_1,\ldots,j_n;j_{n+1})$ (co)distributing over arbitrary meets/joins according to its distribution type.

\begin{defn}\label{full complex algebra defn}
For a frame $\mathfrak{F}=(s,Z,I,(R_j)_{j\in J},\sigma)$, its full complex algebra $\mathfrak{F}^+$ is defined as the normal lattice expansion $\mathfrak{F}^+=(\gpsi,\subseteq,\bigcap,\bigvee,\emptyset'', Z_1,(\overline{F}^1_j)_{j\in J})$. 
\end{defn}

\begin{rem}[Generalized J\'{o}nsson-Tarski Framework]\label{jt rem2}\mbox{}\\
If the frame $\mathfrak{F}=(s,Z,I,(R_j)_{j\in J},\sigma)$ is a classical Kripke frame (see Remark~\ref{jt rem1}), i.e. $Z_1=Z_\partial$, with $I\subseteq Z_1\times Z_1$ being the identity relation, then the Galois connection is set-complementation and the closure operator is the identity operator on subsets of $Z_1$. 
In that case, equation~\eqref{sorted image ops} simply defines the J\'{o}nsson-Tarski image operators \cite{jt1}. 

Operations on Boolean algebras  that are not additive arise by appropriately composing with set complementation. In generalizing the J\'{o}nsson-Tarski framework, we include all normal lattice operators, but notice that in the special case of Kripke frames the maps $\overline{F}_j$ we define as closure of the restriction to stable sets are identical to the $F_j$, since every subset is stable and closure means taking the complement of the complement. Since there is only one sort, in the sense that $Z_1=Z_\partial$, the maps $F_j$ are single-sorted normal additive operators (J\'{o}nsson-Tarski operators). In equation~\eqref{2single-sorted}, composition with the set-complementation operation is actually performed, in the case of a classical Kripke frame, returning a set operation that either distributes, or co-distributes over either intersections or unions (which are the same as joins of stable sets in the Kripke frame case). 

For an example, consider the case of defining the box operator, of distribution type $(\partial;\partial)$. The relation $R^{\partial\partial}_\Box$ generates an additive (diamond) operator $\ldminus$ on $\powerset(Z_\partial)=\powerset(Z_1)$,  $F_j(U)=\ldminus U$. To obtain the box operator on stable subsets of $Z_1$ (which in the Kripke frame case are all subsets of $Z_1$) we compose appropriately with the Galois connection, i.e. we set $\lbminus A=(\ldminus A')'$. But in the case of a classical Kripke frame the Galois connection is simply the set-complement operation, hence we  obtain the classical definition $\lbminus U=-(\ldminus(-U))$. 
\end{rem}

\subsection{Soundness of the Fusion Logic $\mathbf{\Lambda}_{\Box\Diamond}^\ra$ in the Relational Semantics}
\label{soundness for minimal}

\subsubsection{Implication}
\label{implication section}
\begin{prop}\label{soundness for implication}
The single-operator minimal logic $\mathbf{\Lambda}^\ra$ (the logic of implicative lattices) is sound in the relational semantics.
\end{prop}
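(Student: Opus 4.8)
The plan is to reduce the statement to the claim that, for every (separated, smooth) frame $\mathfrak{F}$ carrying a single relation $T=T^{\partial1\partial}$, the implication reduct of the full complex algebra $\mathfrak{F}^+$ is an implicative lattice in which the clause of Table~\ref{sat} computes the operation; soundness of $\mathbf{\Lambda}^\ra$ then follows from the soundness direction of Theorem~\ref{minimal algebraic completeness} for $\tau=(\delta_\ra)$, observing that $\mathbb{NLE}_{(\delta_\ra)}$ is precisely the variety of implicative lattices, since axioms (A1), (A2), (N) of Definition~\ref{implicative lattice defn} just say that $\ra$ has distribution type $(1,\partial;\partial)$.

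First I would instantiate the general construction of Section~\ref{full complex algebras section} to $T$. The relation $T$ generates the image operator $F_\ra:\powerset(Z_1)\times\powerset(Z_\partial)\lra\powerset(Z_\partial)$ of \eqref{sorted image ops}; by Theorem~\ref{dist from section stability} --- here is where smoothness of $T$ enters --- its closure-to-Galois-sets $\overline{F}_\ra:\gpsi\times\gphi\lra\gphi$ distributes over arbitrary joins of Galois sets in each argument and is normal, $\overline{F}_\ra(\emptyset'',V)=\emptyset''=\overline{F}_\ra(A,\emptyset'')$. Equation \eqref{2single-sorted} for output sort $\partial$ then gives the operation $\overline{F}^1_\ra(A,B)=(\overline{F}_\ra(A,B'))'$ for $A,B\in\gpsi$. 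Transporting distribution and normality across the dual isomorphism $(\;)\rperp:\gpsi\iso\gphi^{\mathrm{op}}:\rperp(\;)$ yields the three implicative-lattice axioms directly: a join in the first ($1$-)argument is sent by $\overline{F}_\ra$ to a join in $\gphi$ and by $(\;)'$ to a meet in $\gpsi$, which is (A1); in the second ($\partial$-)argument $(\;)'$ turns a meet in $\gpsi$ into a join in $\gphi$, over which $\overline{F}_\ra$ distributes, and the result is turned back into a meet in $\gpsi$, which is (A2); and normality of $\overline{F}_\ra$ together with the fact that $(\;)'$ interchanges bottom and top gives $0\ra a=1=a\ra1$, which is (N). This is exactly the special case $\delta(\ra)=(1,\partial;\partial)$ of the observation recorded after \eqref{2single-sorted}, so the reduct $(\gpsi,\subseteq,\bigcap,\bigvee,\emptyset'',Z_1,\overline{F}^1_\ra)$ is an implicative lattice.

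Next I would verify that in every model $\mathfrak{M}=(\mathfrak{F},V)$ the map $\val{\;}$ is the homomorphic extension of $V$ into this algebra. For the lattice connectives this is read off Table~\ref{sat} ($\val{\varphi\wedge\psi}=\val\varphi\cap\val\psi$, $\val{\varphi\vee\psi}$ the join of $\val\varphi,\val\psi$ in $\gpsi$, $\val\ufootl=Z_1$, $\val\dfootl=\emptyset''$). For implication, unfolding gives $\overline{F}^1_\ra(\val\varphi,\val\psi)=(\overline{F}_\ra(\val\varphi,\yvval\psi))'=(F_\ra(\val\varphi,\yvval\psi))'=\bigl(\bigcup_{u\in\val\varphi,\,y\in\yvval\psi}Tuy\bigr)'=\bigcap_{u\in\val\varphi,\,y\in\yvval\psi}(Tuy)'=\{x\midsp\forall u\in\val\varphi\,\forall y\in\yvval\psi\;\;xT'uy\}$, using $W'''=W'$ and Definition~\ref{Galois dual relation}; recalling $u\forces\varphi$ iff $u\in\val\varphi$ and $y\dforces\psi$ iff $y\in\yvval\psi$, this is exactly the first satisfaction clause of Table~\ref{sat} for $\rfspoon$, and the second ($R^{111}$) clause defines the same set by \cite[Proposition~3.6]{choiceFreeHA} and Proposition~\ref{Ra long and short}. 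Hence any sequent $\varphi\proves\psi$ provable in $\mathbf{\Lambda}^\ra$, being valid in every implicative lattice, is valid in $\mathfrak{F}^+$, so $\val\varphi\subseteq\val\psi$, i.e. $\mathfrak{M}\forces\varphi\proves\psi$.

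The only step requiring care is the passage across the dual isomorphism: one must keep straight which suprema and infima are formed in $\gpsi$ and which in $\gphi$, so that (A1), (A2), (N) come out with the signs dictated by the distribution type $(1,\partial;\partial)$. The distributivity input itself, Theorem~\ref{dist from section stability}, is already available, and it is the sole place where the smoothness hypothesis on $T$ is used.
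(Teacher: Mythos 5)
Your proposal is correct and follows essentially the same route as the paper: instantiate the image operator $F_\ra$ generated by $T$, invoke Theorem~\ref{dist from section stability} (via smoothness) for complete join-distributivity of its Galois closure, compose with the Galois connection to obtain an operation of distribution type $(1,\partial;\partial)$, and match it against the satisfaction clause of Table~\ref{sat}. The only difference is presentational — you spell out the derivation of (A1), (A2), (N) and the computation $(F_\ra(\val\varphi,\yvval\psi))'=\bigcap_{u,y}(Tuy)'$ explicitly, where the paper delegates these to \eqref{Ra prop} and \cite[Proposition~3.6]{choiceFreeHA} — so no further comparison is needed.
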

\begin{proof}
This proof is a brief review based on \cite{choiceFreeHA}, where representation and duality results for implicative lattices and Heyting algebras were presented.

Let $\mathfrak{F}=(s,Z,I,R_\ra,\ldots,\sigma)$ be a frame where $\sigma(R_\ra)=(\partial:1\partial)$. To make the connection with \cite{choiceFreeHA} easier, where the case of implicative lattices was discussed, we henceforth use $T$ for the relation $R_\ra$, occasionally displaying its sort as a superscript $T^{\partial 1\partial}\subseteq Z_\partial\times(Z_1\times Z_\partial)$. The relation $T$ generates an image operator $F_\ra=\largetriangleright : \powerset(Z_1)\times\powerset(Z_\partial)\lra\powerset(Z_\partial)$ (therefore an additive and normal set operator), 
defined as in equation~\eqref{sorted image ops}, instantiating  to \eqref{arrow image op}
\begin{equation}\label{arrow image op}
U{\largetriangleright} V=\{y\in Y\midsp\exists x,v(x\in U\;\wedge\;v\in V\;\wedge\;yTxv)\}=\bigcup_{x\in U}^{v\in V}Txv.
\end{equation}
The closure of its restriction to Galois sets (according to its distribution type) gives the sorted Galois-set operator 
$\overline{F}_\ra={\Mtright}:\gpsi\times\gphi\lra\gphi$.

By Theorem~\ref{dist from section stability},  $\Mtright$ distributes over arbitrary joins of stable sets in the first argument place and of co-stable sets in the second argument place, returning a join of co-stable sets. Composing with the Galois connection, we define $\Ra:\gpsi\times\gpsi^\partial\lra\gpsi^\partial$ by setting 
\begin{eqnarray}
A\Ra C &=& (A\Mtright C')'=(A\largetriangleright C')'''=(A\largetriangleright C')'\label{Ra def}.
\end{eqnarray}
A sentence $\varphi\ra\psi$ is then interpreted as $\val{\varphi\ra\psi}=\val{\varphi}\Ra\val{\psi}=(\val{\varphi}\Mtright\yvval{\psi})'$.
In \cite[Proposition~3.6]{choiceFreeHA} the following were proven, where $T'$ is the Galois dual  of $T$.
\begin{equation}\label{Ra prop}
\begin{array}{ll}
  1. & \left(\bigvee_{i\in I}A_i\right)\Ra\left(\bigcap_{j\in J}C_j\right)=\bigcap_{i\in I,j\in J}(A_i\Ra C_j)\\
  2. & A\Ra C=\bigcap_{x\in A, C\upv y}(\Gamma x\Ra{}\rperp\{y\})\\
  3. & uT'xy$ iff $u\in(\Gamma x\Ra{}\rperp\{y\})$, for all $u,x\in X$ and $y\in Y\\
  4. & u\in (A\Ra C)$ iff $\forall x\in Z_1\forall y\in Z_\partial(x\in A\;\wedge\;C\upv y\lra uT'xy)\}.
\end{array}
\end{equation}
Item 4 determines the first semantic clause for implication that we have presented. By the result in item 1, the operation $\Ra$ has the same distribution type as $\ra$. Normality of $\Ra$ follows from normality of the additive operator $\Mtright$, given the definition of $\Ra$. Since the axiomatization of the minimal logic $\mathbf{\Lambda}^\ra$ only enforces normality and (co)distribution properties for $\ra$, we may conclude that $\mathbf{\Lambda}^\ra$ is sound in frames $\mathfrak{F}=(s,Z,I,R^{\partial\partial}_\Box,R^{11}_\Diamond,T^{\partial 1\partial})$.
\end{proof}
It remains to justify our claim that the two satisfaction clauses given for $\ra$ in Table~\ref{sat} are indeed equivalent. The proof reveals that the stable set operation $\Ra$ is the restriction to stable sets of a set operator $\Ra_T$, residuated with a product operator $\bigodot$ (i.e. the restriction of $\Ra_T$ to stable sets returns a stable set).

Let $\bigodot$ be the binary image operator generated by the derived relation $R^{111}$ (Definition~\ref{derived relations defn}), defined on $\powerset(Z_1)\times\powerset(Z_1)$ by
\begin{equation}\label{odot defn}
U\mbox{$\bigodot$}V=\{x\in Z_1\midsp\exists u,z(u\in U\wedge z\in V\wedge xR^{111}uz)\}=\bigcup_{u\in U}^{z\in V}R^{111}uz.
\end{equation}
Being completely additive, $\bigodot$ is residuated (in both argument places)
and we let $\Ra_T$ be the residual $U\Ra_T W=\bigcup\{X\midsp U\bigodot X\subseteq W\}$, satisfying the residuation condition $V\subseteq U\Ra_T W$ iff $U\bigodot V\subseteq W$. Since $x\in U\Ra_T W$ iff $U\bigodot\{x\}\subseteq W$, a straightforward calculation, left to the interested reader, gives 
\begin{equation}\label{Lra membership condition}
  U\Ra_T W=\{x\in Z_1\midsp\forall z,u\in Z_1(u\in U\wedge zR^{111}ux\lra z\in W)\}
\end{equation}

Let $\bigovert$ be the closure of the restriction of $\bigodot$ to stable sets. 

\begin{prop}
\label{Ra long and short}
  The operators $\bigovert,\Ra$ are residuated. In other words, for any Galois stable sets $A,F,C$ we have $A\bigovert F\subseteq C$ iff $F\subseteq A\Ra C$. Consequently, $\Ra$ is the restriction to Galois stable sets of the residual $\Ra_T$ of $\bigodot$.
\end{prop}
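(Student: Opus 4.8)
The plan is to establish the residuation equivalence $A\bigovert F\subseteq C$ iff $F\subseteq A\Ra C$ for all Galois stable sets $A,F,C$, and then read off the consequence about $\Ra_T$. The guiding idea is to reduce the statement about the \emph{closed} operators $\bigovert$ and $\Ra$ to a statement about the underlying \emph{unrestricted} image operators $\bigodot$ and its set-residual $\Ra_T$, using the fact (Lemma~\ref{basic facts}, item 6) that for a Galois set $G$ and any set $W$, $W''\subseteq G$ iff $W\subseteq G$. Concretely, since $\bigovert$ is by definition the closure of the restriction of $\bigodot$, we have $A\bigovert F=(A\bigodot F)''$, and therefore $A\bigovert F\subseteq C$ iff $(A\bigodot F)''\subseteq C$ iff $A\bigodot F\subseteq C$ (as $C$ is stable). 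By the residuation of $\bigodot$ with $\Ra_T$ at the second argument place, this is equivalent to $F\subseteq A\Ra_T C$.

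So the crux reduces to showing that, for stable $A$ and $C$, the set $A\Ra_T C$ coincides with $A\Ra C$, i.e.\ that $A\Ra_T C$ is already Galois stable and equals the operation $\Ra$ obtained from the $T$-relation. Here I would use the explicit membership description \eqref{Lra membership condition}, namely $x\in A\Ra_T C$ iff $\forall z,u\in Z_1\,(u\in A\wedge zR^{111}ux\lra z\in C)$, and compare it with the second semantic clause for $\ra$ in Table~\ref{sat}, which is exactly this condition with $A=\val{\varphi}$, $C=\val{\psi}$. The equivalence of the two satisfaction clauses for implication — hence the identification of this $R^{111}$-formulation with the $T'$-formulation, and thence with $\Ra$ via item~4 of \eqref{Ra prop} — is precisely what is invoked from \cite[Proposition~3.6]{choiceFreeHA} together with the chain of definitions in Definition~\ref{derived relations defn}. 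I would spell out that chain: unwind $R^{111}$ back through $R^{\partial 11}$ (argument permutation) and $T^{11\partial}$ (Galois dual of $T^{\partial 1\partial}$) to recover the condition $\forall u\in Z_1\forall y\in Z_\partial(u\in A\wedge C\upv y\lra xT'uy)$, which by \eqref{Ra prop}(4) says exactly $x\in A\Ra C$. This simultaneously proves that $A\Ra_T C$ is stable (being equal to $A\Ra C\in\gphi$, pulled back to $\gpsi$ via the Galois connection) and that it agrees with $\Ra$.

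Having both halves, the equivalence $A\bigovert F\subseteq C$ iff $F\subseteq A\Ra C$ follows by transitivity, and the final clause of the proposition — that $\Ra$ is the restriction to Galois stable sets of $\Ra_T$ — is immediate from the identity $A\Ra_T C=A\Ra C$ just established, with stability of $A\Ra_T C$ guaranteeing the restriction is well-defined with values in $\gpsi$.

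The step I expect to be the main obstacle is the careful bookkeeping in the second paragraph: verifying that the composite relation $R^{111}$, after passing through a Galois dual, an argument permutation, and another Galois dual, yields membership conditions that match up cleanly with the $T'$-based clause, and in particular confirming that the quantifier alternations and the sortings line up so that the cited \cite[Proposition~3.6]{choiceFreeHA} applies verbatim. This is where sorting errors or an off-by-one in which argument place is being residuated would creep in, so I would take care to track the sorts $Z_1$ versus $Z_\partial$ at each stage and to note explicitly that the residuation of $\bigodot$ being used is at the \emph{second} (the $V$) argument place, consistent with the form of \eqref{Lra membership condition}.
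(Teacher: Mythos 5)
Your strategy is genuinely different from the paper's proof, which is essentially a citation: the residuation of $\bigovert$ and $\Ra$ is quoted from \cite[Proposition~3.10]{choiceFreeHA}, and the second claim is then read off from Theorem~\ref{preservation of residuals} by uniqueness of adjoints. Your opening reduction is correct and clean: $A\bigovert F=(A\bigodot F)''$, so $A\bigovert F\subseteq C$ iff $A\bigodot F\subseteq C$ by Lemma~\ref{basic facts}(6) (as $C$ is stable), iff $F\subseteq A\Ra_T C$ by residuation of $\bigodot$ in its second argument. A self-contained direct proof along these lines is worth having; the problem is how you propose to finish it.

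The gap is in the middle step, the identity $A\Ra_T C=A\Ra C$ for stable $A,C$. You appeal to ``the equivalence of the two satisfaction clauses for implication,'' but the paper derives that equivalence from \cite[Proposition~3.6]{choiceFreeHA} \emph{together with} Proposition~\ref{Ra long and short} --- the very statement you are proving --- so invoking it here is circular. What \cite[Proposition~3.6]{choiceFreeHA} supplies is only item~4 of \eqref{Ra prop}, the $T'$-characterization of $A\Ra C$; the $R^{111}$-characterization in \eqref{Ra equivalent characterizations} is a consequence of this proposition, not an input to it. The identity therefore has to be computed from Definition~\ref{derived relations defn} directly, and the computation is not mere sort-bookkeeping. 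Writing $x\in A\Ra_T C$ as: for all $u\in A$, $R^{111}ux\subseteq C$, and unwinding $R^{111}ux={}\rperp\{v\in Z_\partial\midsp xT'uv\}$ and $C={}\rperp(C')$, one must pass from ${}\rperp\{v\midsp xT'uv\}\subseteq{}\rperp(C')$ to $C'\subseteq\{v\midsp xT'uv\}$, i.e.\ to $\forall v(C\upv v\lra xT'uv)$, which matches \eqref{Ra prop}(4). That reversal is licensed only because the section $\{v\in Z_\partial\midsp xT'uv\}$ of the Galois dual relation is a Galois set --- that is, by the \emph{smoothness} of $T$ (Definition~\ref{smooth defn}, axiom (F4)) together with Lemma~\ref{basic facts}(6); without it you obtain only one inclusion. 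Your list of anticipated pitfalls (sorting, quantifier alternation, which argument place is residuated) misses this, and it is the one genuinely load-bearing hypothesis in the step you yourself single out as the main obstacle. Once smoothness is inserted at that point, your argument goes through and yields both claims of the proposition.
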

\begin{proof}
  The first claim of this Proposition is part of the second claim of \cite[Proposition~3.10]{choiceFreeHA} and we refer the reader to it for the proof. The second claim follows by Theorem~\ref{preservation of residuals}, which is a special instance of \cite[Theorem~3.14]{duality2}.
\end{proof}

Given equations \eqref{Lra membership condition}, \eqref{Ra prop} and Proposition~\ref{Ra long and short}, we have the following equivalent characterizations of the stable set $A\Ra C$
\begin{equation}\label{Ra equivalent characterizations}
  \begin{array}{ccl}
  A\Ra C &=& \{x\in Z_1\midsp\forall u\in Z_1\forall y\in Z_\partial(u\in A\wedge C\upv y\lra xT'uy)\}\\
  &=& \{x\in Z_1\midsp\forall u,z\in Z_1(u\in A\wedge zR^{111}ux\lra z\in C)\}.
  \end{array}
\end{equation}

\begin{rem}\label{residual of tright}
  Since the sorted operator $\largetriangleright:\powerset(Z_1)\times\powerset(Z_\partial)\lra\powerset(Z_\partial)$ is completely additive in the left argument place, it has a left adjoint $\Da:\powerset(Z_1)\times\powerset(Z_\partial)\lra\powerset(Z_\partial)$, i.e. 
\[
U\largetriangleright V\subseteq Y\mbox{ iff }V\subseteq U\Da Y,\hskip5mm\mbox{ for } U\subseteq Z_1 \mbox{ and } V,Y\subseteq Z_\partial.
\]
It is defined by $U\Da Y=\bigcup\{W\subseteq Z_\partial\midsp U\largetriangleright W\subseteq Y\}$. 

We have defined $\Mtright$ as the closure of the restriction of $\largetriangleright$ to Galois sets. In other words, $A\Mtright B=(A\largetriangleright B)''$. 

By Proposition~\ref{preservation of residuals}, the restriction of $\Da$ to Galois sets is the right residual of $\Mtright$ in its first argument place, i.e. $A\Mtright B\subseteq D$ iff $B\subseteq A\Da D$, and it is defined by 
$A\Da D=\{y\in Z_\partial\midsp A\Da\Gamma y\subseteq D\}$. 

Recall that in \eqref{Ra def} we defined $A\Ra C=(A\Mtright C')'=(A\largetriangleright C')'$. So we have
\begin{tabbing}
$F\subseteq A\Ra C$ \hskip4mm\=iff\hskip2mm\= $F\subseteq (A\Mtright C')'$\\
\>iff\> $A\Mtright C'\subseteq F'$\\
\>iff\> $C'\subseteq A\Da F'$\\
\>iff\> $(A\Da F')'\subseteq C$
\end{tabbing}
By uniqueness of adjoints, the stable set operator $\bigovert$ can be alternatively defined from the residual $\Da$ of $\Mtright$, by $A\bigovert F=(A\Da F')'$.
\end{rem}

\subsubsection{Necessity}
\label{necessity section}
The relation $R_\Box=R^{\partial\partial}_\Box$ generates an image operator $F_\Box=\ldminus :\powerset(Z_\partial)\lra\powerset(Z_\partial)$ (therefore additive and normal), 
defined as in equation~\eqref{sorted image ops}, instantiating  to 
\begin{equation}\label{box image op}
\ldminus V \;=\; \{v\in Z_\partial\midsp\exists y\in Z_\partial(vR_\Box y\mbox{ and }y\in V)\}.
\end{equation}

The closure of its restriction to Galois sets (according to its distribution type) gives the sorted Galois set operator 
\begin{eqnarray*} 
\overline{F}_\Box=\ldminus^{\!\!\prime\prime} &:& \mathcal{G}(Z_\partial)\lra\mathcal{G}(Z_\partial)\hskip1cm \ldminus^{\!\!\prime\prime} B=(\ldminus B)''.
\end{eqnarray*}
By Theorem~\ref{dist from section stability},  $\overline{F}_\Box$ distributes over arbitrary joins of co-stable sets (members of the complete lattice $\mathcal{G}(Z_\partial)$). Define $\lbb:\gpsi\lra\gpsi$ from $\ldminus^{\!\!\prime\prime}$  by setting 
\begin{eqnarray}
\lbb A &=&(\ldminus^{\!\!\prime\prime} A')'=(\ldminus A')'''=(\ldminus A')'\label{lbb def}.
\end{eqnarray}

By Theorem~\ref{dist from section stability},  $\lbb\left(\bigcap_{q\in Q}A_q\right)=\bigcap_{q\in Q}\lbb(A_q)$. 

A more direct proof of the distribution property can be however given for the particular case of $\lbb$, as we show in the sequel. The interest of the alternative proof is that it shows that $\lbb$ is the restriction to stable sets of a multiplicative operation $\lbminus_R$ on $\powerset(Z_1)$ and it provides a direct proof of soundness for $\mathbf{\Lambda}_\Box$. We explain below.

Let $R'_\Box=R^{1\partial}_\Box$ be the Galois dual of $R_\Box$ and $R''_\Box\subseteq Z_1\times Z_1$ its double dual defined by $xR''_\Box =(xR'_\Box)'$, for any $x\in Z_1$. 

The relation $R''_\Box$ generates a dual image operator $\lbminus_R:\powerset(Z_1)\lra\powerset(Z_1)$ 
\[
\lbminus_R U=\{x\midsp \forall u(xR''_\Box u\lra u\in U)\}=\{x\midsp xR''_\Box\subseteq U\}.
\] 

\begin{prop}\label{box as restriction}
$\lbb$ is the restriction of  $\lbminus_R$ to stable sets.
\end{prop}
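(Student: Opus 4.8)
The plan is to unwind the definitions of $\lbb$ and $\lbminus_R$ and show that on a Galois stable set $A$ they produce the same set. First I would recall that by \eqref{lbb def} we have $\lbb A = (\ldminus A')'$, where $\ldminus$ is the image operator of $R_\Box$ given by \eqref{box image op} and $A' = A\rperp$ is the (co-stable) Galois dual of $A$. On the other side, $\lbminus_R U = \{x\in Z_1 \midsp xR''_\Box \subseteq U\}$ where $R''_\Box$ is the double Galois dual of $R_\Box$, defined by $xR''_\Box = \lperp(xR'_\Box)$ and $R'_\Box y = \lperp(R_\Box y) = (R_\Box y)'$. So the task reduces to the membership equivalence: for $x\in Z_1$ and $A$ a stable set,
\[
x\in (\ldminus A')' \quad\Longleftrightarrow\quad xR''_\Box \subseteq A.
\]

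Next I would compute the left-hand side directly. By the definition of the Galois map $\rperp(\;)$ applied to the co-stable set $\ldminus A'\subseteq Z_\partial$, we have $x\in(\ldminus A')'$ iff $x\upv v$ for every $v\in \ldminus A'$, i.e. iff $x\upv v$ for every $v$ such that there exists $y\in Z_\partial$ with $vR_\Box y$ and $y\in A'$. Rearranging the quantifiers, this says: for all $y\in A'$ and all $v$ with $vR_\Box y$, we have $x\upv v$; equivalently, for all $y\in A'$, $x\upv z$ for all $z\in R_\Box y$, which is exactly $x\in\lperp(R_\Box y)=R'_\Box y$ (writing $R'_\Box y$ for the $y$-section of the Galois dual relation). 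So $x\in(\ldminus A')'$ iff $R'_\Box y$ holds of $x$ for every $y\in A'$, i.e. iff $y\in xR'_\Box$ for every $y\in A'$, i.e. iff $A'\subseteq xR'_\Box$. Taking Galois duals and using antitonicity of $\lperp(\;)$ together with $A''=A$ (stability) and the fact that $xR''_\Box = \lperp(xR'_\Box)$ is already a Galois set, $A'\subseteq xR'_\Box$ is equivalent to $\lperp(xR'_\Box)\subseteq \lperp(A') = A''=A$, i.e. to $xR''_\Box\subseteq A$. This is precisely the right-hand side, so $\lbb A = \lbminus_R A$ for all stable $A$.

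The one point requiring care — the main obstacle, though a mild one — is the step converting $A'\subseteq xR'_\Box$ into $xR''_\Box\subseteq A$: this uses that $xR''_\Box = \lperp(xR'_\Box)$ and that $\lperp(\;)$ is order-reversing with $\lperp(A') = A'' = A$ (here stability of $A$ is essential), plus $W\subseteq G$ iff $W''\subseteq G$ for Galois $G$ (Lemma~\ref{basic facts}(6)) to handle the non-stable intermediate set $xR'_\Box$. One also uses smoothness of $R_\Box$ implicitly to know these double-dual sections are the genuine Galois sets entering the semantic clause, but the computation above only needs the Galois-connection identities and stability of $A$. I would present the argument as the chain of equivalences displayed above, invoking Lemma~\ref{basic facts} for the routine Galois-set manipulations, and conclude that $\lbminus_R$ restricts to $\lbb$ on $\gpsi$.
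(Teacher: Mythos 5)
Your computation is correct and follows essentially the same chain of equivalences as the paper's own proof: unfold $x\in(\ldminus A')'$, rearrange the quantifiers to get $\forall v\,(A\upv v\lra xR'_\Box v)$, i.e.\ $A'\subseteq xR'_\Box$, and then pass to $xR''_\Box\subseteq A$. One caveat on the final step: the direction $xR''_\Box\subseteq A\Rightarrow A'\subseteq xR'_\Box$ is not delivered by Lemma~\ref{basic facts}(6) together with stability of $A$ (that lemma concerns $W''\subseteq G$ versus $W\subseteq G$, whereas here the possibly non-stable set sits on the right of the inclusion) --- what is actually needed is that the section $xR'_\Box$ is itself a Galois set, i.e.\ the smoothness of $R_\Box$ that you describe as playing only an ``implicit'' role is essential there.
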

\begin{proof}
By definition, $\lbb A=(\overline{F}_\Box A')'=(F_\Box A')'''=(F_\Box A')'$. Since $F_\Box$ is the image operator generated on $\powerset(Z_\partial)$ by the relation $R_\Box\subseteq Z_\partial\times Z_\partial$, we obtain from $\lbb A=(F_\Box A')'$ that
\begin{tabbing}
  $x\in\lbb A$\hskip3mm \= iff \hskip3mm\=  $\forall y(\exists v(yR_\Box v$ and $A\upv v)\lra x\upv y)$    \\
   \> iff \> $\forall v(A\upv v\lra \forall y(yR_\Box v \lra y\in\{x\}\rperp))$\\
   \> iff \> $\forall v(A\upv v\lra (R_\Box v\subseteq \{x\}\rperp))$ \\
   \> iff \>  $\forall v(A\upv v\lra(\Gamma x\subseteq R'_\Box v))$\\
   \>iff\> $\forall v(A\upv v\lra xR'_\Box v)$\\
   \>iff\> $A'\subseteq xR'_\Box$\\
   \>iff\> $xR''_\Box\subseteq A$
\end{tabbing}
and given the definition of $\lbminus_R$, the proof is complete. Note that  $R_\Box v\subseteq \{x\}\rperp$ is equivalent to $\Gamma x\subseteq R'_\Box v$ by the fact that $R_\Box v$ is a closed element.
\end{proof}

\begin{prop}\label{soundness for box}
The single-operator minimal logic $\mathbf{\Lambda}_\Box$ is sound in the relational semantics.
\end{prop}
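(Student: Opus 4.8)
\textbf{Proof plan for Proposition~\ref{soundness for box}.}

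The plan is to mirror the structure of the soundness proof for implication (Proposition~\ref{soundness for implication}): since the axiomatization of $\mathbf{\Lambda}_\Box$ only enforces the monotonicity, distribution (here: $\lbb$ returns meets, distributing over meets), and normality axioms for the $\Box$-operator, it suffices to verify that the stable-set operation $\lbb:\gpsi\lra\gpsi$ defined in equation~\eqref{lbb def} has exactly the distribution type $\delta_\Box=(\partial;\partial)$ and is normal. Concretely, I would check three things: (i) $\lbb$ is monotone; (ii) $\lbb\bigl(\bigcap_{q\in Q}A_q\bigr)=\bigcap_{q\in Q}\lbb A_q$, i.e. $\lbb$ distributes over arbitrary (hence finite, hence the binary) meets of stable sets, which covers axiom (M$\Box$); and (iii) $\lbb Z_1=Z_1$, the normality axiom (N$\Box$) (recall $Z_1$ is the top element of $\gpsi$). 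Given these, soundness of every axiom and rule of $\mathbf{\Lambda}_\Box$ in any model over a frame $\mathfrak{F}=(s,Z,I,R^{\partial\partial}_\Box,R^{11}_\Diamond,T^{\partial 1\partial})$ follows, since validity of a sequent $\varphi\proves\psi$ in a model just amounts to $\val{\varphi}\subseteq\val{\psi}$ in $\gpsi$.

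For the distribution property there are two routes, and I would take the short one made available by Proposition~\ref{box as restriction}. That proposition shows $\lbb$ is the restriction to stable sets of the dual image operator $\lbminus_R$ on $\powerset(Z_1)$ generated by $R''_\Box$, and $\lbminus_R U=\{x\midsp xR''_\Box\subseteq U\}$ is patently completely multiplicative on $\powerset(Z_1)$: $\lbminus_R\bigcap_i U_i=\bigcap_i\lbminus_R U_i$ and $\lbminus_R Z_1=Z_1$. It remains only to observe that the meet in $\gpsi$ of a family of stable sets is their intersection (the intersection of Galois sets is Galois, as recorded in the frame setup), so $\lbb\bigcap_{q}A_q=\lbminus_R\bigcap_q A_q=\bigcap_q\lbminus_R A_q=\bigcap_q\lbb A_q$, and likewise $\lbb Z_1=\lbminus_R Z_1=Z_1$ since $Z_1$ is a stable set. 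Monotonicity is immediate from either description (it also follows from the meet-distribution). Alternatively, distribution over meets is already guaranteed abstractly by Theorem~\ref{dist from section stability} together with the definition of $\lbb$ via the Galois connection (which is a dual isomorphism $\gpsi\iso\gphi^{\mathrm{op}}$, so it turns the join-distribution of $\overline{F}_\Box$ into meet-distribution of $\lbb$); I would mention this as the uniform argument and keep Proposition~\ref{box as restriction} as the more transparent direct one.

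I do not expect a serious obstacle here; the only point requiring a word of care is bookkeeping about what "$\lbb$ returns meets" means at the level of sequents. Since the distribution type is $(\partial;\partial)$, the relevant syntactic axioms in Table~\ref{minimal proof system} instantiate (for $f_j=\bb$, $j_k=\partial=j_{n(j)+1}$) to $\bb\varphi\wedge\bb\psi\proves\bb(\varphi\wedge\psi)$ and the normality axiom $\top\proves\bb\top$, plus the monotonicity rule for a $\partial$–$\partial$ operator; one should note the monotonicity rule for such an operator is the \emph{order-preserving} one (not antitone), which matches: from $\psi\proves\vartheta$ we get $\bb\psi\proves\bb\vartheta$, exactly monotonicity of $\lbminus_R$. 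The converse inclusion $\bb(\varphi\wedge\psi)\proves\bb\varphi\wedge\bb\psi$ needed for the equality (M$\Box$) is just monotonicity applied to the two projections. So verifying (i)–(iii) above discharges all of $\mathbf{\Lambda}_\Box$, and the proposition follows; I would close by noting that this together with Proposition~\ref{soundness for implication} and the forthcoming Proposition~\ref{soundness for diamond} gives soundness of the fusion $\mathbf{\Lambda}^\ra_{\Box\Diamond}$.
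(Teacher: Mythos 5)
Your proposal is correct and follows essentially the same route as the paper: the paper's proof likewise invokes Proposition~\ref{box as restriction} to see $\lbb$ as the restriction to stable sets of the classical dual image operator $\lbminus_R$ generated by $R''_\Box$, from which complete distribution over intersections and normality are immediate, and then reads off soundness from $\val{\Box\varphi}=\lbb\val{\varphi}$. Your extra bookkeeping about which syntactic axioms of Table~\ref{minimal proof system} are being discharged, and the remark that the abstract route via Theorem~\ref{dist from section stability} is also available, matches the surrounding discussion in the paper.
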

\begin{proof}
Since $\lbminus_R$ distributes over arbitrary intersections of subsets of $Z_1$, as a classical dual image operator, so does $\lbb$, over any intersections of stable sets. Normality of $\lbb$ follows from normality of the classical dual image operator $\lbminus_R$. It follows from the satisfaction clause for $\Box$ in Table~\ref{sat} that $\val{\Box\varphi}=\lbb\val{\varphi}$, hence soundness follows.
\end{proof}

For later use, we list here a Lemma on the monotonicity properties of the relation $R''_\Box$.
\begin{lemma}
\label{monotonicity props of R double prime}
The relation $R''_\Box$ is decreasing in the first and increasing in the second argument place, i.e. ${\leq}\circ R''_\Box\circ{\leq}\subseteq R''_\Box$.
\end{lemma}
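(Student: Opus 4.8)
The plan is to prove the two one-sided monotonicity claims separately --- that $R''_\Box$ is antitone (decreasing) in its first argument and monotone (increasing) in its second --- and then observe that their conjunction is exactly the stated inclusion ${\leq}\circ R''_\Box\circ{\leq}\subseteq R''_\Box$. Unwinding the definitions, $xR''_\Box u$ holds iff $u\in(xR'_\Box)'$, where $xR'_\Box=\{v\in Z_\partial\midsp xR'_\Box v\}=\{v\in Z_\partial\midsp x\in(R_\Box v)'\}$. Two facts about the Galois connection will be used repeatedly: first, $R'_\Box v=(R_\Box v)'$ is, being the prime (Galois dual) of a subset of $Z_\partial$, a Galois stable set, and likewise $xR''_\Box=(xR'_\Box)'$ is the prime of a subset of $Z_\partial$, hence again a member of $\gpsi$; second, by Lemma~\ref{basic facts}(3) every Galois set is an upset with respect to $\leq$ (the frame being separated, $\preceq$ coincides with $\leq$).

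Monotonicity in the second argument is then immediate: since $xR''_\Box=(xR'_\Box)'\in\gpsi$ is an upset, $u\in xR''_\Box$ and $u\leq u'$ give $u'\in xR''_\Box$, i.e. $xR''_\Box u'$. For antitonicity in the first argument, the crucial intermediate step is that $x\mapsto xR'_\Box$ is monotone, i.e. $x'\leq x$ implies $x'R'_\Box\subseteq xR'_\Box$: indeed, for each $v\in Z_\partial$ the set $(R_\Box v)'$ is Galois, hence an upset, so $x'\in(R_\Box v)'$ and $x'\leq x$ force $x\in(R_\Box v)'$, which gives $\{v\midsp x'R'_\Box v\}\subseteq\{v\midsp xR'_\Box v\}$. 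Applying the order-reversing Galois map $(\;)'$ then yields $(xR'_\Box)'\subseteq(x'R'_\Box)'$, that is, $xR''_\Box u$ implies $x'R''_\Box u$ whenever $x'\leq x$.

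Finally I would combine the two: given $x'\leq x$, $xR''_\Box u$ and $u\leq u'$, antitonicity in the first argument yields $x'R''_\Box u$, and then monotonicity in the second (applied at $x'$) yields $x'R''_\Box u'$, which is the desired inclusion. I do not anticipate any genuine obstacle; the argument is essentially bookkeeping with the Galois connection. The only point deserving a moment's attention is why the variance is ``mixed'' (antitone in the first place, monotone in the second): this is because the order-reversing Galois map $(\;)'$ is applied exactly once in passing from the monotone assignment $x\mapsto xR'_\Box$ to $xR''_\Box$, whereas upward closure inside a fixed Galois set needs no such reversal.
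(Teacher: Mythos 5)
Your proof is correct and follows essentially the same route as the paper's: both arguments rest on the two observations that $xR''_\Box=(xR'_\Box)'$ is a Galois set (hence an upset, giving monotonicity in the second place) and that each $R'_\Box v=(R_\Box v)'$ is a Galois set (hence an upset, giving $x'R'_\Box\subseteq xR'_\Box$ for $x'\leq x$ and thus antitonicity in the first place). The only cosmetic difference is that you conclude the first-place claim by applying the order-reversing map $(\;)'$ to the inclusion of sections, where the paper verifies the same implication pointwise.
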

\begin{proof}
By definition, $zR''_\Box=(zR'_\Box)'$, hence it is a Galois set and therefore increasing. For the first argument place, assume $u\leq z$ and let $zR''_\Box p$. To show that $uR''_\Box p$, let $y\in Z_\partial$ be arbitrary and such that $uR'_\Box y$. It suffices to get $p\upv y$. The relation $R'_\Box$, defined by setting for any $v\in Z_\partial$ $R'_\Box v=(R^{\partial\partial}_\Box v)'$ is a Galois set, hence increasing. Then from the hypotheses that $u\leq z$ and $uR'_\Box y$ we obtain $zR'_\Box y$. By the hypothesis that $zR''_\Box p$, the conclusion $p\upv y$ follows.
\end{proof}

Since $\lbminus$ is the restriction of $\lbminus_R$ to stable sets, we hereafter  just write $\lbminus$ for either. For stable sets we typically use uppercase letters $A,C,F$ from the beginning of the alphabet (and $B,D,E$ for co-stable sets) and we use uppercase letters $U,V,W$ from the end of the alphabet for arbitrary subsets, therefore no confusion should hopefully arise. However, in some cases we may restore the  subscript $R$, for clarity purposes.

\subsubsection{Possibility}
\label{possibility section}
\begin{prop}\label{soundness for diamond}
The single-operator minimal logic $\mathbf{\Lambda}_\Diamond$ is sound in the relational semantics.
\end{prop}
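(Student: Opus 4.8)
The plan is to mirror, for the diamond operator, exactly the route already taken for $\Box$ in Proposition~\ref{soundness for box}. The minimal single-operator logic $\mathbf{\Lambda}_\Diamond$ only enforces the normality axiom $\dd\dfootl\proves\dfootl$ (dually $\ufootl\proves\dd\ufootl$ under the order-dual reading) together with the distribution axiom $\dd(\psi\vee\chi)\proves\dd\psi\vee\dd\chi$, so it suffices to exhibit the stable-set operator interpreting $\Diamond$ as a completely additive normal operator on $\gpsi$. I would start from the image operator $F_\Diamond=\ldd$ generated on $\powerset(Z_1)$ by the frame relation $R^{11}_\Diamond$ via equation~\eqref{sorted image ops}, namely $\ldd U=\bigcup_{x\in U}R_\Diamond x$, which is visibly completely additive and normal as a classical J\'{o}nsson-Tarski image operator.

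Next I would form $\overline{F}_\Diamond=\ldd^{\,\prime\prime}$, the closure of the restriction of $\ldd$ to stable sets, $\ldd^{\,\prime\prime}A=(\ldd A)''$, which by Theorem~\ref{dist from section stability} (using smoothness of $R_\Diamond$) distributes over arbitrary joins of stable sets, hence is residuated and normal. Here, because $\sigma(R_\Diamond)=(1;1)$, no composition with the Galois connection is needed: $\overline{F}^1_\Diamond=\overline{F}_\Diamond$ already maps $\gpsi$ to $\gpsi$, so we simply set $\dd A=(\ldd A)''$. Parallel to the $\Box$ case, I would then verify that this agrees with the satisfaction clause of Table~\ref{sat}: from $y\dforces\dd\varphi$ iff $\forall v\in Z_\partial(yR''_\Diamond v\lra v\dforces\varphi)$ one computes $\yvval{\dd\varphi}=(\ldd\val{\varphi})\rperp$, equivalently $\val{\dd\varphi}=(\ldd\val{\varphi})''=\dd\val{\varphi}$, using the definitions $R'_\Diamond z=(R_\Diamond z)\rperp$ and $yR''_\Diamond=\lperp(yR'_\Diamond)$ exactly as in the displayed chain of equivalences in Proposition~\ref{box as restriction}. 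One may also record, if desired, that $\dd$ is the restriction to stable sets of the additive operator $\ldd_R$ generated on $\powerset(Z_1)$ by the double dual $R''_\Diamond$, though this is strictly a remark and not needed for soundness.

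Finally, soundness is immediate: $\dd$ distributes over arbitrary joins of stable sets and is normal, so in particular it validates $\dd(\psi\vee\chi)\proves\dd\psi\vee\dd\chi$ and $\dd\dfootl\proves\dfootl$, and the monotonicity rule for $\Diamond$ follows from distribution; since these are the only axioms/rules of $\mathbf{\Lambda}_\Diamond$, every theorem is valid in every frame $\mathfrak{F}=(s,Z,I,R^{\partial\partial}_\Box,R^{11}_\Diamond,T^{\partial 1\partial})$. I do not anticipate a genuine obstacle here; the only point requiring a little care is the sorting bookkeeping — confirming that for a relation of sort $(1;1)$ the extracted single-sorted operator in equation~\eqref{2single-sorted} involves no Galois-connection composition and lands directly in $\gpsi$ — and checking that the double-dual construction $R''_\Diamond$ used in the Table~\ref{sat} clause reproduces $(\ldd(\cdot))''$ on stable sets, which is the diamond-side analogue of the calculation already carried out for $\Box$. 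This is genuinely routine given Theorem~\ref{dist from section stability} and the $\Box$ case, so the proof can be kept brief, essentially by appeal to those.
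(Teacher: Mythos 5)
Your proof is correct and follows essentially the same route as the paper's: define the image operator generated by $R^{11}_\Diamond$ on $\powerset(Z_1)$, take the closure of its restriction to stable sets, invoke Theorem~\ref{dist from section stability} for complete additivity (and get normality from the underlying image operator), and reconcile this with the co-satisfaction clause of Table~\ref{sat} via the double dual $R''_\Diamond$ exactly as in the computation for $\Box$. The only discrepancy is notational: the paper writes $\ldvert$ for the raw image operator and reserves $\ldd$ for the closure of its restriction to stable sets, whereas you use $\ldd$ for both.
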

\begin{proof}
The relation  $R_\Diamond=R^{11}_\Diamond$ generates an image operator $F_\Diamond=\ldvert :\powerset(Z_1)\lra\powerset(Z_1)$ (therefore additive and normal) defined as in equation~\eqref{sorted image ops}, instantiating  to 
\begin{equation}\label{diamond image op}
\ldvert U \;=\; \{z\in Z_1\midsp\exists x\in Z_1(zR_\Diamond x\mbox{ and }x\in U)\}.
\end{equation}

The closure of its restriction to Galois sets (according to its distribution type) gives the sorted Galois set operator 
$\overline{F}_\Diamond=\ldd :\mathcal{G}(Z_1)\lra\mathcal{G}(Z_1)$, defined by $\ldd A=(\ldvert A)''$.

By Theorem~\ref{dist from section stability}, $\overline{F}_\Diamond$ distributes over arbitrary joins of stable sets (members of the complete lattice $\mathcal{G}(Z_1)$), i.e.  $\ldd\left(\bigvee_{q\in Q}A_q\right)=\bigvee_{q\in Q}\ldd(A_q)$. Normality of $\ldd$ follows from normality of the classical image operator $\ldvert$.

The semantic clause for diamond is stated in terms of co-satisfaction (refutation), hence we compute the dual operator (in $\gphi$) which is defined by $\overline{F}^\partial_\Diamond(B)=(\overline{F}_\Diamond(B'))'$. Therefore, $\yvval{{\Diamond}\varphi}=\val{\Diamond\varphi}'=(\ldd\val{\varphi})'= (\ldvert\val{\varphi})'''=(\ldvert\val{\varphi})'$. By a computation similar to that in the proof of Proposition~\ref{box as restriction} we obtain $\val{\Diamond\varphi}'=\yvval{{\Diamond}\varphi}=\{y\midsp\forall v\in Z_\partial(yR''_\Diamond v\lra v\in\yvval{\varphi})\}$ and this establishes soundness.
\end{proof}

The double dual relation $R''_\Diamond\subseteq Z_\partial\times Z_\partial$, used in the above proof, is defined from $R^{11}_\Diamond$ in a similar way to $R''_\Box$. Namely, we let $R'_\Diamond\subseteq Z_\partial\times Z_1$ be the Galois dual of $R_\Diamond$ and we set $yR''_\Diamond=(yR'_\Diamond)'$.

For later use we define in $\gphi$ a box operator by $\lbvert B=(\ldvert B')'$. Letting also $\lbvert_R$ be the dual image operator generated in $\powerset(Z_\partial)$ be the relation $R''_\Diamond$, the proof of the following result is completely similar to the proof of Proposition~\ref{box as restriction} and it can be safely left to the reader. 
\begin{prop}
  \label{dual box as restriction}
  The co-stable set operator $\lbvert$ is the restriction of the dual image operator $\lbvert_R$ generated by the relation $R''_\Diamond$.\telos
\end{prop}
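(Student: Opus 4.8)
Proposition~\ref{dual box as restriction}: the co-stable set operator $\lbvert$, defined on $\gphi$ by $\lbvert B=(\ldvert B')'$, is the restriction to co-stable sets of the dual image operator $\lbvert_R:\powerset(Z_\partial)\lra\powerset(Z_\partial)$ generated by the relation $R''_\Diamond$, where $R''_\Diamond$ is obtained from $R_\Diamond^{11}$ by first taking the Galois dual $R'_\Diamond\subseteq Z_\partial\times Z_1$ and then setting $yR''_\Diamond=(yR'_\Diamond)'$.

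\begin{proof}[Proof plan]
The plan is to mimic, \emph{mutatis mutandis}, the computation carried out in the proof of Proposition~\ref{box as restriction}, exploiting the evident symmetry between the box case (relation of sort $\partial\partial$, operator landing in $\gpsi$) and the present case (relation of sort $11$, operator landing in $\gphi$). First I would unfold the definition: since $\ldvert$ is the image operator generated on $\powerset(Z_1)$ by $R_\Diamond\subseteq Z_1\times Z_1$ via \eqref{diamond image op}, and $\lbvert B=(\ldvert B')'$, I would fix $y\in Z_\partial$ and a co-stable set $B$ and chase the membership condition $y\in\lbvert B$ through the definitions. Writing it out, $y\in(\ldvert B')'$ means $\forall x\in Z_1\,(x\in\ldvert B'\lra y\upv x)$, i.e. $\forall x\in Z_1\,(\exists z\in Z_1(xR_\Diamond z\wedge B\upv z)\lra x\upv y)$. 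Swapping the order of the quantifiers and using that $x\upv y$ iff $x\in\{y\}\rperp$ (so $R_\Diamond z\subseteq\{y\}\rperp$ iff $\Gamma y$ — here read as the closed element $\{y\}''$ in $\gphi$ — sits below $R'_\Diamond z$, because $R_\Diamond z$ is a closed element of $\gpsi$, exactly as in the parenthetical remark closing the proof of Proposition~\ref{box as restriction}), this becomes $\forall z\in Z_1\,(B\upv z\lra yR'_\Diamond z)$, i.e. $B'\subseteq yR'_\Diamond$, i.e. $yR''_\Diamond\subseteq B$ by the definition $yR''_\Diamond=(yR'_\Diamond)'$ and the Galois antitonicity $B'\subseteq yR'_\Diamond\Llra (yR'_\Diamond)'\subseteq B''=B$ (using co-stability of $B$).

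Second, I would recall the definition of $\lbvert_R$: as the dual image operator generated by $R''_\Diamond\subseteq Z_\partial\times Z_\partial$, it is $\lbvert_R W=\{y\in Z_\partial\midsp yR''_\Diamond\subseteq W\}$ for arbitrary $W\subseteq Z_\partial$. Comparing with the chain of equivalences just obtained, for a co-stable $B$ we have $y\in\lbvert B$ iff $yR''_\Diamond\subseteq B$ iff $y\in\lbvert_R B$, which is precisely the claim that $\lbvert$ is the restriction of $\lbvert_R$ to $\gphi$. One should also note (as a sanity check, and parallel to Proposition~\ref{soundness for diamond}) that $\lbvert_R$ distributes over arbitrary intersections as a classical dual image operator, so its restriction does land in $\gphi$ and agrees with the abstractly-defined $\lbvert$; this is immediate and need not be belabored.

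The only step that requires a word of care — and it is the same point flagged at the end of the proof of Proposition~\ref{box as restriction} — is the passage from $R_\Diamond z\subseteq\{y\}\rperp$ to $\Gamma y\subseteq R'_\Diamond z$ (equivalently, to $yR'_\Diamond z$, since $R'_\Diamond z$ is a Galois set and hence an upset, so membership of $y$ is the same as containment of $\Gamma y$). This rests on the facts, recorded in Lemma~\ref{basic facts}, that $R_\Diamond z$ is a closed element $\Gamma(\,\cdot\,)$ of $\gpsi$ (smoothness is not even needed here, only that sections of the original relation generate closed elements via the image operator restricted to singletons — indeed $R_\Diamond z = (\{z\}$-section$)$ and $\ldvert\{z\}=R_\Diamond z$), and that for a closed element $\Gamma x$ one has $\Gamma x\subseteq W'$ iff $x\upv W$ iff $W\subseteq\{x\}'=(\Gamma x)'$, i.e. the Galois-connection adjunction restricted to principal elements. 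Everything else is a routine unfolding identical in shape to the necessity case, which is exactly why the paper delegates it to the reader; the proof sketch above is the content one would fill in if pressed. I therefore expect no genuine obstacle: the main (minor) subtlety is keeping the sorts straight — $R_\Diamond$ lives on $Z_1\times Z_1$, $R'_\Diamond$ on $Z_\partial\times Z_1$, $R''_\Diamond$ on $Z_\partial\times Z_\partial$ — and remembering that closure/co-closure is being applied on the correct side ($Z_\partial$) throughout.
\end{proof}
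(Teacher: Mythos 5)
Your computation is correct and is exactly the argument the paper intends: the paper leaves this proof to the reader as ``completely similar'' to that of Proposition~\ref{box as restriction}, and your chain of equivalences $y\in(\ldvert B')'$ iff $B'\subseteq yR'_\Diamond$ iff $yR''_\Diamond\subseteq B$ is precisely the sort-dualized version of that computation. The only point worth making explicit is that the final biconditional also uses Galois-stability of the section $yR'_\Diamond$ (the standing smoothness assumption on $R_\Diamond$), not only co-stability of $B$ --- but the paper's own proof of Proposition~\ref{box as restriction} elides this in the same way.
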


\section{Frames for Distribution-Free  $\mathbf{K}^\ra_{\Box\Diamond}$}
\label{minimal normal section}
For the fused modal system $\mathbf{\Lambda}_{\Box\Diamond}^\ra$ we have included the classical distribution and normality properties for implication and for both box and diamond in the axiomatization. It is, however, the system $\mathbf{K}_{\Box\Diamond}^\ra=\mathbf{\Lambda}_{\Box\Diamond}^\ra+\{\Box(p\ra q)\proves\Box p\ra\Box q\}$, which is the minimal distribution-free normal modal logic that raises some issues of interest.  

For distribution-free systems, the issue in question is, roughly, to find a frame class that will allow approximating the box operator both from above (upper box) and from below (lower box). In the canonical extension of a lattice, this issue corresponds to the question of whether the $\sigma$ (lower approximation) and $\pi$ (upper approximation) extensions of a lattice map can be shown to be identical.  

\subsection{Intermediate Structure and Lower and Upper MacNeille Extensions for $\lbb$}
\label{intermediate structures}
The set $Q=\{\Gamma x\midsp x\in Z_1\}\cup\{\rperp\{y\}\midsp y\in Z_\partial\}=Q_1\cup Q_\partial$ of principal (closed and open) elements of $\gpsi$ is partially ordered by inclusion. Let $\overline{Q}$ be its Dedekind-MacNeille completion. $\overline{Q}$ can be constructed as the family of sets that are stable under the Dedekind-MacNeille Galois connection $U=(U^\ell)^u$, where $U^u$, $U^\ell$ are the sets of upper bounds and lower bounds, respectively, of $U$. $\overline{Q}$ is uniquely characterized as the completion of $Q$ in which $Q$ is both join and meet dense  \cite[Theorem~7.41]{hilary_davey_2002}. Any Galois stable set can be equivalently approximated both from above and from below, since it is both the join of closed elements below it ($A=\bigvee_{x\in A}\Gamma x=\bigcup_{x\in A}\Gamma x$) and the meet of open elements in which it is contained ($A=\bigcap_{A\upv y}\rperp\{y\}=\bigcap\{\rperp\{y\}\midsp A\subseteq{}\rperp\{y\}\}$). It follows from this that $\overline{Q}\iso\gpsi$. The structure $Q=Q_1\cup Q_\partial$ will be referred to as the {\em intermediate structure}. 

What is now needed is to be able to approximate operations on stable sets both from above, taking their {\em upper MacNeille extension}, and from below, taking their {\em lower MacNeille extension},  having first defined them in the intermediate structure.

Recall that for any stable sets $A,C\in\gpsi$,   
\begin{tabbing}
$A\Ra C$ \hskip2mm\==\hskip1mm\= $\bigcap_{x\in A}^{C\upv y}(\Gamma x\Ra{}\rperp\{y\})$ \hskip2.5cm \= by \eqref{Ra prop} (\cite[Proposition~3.6]{choiceFreeHA})\\
$\lbb A$\>=\>$\lbb\bigcap_{A\upv y}{}\rperp\{y\}=\bigcap_{A\upv y}\lbb\rperp\{y\}$ \> by Proposition~\ref{box as restriction}\\
${\ldd}A$ \>=\> $\bigvee_{x\in A}{\ldd}\Gamma x$ \> by Theorem~\ref{dist from section stability}.
\end{tabbing}
Since $A\upv y$ iff $A\subseteq{}\rperp\{y\}$, the box operator is approximated from above (by computing first its values on open elements that cover its argument $A$). 
We write $\lbb^u$ for $\lbb$ when we need to distinguish it from the lower approximation $\lbb^\ell$ of the box operator, technically defined in Definition~\ref{lower box defn} so as to satisfy $\lbb^\ell\Gamma x=\lbb^u\Gamma x$, $\lbb^\ell\rperp\{y\}=\lbb^u\rperp\{y\}$ and $\lbb^\ell A=\bigvee_{\Gamma x\subseteq A}\lbb^\ell\Gamma x$. The extensions $\lbb^\ell, \lbb^u$ are the {\em lower and upper MacNeille extensions} \cite{mai-yde-harding} of their restriction to elements of the intermediate structure.

Membership in each of $\Gamma x{\Ra}\rperp\{y\}, \lbb\rperp\{y\}$ and $\ldd\Gamma x$ is definable by the respective conditions shown below,
\begin{tabbing}
$z\in(\Gamma x\Ra{}\rperp\{y\})$\hskip1cm\= iff\hskip3mm\= $zT'xy$\\
$z\in\lbb\rperp\{y\}$ \>iff\> $zR''_\Box\upv y$\\
$z\in \ldd\Gamma x$ \> iff\> $\forall y(yR''_\Diamond\subseteq\Gamma x\lra z\upv y)$.
\end{tabbing}

To ensure soundness of the K-axiom, it suffices to fulfill the prerequisite that the axiom holds in the intermediate structure and that the upper MacNeille extension of the box operator is contained in its lower extension, $\lbb^u A\subseteq\lbb^\ell A$, for any $A\in\gpsi$.

Note that, if the full complex algebra of the frame is a canonical extension of its subalgebra of clopen elements (assuming enough axioms are included so that we can prove that the clopens form a subalgebra of the full complex algebra of the frame), then it can be shown that identity $\lbb^u=\lbb^\ell$ actually holds. This can be carried out in detail, but as a second-order axiomatization of frames will be necessarily involved, we prefer to take a more direct (and simpler) approach.

\subsection{Refined Frames}
\label{refined frames section}
A frame in the frame class defined in Table~\ref{refined frames axioms} will be called a {\em refined frame}. For our current purposes in this article, $R_j$ is any one of the frame relations $R^{\partial\partial}_\Box$, $R^{11}_\Diamond$, or $T^{\partial 1\partial}$. The relation  $S^{11}_\Box$ mentioned in axiom (F5) is defined below.
\begin{defn}\label{S11box defn}
 Define the relation $S^{11}_\Box$ by $zS^{11}_\Box x$ iff $zR''_\Box\subseteq\Gamma x$. Equivalently, $zS^{11}_\Box x$ iff $z\in\lbb\Gamma x$. 
\end{defn}
\begin{defn}\label{lower box defn}
Define the lower approximation $\lbb^\ell$ of the box operator on stable sets by $\lbb^\ell\Gamma x=S^{11}_\Box x$ and extend to any stable set $A$ by setting $\lbb^\ell A=\bigvee_{x\in A}\lbb^\ell\Gamma x$.
\end{defn}

\begin{table}[!htbp]
\caption{Axiomatization of Refined Frames $\mathfrak{F}=(s,Z,I,(R_j)_{j\in J},\sigma)$}
\label{refined frames axioms}
($R_j\in\{R^{\partial\partial}_\Box,R^{11}_\Diamond,T^{\partial 1\partial}\}$)
\begin{tabbing}
  (F1) \hskip5mm\= The frame is separated. \\
  (F2)  \> Every frame relation $R_j$ is increasing in the first and decreasing in every other\\
  \> argument place. \\
  (F3)  \> For every tuple $\vec{p}$ of the proper sort,  $R_j\vec{p}$ is a closed element $\Gamma(\widehat{j}(\vec{p}))$\\
  (F4) \> Every frame relation $R_j$ is smooth.\\
  (F5) \> For each $x\in Z_1$, the section $S^{11}_\Box x$ of the relation defined by $zS^{11}_\Box x$ iff $zR''_\Box\subseteq\Gamma x$ is \\
  \>a closed element of $\gpsi$\\
  (F6) \> (Restricted join-distributivity of $\lbb$)
For all stable sets $A$, $\lbb\bigvee_{x\in A}\Gamma x\subseteq\bigvee_{x\in A}\lbb\Gamma x$.
\end{tabbing}
\hrule
\end{table}

Since $\bigvee_{x\in A}\Gamma x=A$ and $\lbb^u\Gamma x=\{u\midsp uR''_\Box\subseteq\Gamma x\}=\{u\midsp uS^{11}_\Box x\}=S^{11}_\Box x=\lbb^\ell\Gamma x$, axiom (F6) is equivalent to the inclusion $\lbb^uA\subseteq\lbb^\ell A$, for any stable set $A$.

\begin{rem}\label{second-order rem}
Axiom (F6) is second-order. We leave it as an open problem whether a first-order axiomatization of frames validating the K-axiom in the distribution-free case can be found. The situation is quite different for the case of modal extensions of the Intuitionistic propositional calculus, as shown in Section~\ref{distributive/Heyting section}.
\end{rem}

For each $j\in J$, we may define a map $\widehat{j}(\vec{u})=v$ iff $R_j\vec{u}=\Gamma v$, given (F3). By separation (F1) the map $\widehat{j}$  is well-defined, since $v$ is unique, for each tuple $\vec{u}$. In other words,
\begin{equation}\label{point ops and relations}
wR_j\vec{u}\mbox{ iff }\widehat{j}(\vec{u})\leq w, \mbox{ equivalently } R_j\vec{u}=\Gamma(\widehat{j}(\vec{u})).
\end{equation}

\begin{defn}[Point Operators]\label{point operators defn}
Let $\widehat{j}_\ra,\widehat{j}_\Box$ and $\widehat{j}_\Diamond$ be the point operators of equation~\eqref{point ops and relations} corresponding to the relations $T,R_\Box$ and $R_\Diamond$, respectively. To make arguments more readable, we rename them, using mnemonic notation, by setting $\widehat{j}_\ra=\tright,\widehat{j}_\Box=\boxminus$ and $\widehat{j}_\Diamond=\diamondvert$. Given axiom (F5), we let also $S^{11}_\Box x=\Gamma(\boxminus x)$, thus using the point operator $\boxminus$ as a sorted map on $Z_1\cup Z_\partial$.
\end{defn}
It is straightforward to verify that $\tright,\boxminus,\diamondvert$ are monotone in each of their argument places.

\subsection{K-Frames}
\label{K-frames section}
\begin{defn}\label{K-frames defn}
  A  refined frame  $\mathfrak{F}=(s,Z,I,R_\Box,R_\Diamond,R_\ra,\sigma)$ is a {\em K-frame} provided that the additional axiom (FK) below holds
  \begin{tabbing}
  (FK) \hskip5mm\= $\forall x\in Z_1\forall y,v,w\in Z_\partial[vT^{\partial 1\partial}(\boxminus  x)(\boxminus y)\;\wedge\; w\leq x{\triangleright}y)\lra vR^{\partial\partial}_\Box w]$.
  \end{tabbing}
\end{defn}
Recall that $vT^{\partial 1\partial}zy$ holds iff $(z{\triangleright}y)\leq v$.

\begin{prop}
\label{K-prop}
For any stable sets $A,C$ in a K-frame, $\lbb(A\Ra C)\subseteq(\lbb A\Ra\lbb C)$.
\end{prop}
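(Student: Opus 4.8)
The goal is to show $\lbb(A\Ra C)\subseteq(\lbb A\Ra\lbb C)$ in a K-frame. The plan is to unfold both sides using the membership characterizations established earlier and then extract exactly the first-order condition (FK). Recall from \eqref{Ra prop} that $A\Ra C=\bigcap_{x\in A,\,C\upv y}(\Gamma x\Ra{}\rperp\{y\})$ and that $z\in(\Gamma x\Ra{}\rperp\{y\})$ iff $zT'xy$ iff $\neg(\Gamma x\largetriangleright\rperp\{y\})\upv z$; equivalently, using the point operator, the closed element $\Gamma x\largetriangleright\rperp\{y\}$ is (by (F3)) of the form $\Gamma(v)$ for a suitable $v$, and in fact the relevant fact is $vT^{\partial1\partial}xy$ iff $(x\tright y)\leq v$. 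Likewise, $z\in\lbb\rperp\{y\}$ iff $zR''_\Box\upv y$, and $\lbb A=\bigcap_{A\upv y}\lbb\rperp\{y\}$ by Proposition~\ref{box as restriction}.

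\textbf{Step 1: Reduce to a statement about generators.} Since $\lbb$ distributes over arbitrary meets (Theorem~\ref{dist from section stability}) and $A\Ra C=\bigcap_{x\in A,\,C\upv y}(\Gamma x\Ra{}\rperp\{y\})$, we get $\lbb(A\Ra C)=\bigcap_{x\in A,\,C\upv y}\lbb(\Gamma x\Ra{}\rperp\{y\})$. On the other side, $\lbb A\Ra\lbb C=\bigcap_{\lbb A\upv y_1,\ \lbb C\upv y_2}\big({}\rperp(\lbb A\upv)\text{-type term}\big)$; more usefully, I would write $\lbb A\Ra\lbb C=\bigcap_{u\in\lbb A,\ \lbb C\upv v}(\Gamma u\Ra{}\rperp\{v\})$ by \eqref{Ra prop} again. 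So it suffices to fix $u\in\lbb A$ and $v$ with $\lbb C\upv v$, and show that every point in $\lbb(A\Ra C)$ lies in $\Gamma u\Ra{}\rperp\{v\}$, i.e.\ satisfies the relation $T'$ with $u,v$.

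\textbf{Step 2: Translate the hypotheses.} Fix $w\in\lbb(A\Ra C)$ (so $w\in Z_1$), fix $u\in\lbb A$ and $v\in Z_\partial$ with $\lbb C\upv v$. From $u\in\lbb A$ and $\lbb A=\bigcap_{A\upv y}\lbb\rperp\{y\}$: for every $y$ with $A\upv y$ we have $u\in\lbb\rperp\{y\}$, i.e.\ $uR''_\Box\upv y$; choosing $y$ appropriately (a witness for $A$) this will give, via (F5) and Definition~\ref{point operators defn}, a relation of the form $vR^{\partial\partial}_\Box w$ once we also feed in the $C$-side. The key is: $\lbb C\upv v$ means $v$ refutes $\lbb C$, which via the computation in Proposition~\ref{box as restriction} unpacks to a statement $vR''_\Box\not\subseteq C$-complement, i.e.\ there is some $t$ with $vR'_\Box t$ and $C\upv t$ (a refuter of $C$). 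Meanwhile $w\in\lbb(A\Ra C)$ says $wR''_\Box\subseteq A\Ra C$, which after combining with the refuters $x\in A$ (generator of $A$) and $t$ with $C\upv t$ means every point $R''_\Box$-related to $w$ lies in $\Gamma x\Ra{}\rperp\{t\}$, i.e.\ satisfies $T'$ with $(x,t)$. Now I apply the point-operator dictionary: $\boxminus x$ and $\boxminus t$ (really $\boxminus$ on the $\partial$-refuter side) are the points such that $S^{11}_\Box x=\Gamma(\boxminus x)$, and one shows $vT^{\partial1\partial}(\boxminus x)(\boxminus y)$ holds from the assembled data, where $y$ is chosen so that $w\leq x\tright y$. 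This is precisely the antecedent of (FK), whose conclusion $vR^{\partial\partial}_\Box w$ is exactly what delivers $w\in\Gamma u\Ra{}\rperp\{v\}$ after retranslating via the $\lbb$-semantics.

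\textbf{Step 3: Apply (FK) and conclude.} With the antecedent of (FK) verified for the chosen $x,y,v,w$, axiom (FK) yields $vR^{\partial\partial}_\Box w$, hence $w\in R'_\Box v$... which, traced back through $\lbb A\upv v$-type reasoning and the characterization $z\in\lbb\rperp\{v\}$ iff $zR''_\Box\upv v$, gives $wT'uv$, i.e.\ $w\in\Gamma u\Ra{}\rperp\{v\}$. Since $u\in\lbb A$ and $v$ with $\lbb C\upv v$ were arbitrary, $w\in\bigcap_{u\in\lbb A,\ \lbb C\upv v}(\Gamma u\Ra{}\rperp\{v\})=\lbb A\Ra\lbb C$, completing the inclusion.

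\textbf{Main obstacle.} The delicate part is Step 2: correctly choosing the witnesses $x\in A$, $t$ (refuting $C$), and $y$ with $w\leq x\tright y$, and checking that the assembled data really does match the antecedent of (FK) verbatim — in particular that the point operators $\boxminus,\tright$ interact as required and that the $R''_\Box$, $R'_\Box$ translations between $\lbb$ on stable sets and relations on points are applied on the correct sort. Monotonicity of $\tright,\boxminus,\diamondvert$ (noted after Definition~\ref{point operators defn}) and Lemma~\ref{monotonicity props of R double prime} will be needed to push the closed/open generators around. Once the bookkeeping of sorts and witnesses is pinned down, the conclusion is an immediate application of (FK); the axiom (FK) was evidently reverse-engineered from exactly this computation.
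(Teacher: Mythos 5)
There is a genuine gap. Your plan never invokes axiom (F6) (the restricted join-distributivity $\lbb\bigvee_{x\in A}\Gamma x\subseteq\bigvee_{x\in A}\lbb\Gamma x$, equivalently $\lbb^uA\subseteq\lbb^\ell A$), yet this is the crux of the whole proposition: the paper's Section~\ref{minimal normal section} is organized precisely around the observation that (FK) alone only yields the K-inclusion on the \emph{intermediate structure} of principal elements, and that lifting it to arbitrary stable sets requires the second-order axiom (F6). Concretely, your Step~2 founders at the point where you need to pass from $u\in\lbb A$ to witnesses drawn from the generators of $A$: $u\in\lbb A$ means $uR''_\Box\subseteq A$, and this does \emph{not} give you a single $x\in A$ with $uR''_\Box\subseteq\Gamma x$ (i.e.\ $\boxminus x\leq u$); the set $uR''_\Box$ may be spread across $A$ without fitting under any one closed generator. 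That is exactly the gap between $\lbb^uA$ and $\bigvee_{x\in A}\lbb\Gamma x$ that (F6) closes --- and even with (F6) the join is a Galois closure of the union, not the union, so a pointwise witness still need not exist. Since $\lbb A$ sits in the antitone argument of $\Ra$, enlarging it shrinks $\lbb A\Ra\lbb C$, so this is not a harmless direction you can wave away.

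There are also sort errors that signal the element-chase is not assembled correctly: the $w$ in (FK) ranges over $Z_\partial$ and its conclusion is $vR^{\partial\partial}_\Box w$ with $v,w\in Z_\partial$, whereas your $w\in\lbb(A\Ra C)\subseteq Z_1$; likewise ``$vR''_\Box$'' for $v\in Z_\partial$ is ill-typed since $R''_\Box\subseteq Z_1\times Z_1$. The paper's route avoids all of this: it first extracts from (FK) the point inequality $\boxminus(x\tright y)\leq\boxminus x\tright\boxminus y$, which gives $\lbb(\Gamma x\Ra{}\rperp\{y\})\subseteq\lbb\Gamma x\Ra\lbb{}\rperp\{y\}$ on generators, and then lifts algebraically using the distribution law $(\bigvee_iA_i)\Ra(\bigcap_jC_j)=\bigcap_{i,j}(A_i\Ra C_j)$ from \eqref{Ra prop}, complete meet-distributivity of $\lbb$, and finally (F6) to replace $\bigvee_{x\in A}\lbb\Gamma x$ by $\lbb A$ in the antecedent position. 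If you want to rescue your approach, you must add (F6) as an explicit step and route the argument through the join/meet decompositions rather than through individual points.
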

\begin{proof}
If the frame is a K-frame, then (FK) directly implies the point inequality $\boxminus(x{\triangleright} y)\leq \boxminus  x{\triangleright}\boxminus y$. By  the properties in~\eqref{Ra prop}, $\Gamma x\Ra{}\rperp\{y\}=T'xy={}\rperp\{x{\triangleright} y\}$, where $T'$ is the Galois dual relation of the frame relation $T^{\partial 1\partial}$ and ${{\triangleright}}:Z_1\times Z_\partial\lra Z_\partial$ is the point operator defined by $x{\triangleright} y=v$ iff $Txy=\Gamma v$. Hence we obtain
\begin{eqnarray*}
\lbb(\Gamma x\Ra{}\rperp\{y\}) &=&\lbb(\rperp\{x{\triangleright} y\})={}\rperp\{\boxminus(x{\triangleright} y)\}\\
\lbb\Gamma x\Ra\lbb({}\rperp\{y\})&=& \Gamma(\boxminus  x)\Ra{}\rperp\{\boxminus y\}= {}\rperp\{\boxminus  x{\triangleright}\boxminus y\}
\end{eqnarray*}
By definition of the order, the inequality $\boxminus(x{\triangleright} y)\leq \boxminus  x{\triangleright}\boxminus y$ is equivalent to the inclusion ${}\rperp\{\boxminus(x{\triangleright} y)\}\subseteq {}\rperp\{\boxminus  x{\triangleright}\boxminus y\}$. It follows that the K-axiom holds in the intermediate structure, 
\begin{equation}\label{intermediate K}
\lbb(\Gamma x\Ra{}\rperp\{y\})\subseteq \lbb\Gamma x\Ra\lbb({}\rperp\{y\}).
\end{equation} 

From~\eqref{Ra prop}, given stable sets $A,C\in\gpsi$, we have  $\lbb(A{\Ra} C)=\bigcap_{x\in A}^{C\upv y}\lbb(\Gamma x{\Ra}{}\rperp\{y\})$, given that $A\Ra C=(\bigvee_{x\in A})\Ra(\bigcap_{C\upv y}{}\rperp\{y\})=\bigcap_{x\in A}^{C\upv y}(\Gamma x\Ra{}\rperp\{y\})$ and that $\lbb$ distributes over arbitrary intersections. 
The following calculation proves the claim,
\begin{tabbing}
$\lbb(A\Ra C)$\hskip3mm\==\hskip2mm\=$\bigcap_{x\in A}^{C\upv y}\lbb(\Gamma x\Ra{}\rperp\{y\})$\hskip2cm\= Properties in~\eqref{Ra prop}\\
\>$\subseteq$\> $\bigcap_{x\in A}^{C\upv y}(\lbb\Gamma x\Ra\lbb{}\rperp\{y\})$ \> By~\eqref{intermediate K}\\
\>=\> $(\bigvee_{x\in A}\lbb\Gamma x)\Ra(\bigcap_{C\upv y}\lbb{}\rperp\{y\})$\> Properties in~\eqref{Ra prop}\\
\>=\>  $(\bigvee_{x\in A}\lbb\Gamma x)\Ra\lbb(\bigcap_{C\upv y}{}\rperp\{y\})$\> $\lbb$ distributes over meets\\
\>$\subseteq$\> $(\lbb \bigvee_{x\in A}\Gamma x)\Ra\lbb C$\>  By axiom (F6)\\
\>=\> $\lbb A\Ra\lbb C$ \>
\end{tabbing}
and this completes the proof.
\end{proof}

\section{Frames for the Standard Extensions of $\mathbf{K}^\ra_{\Box\Diamond}$}
\label{extensions section}
\subsection{D-Frames}
\label{KD-frames section}
Let $\mathfrak{F}$ be a refined frame. If the additional axiom
\begin{tabbing}
(FD)\hskip5mm\= $\forall x\in Z_1\; S^{11}_\Box x\subseteq R_\Diamond x$
\end{tabbing}
holds in the frame, then we call $\mathfrak{F}$ a {\em D-frame}.

\begin{lemma}\label{D lemma}
The following are equivalent
\begin{enumerate}
  \item (FD)
  \item $\forall x\in Z_1\;{\diamondvert}x\leq\boxminus x$
  \item $\forall x\in Z_1\;(\boxminus x)R_\Diamond x$
  \item $\forall x\in Z_1\;\lbb\Gamma x\subseteq\ldvert\Gamma x$
\end{enumerate}
\end{lemma}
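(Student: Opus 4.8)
The plan is to reduce all four conditions to the single point-inequality ${\diamondvert}x\leq\boxminus x$ of condition~2, after which the equivalences become essentially automatic. The preparatory step is to rewrite the set-theoretic data in terms of point operators: by axiom (F3), $R_\Diamond x=\Gamma({\diamondvert}x)$, and by Definition~\ref{S11box defn}, axiom (F5) and Definition~\ref{point operators defn}, $S^{11}_\Box x=\lbb\Gamma x=\Gamma(\boxminus x)$. I will also use repeatedly that for points $v,w$ one has $w\in\Gamma v$ iff $v\leq w$, and hence $\Gamma w\subseteq\Gamma v$ iff $v\leq w$; and that $\ldvert\Gamma x=\bigcup_{w\in\Gamma x}R_\Diamond w$ is a union of closed elements, hence an upset (Lemma~\ref{basic facts}).

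With this dictionary in place, $1\Leftrightarrow 2$ is immediate: (FD) reads $\Gamma(\boxminus x)\subseteq\Gamma({\diamondvert}x)$, which holds iff ${\diamondvert}x\leq\boxminus x$. Next, $2\Leftrightarrow 3$: condition~3 says $\boxminus x\in R_\Diamond x=\Gamma({\diamondvert}x)$, i.e. ${\diamondvert}x\leq\boxminus x$ (equivalently, this is \eqref{point ops and relations} applied to $R_\Diamond$). For $3\Rightarrow 4$: from $\boxminus x\in R_\Diamond x\subseteq\ldvert\Gamma x$ and the fact that $\ldvert\Gamma x$ is an upset, I get $\lbb\Gamma x=\Gamma(\boxminus x)\subseteq\ldvert\Gamma x$. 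Finally, $4\Rightarrow 2$: since $\boxminus x\in\lbb\Gamma x\subseteq\ldvert\Gamma x=\bigcup_{w\in\Gamma x}R_\Diamond w$, there is some $w$ with $x\leq w$ and $\boxminus x\in R_\Diamond w=\Gamma({\diamondvert}w)$, so ${\diamondvert}w\leq\boxminus x$; monotonicity of ${\diamondvert}$ (recorded just after Definition~\ref{point operators defn}) then yields ${\diamondvert}x\leq{\diamondvert}w\leq\boxminus x$. Chaining $1\Leftrightarrow 2\Leftrightarrow 3\Rightarrow 4\Rightarrow 2$ closes the loop and establishes the equivalence of all four statements.

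There is no real obstacle here; the lemma is a translation exercise through axioms (F3) and (F5). The two points to handle with care are: (i) identifying $\lbb\Gamma x$ — the single box operator evaluated at the closed element $\Gamma x$ — with $S^{11}_\Box x$ and, via (F5), with the closed element $\Gamma(\boxminus x)$, which is exactly the content of Definition~\ref{S11box defn} together with Proposition~\ref{box as restriction}; and (ii) in the step $4\Rightarrow 2$, noticing that the witness $w$ extracted from $\boxminus x\in\ldvert\Gamma x$ need only satisfy $x\leq w$, so monotonicity of ${\diamondvert}$ is genuinely needed to descend back to ${\diamondvert}x$. In particular, no second-order reasoning (such as axiom (F6)) is involved.
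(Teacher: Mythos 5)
Your proof is correct and takes essentially the same route as the paper's: reduce all four conditions to the point inequality ${\diamondvert}x\leq\boxminus x$ by translating through (F3) and (F5), using $R_\Diamond x=\Gamma({\diamondvert}x)$ and $\lbb\Gamma x=S^{11}_\Box x=\Gamma(\boxminus x)$. The only cosmetic difference is in the step $4\Rightarrow 2$, where the paper simply invokes the identity $\ldvert\Gamma x=R_\Diamond x$ (a consequence of (F2)), whereas you extract a witness $w\geq x$ and appeal to monotonicity of ${\diamondvert}$; both are sound.
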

\begin{proof}
Equivalence of the first three is immediate, from definitions. Since the frame is refined, $\lbb\Gamma x\subseteq\lbb^\ell\Gamma x=\Gamma(\boxminus x)=S^{11}_\Box x$. Since also $\ldvert\Gamma x=R_\Diamond x$, equivalence of (3) and (4) is clear, too. 
\end{proof}

\begin{prop}  \label{D-prop}
In a D-frame, for any $A\in\gpsi$ it holds that $\lbb A\subseteq\ldd A$.
\end{prop}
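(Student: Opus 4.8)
The plan is to reduce the claim to the intermediate structure, where it is exactly the D-frame axiom, and then propagate it to all stable sets using the distributivity of $\ldd$ over joins together with the second-order refinement axiom (F6), in the same spirit as the proof of Proposition~\ref{K-prop}.

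First I would recall that, by Lemma~\ref{D lemma}, the defining axiom (FD) of a D-frame is equivalent to the inclusion $\lbb\Gamma x\subseteq\ldvert\Gamma x$ holding for every $x\in Z_1$. Since $\ldvert\Gamma x=R_\Diamond x$ is a closed element by axiom (F3), it is in particular Galois stable, so $\ldd\Gamma x=(\ldvert\Gamma x)''=\ldvert\Gamma x$; hence the inclusion reads $\lbb\Gamma x\subseteq\ldd\Gamma x$ for all $x\in Z_1$. Thus the desired inclusion already holds on all closed elements of $\gpsi$.

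Next, fix an arbitrary $A\in\gpsi$. By Lemma~\ref{basic facts}(5), $A=\bigvee_{x\in A}\Gamma x$, so axiom (F6) gives $\lbb A=\lbb\bigvee_{x\in A}\Gamma x\subseteq\bigvee_{x\in A}\lbb\Gamma x$. Applying the pointwise inclusion of the previous paragraph together with monotonicity of joins yields $\bigvee_{x\in A}\lbb\Gamma x\subseteq\bigvee_{x\in A}\ldd\Gamma x$. Finally, by Theorem~\ref{dist from section stability} the operator $\ldd$ distributes over arbitrary joins of stable sets, so $\bigvee_{x\in A}\ldd\Gamma x=\ldd\bigvee_{x\in A}\Gamma x=\ldd A$. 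Chaining these inclusions gives $\lbb A\subseteq\ldd A$.

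There is no serious obstacle here: the argument is the standard "compute on the intermediate structure, then extend via (F6) and distributivity" pattern. The only point that needs mild care is keeping the full powerset image operator $\ldvert$ distinct from its stabilization $\ldd$, and observing that the two coincide on closed elements precisely because closed elements are already Galois stable — this is exactly what lets the D-frame axiom feed into the diamond operator of the complex algebra.
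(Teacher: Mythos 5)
Your proof is correct and follows essentially the same route as the paper's: the paper's one-line chain $\lbb A\subseteq\bigvee_{x\in A}\lbb\Gamma x\subseteq\bigvee_{x\in A}\ldvert\Gamma x=\bigvee_{x\in A}\ldd\Gamma x\subseteq\ldd A$ is exactly your argument, with (F6) for the first inclusion, Lemma~\ref{D lemma} for the second, and the identification $\ldvert\Gamma x=\ldd\Gamma x$ on closed elements for the equality. Your explicit justification of that last identification (closed elements are already stable, so stabilization is idempotent there) is a welcome bit of added care, but it is not a different proof.
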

\begin{proof}
In a D-frame $\lbb A\subseteq\bigvee_{x\in A}\lbb\Gamma x\subseteq\bigvee_{x\in A}\ldvert\Gamma x=\bigvee_{x\in A}\ldd\Gamma x\subseteq \ldd A$, using Lemma~\ref{D lemma}. 
\end{proof}

\subsection{Axioms T, S4 and Reflexive/Transitive Frames}
\label{reflexive section}
A {\em flat extension} of the system $\mathbf{\Lambda}_{\Box\Diamond}^\ra=\mathbf{\Lambda}_\Box\oplus\mathbf{\Lambda}_\Diamond\oplus\mathbf{\Lambda}^\ra$ is an extension by axioms introduced in a constituent logic of the fusion, hence involving a single operator. This is the case of the T and S4 axioms for each of the operators $\Box,\Diamond$, but also axioms strengthening the lattice base for implication, such as including a rule ${\small\infrule{\varphi\proves\psi}{\proves\varphi\ra\psi}}$, whose algebraic counterpart is an integrality constraint in the lattice: $a\leq b$ iff $1\leq a\ra b$. Or requiring that the lattice be distributive, or a Heyting algebra. For such extensions, involving the lattice base axioms only, we refer the reader to Section~\ref{distributive/Heyting section} and to \cite{choiceFreeHA}. 

Modeling the logics of implicative modal lattices that assume any of the T, B or S4 axioms is particularly simple and proofs can be carried out assuming only axioms (F1)--(F4) for refined frames in Table~\ref{refined frames axioms}. 

Let $\Gamma$ be a subset of the axioms in the following list.
\begin{tabbing}
(T$\Diamond$)\hskip4mm\= $\varphi\proves\Diamond\varphi$ \hskip1cm\=(T$\Box$)\hskip4mm \= $\Box\varphi\proves\varphi$\\
(S4$\Diamond$)\> ${\Diamond\Diamond}\varphi\proves\Diamond\varphi$ \>(S4$\Box$)\> $\Box\varphi\proves{\Box\Box}\varphi$
\end{tabbing}
We already know that each of the logics $\mathbf{\Lambda}_{\Box\Diamond}^\ra\oplus \Gamma$ is canonical, as this is a consequence of \cite[Theorem~6.3]{mai-harding}, in which Gehrke and Harding established that if a lattice equation involves only operators (distributing over joins), or only dual operators (distributing over meets), then the equation is preserved under canonical extensions. 

In this section we identify the corresponding classes of frames that validate $\Gamma$. 

A fast answer to the correspondence problem for axioms in $\Gamma$ can be given by considering the powerset algebras $(\powerset(Z_1),\lbb,\ldvert), (\powerset(Z_\partial),\lbvert,\ldminus)$, where recall that $\ldvert,\ldminus$ are the image operators generated by the relations $R^{11}_\Diamond, R^{\partial\partial}_\Box$, respectively, that we defined $\lbb A=(\ldminus A')'$, $\lbvert B=(\ldvert B')'$ and that we have shown that $\lbb,\lbvert$ are the restrictions to their respective lattices of Galois sets of the powerset dual operators $\lbminus_R,\lbvert_R$ generated by the double dual relations $R''_\Box, R''_\Diamond$, respectively (and we dropped the subscript $R$, as not necessary). Thus the T and S4 axioms are valid for $\lbb$ iff its generating relation $R''_\Box$ is reflexive, or transitive, respectively.  Similarly for $\lbvert$. Classically, it suffices to state only the T or S4 axiom for box and derive it for diamond by duality ($\Diamond=\neg\Box\neg$). Essentially the same proof can be given in the distribution-free setting. 

Indeed, assume for example that the relation $R''_\Diamond\subseteq Z_\partial\times Z_\partial$ is reflexive, so that the dual operator $\lbvert:\gphi\lra\gphi$ satisfies the T-axiom, $\lbvert B\subseteq B$. Let $A=B'$. Then we have $\lbvert B\subseteq B$ iff $(\ldvert B')'\subseteq B$ iff $B'\subseteq (\ldvert B')''$ iff $A\subseteq (\ldvert A)''$ and recall that we have defined $\ldd:\gpsi\lra\gpsi$ by $\ldd A=(\ldvert A)''$. 

The argument just given completely settles the problem, so we have proved the following.
\begin{prop}
  \label{reflexive/transitive prop}
In $\gpsi$, $\lbb,\ldd$ satisfy the T-axiom iff $R''_\Box,R''_\Diamond$ are reflexive and they satisfy the S4-axiom iff the relations are transitive.\telos
\end{prop}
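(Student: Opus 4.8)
The plan is to reduce both the $\lbb$-case and the $\ldd$-case to the classical correspondences for dual image operators on powersets, leaning on Proposition~\ref{box as restriction} and Proposition~\ref{dual box as restriction}, which present $\lbb$ (on $\gpsi$) and $\lbvert$ (on $\gphi$) as restrictions of the powerset dual image operators $\lbminus_R$, $\lbvert_R$ generated by $R''_\Box$, $R''_\Diamond$, and using the priming identities $X'''=X'$, $A''=A$ (for $A\in\gpsi$) and $B''=B$ (for $B\in\gphi$) to trade $\ldd$ for $\lbvert$.

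First I would treat $\lbb$. Recall the classical facts: for $\lbminus_R U=\{x\midsp xR''_\Box\subseteq U\}$ on $\powerset(Z_1)$ one has $\lbminus_R U\subseteq U$ for every $U$ iff $R''_\Box$ is reflexive, and $\lbminus_R U\subseteq\lbminus_R\lbminus_R U$ for every $U$ iff $R''_\Box$ is transitive; the forward implications are immediate from the definition of $\lbminus_R$, while the converses come from instantiating $U$ at the section $xR''_\Box$ for an arbitrary $x\in Z_1$. Since $xR''_\Box=(xR'_\Box)'$ lies in the image of $\rperp(\;)$, it is Galois stable, so these instantiating sets already belong to $\gpsi$; hence each equivalence restricts to $\gpsi$, and we obtain that $\lbb$ satisfies the T-axiom in $\gpsi$ iff $R''_\Box$ is reflexive, and the S4-axiom iff $R''_\Box$ is transitive.

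Next I would handle $\ldd$ by routing through $\lbvert$. The same classical correspondence applied to $\lbvert_R$ and $R''_\Diamond$ --- with the identical remark that $yR''_\Diamond=(yR'_\Diamond)'$ is Galois co-stable, so the converse arguments stay inside $\gphi$ --- yields that $\lbvert B\subseteq B$ for all $B\in\gphi$ iff $R''_\Diamond$ is reflexive, and $\lbvert B\subseteq\lbvert\lbvert B$ for all $B\in\gphi$ iff $R''_\Diamond$ is transitive. It then remains to translate these into conditions on $\ldd$. From $\lbvert B=(\ldvert B')'$, $\ldd A=(\ldvert A)''$ and the priming identities a short computation gives $\ldd A=(\lbvert A')'$; hence $A\subseteq\ldd A$ is equivalent to $\lbvert A'\subseteq A'$, and --- using in addition that $\lbvert A'\in\gphi$, so $(\lbvert A')''=\lbvert A'$ --- $\ldd\ldd A\subseteq\ldd A$ is equivalent to $\lbvert A'\subseteq\lbvert\lbvert A'$. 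Since $A\mapsto A'$ is a bijection of $\gpsi$ onto $\gphi$, the universal quantifier over $\gpsi$ on the left matches the one over $\gphi$ on the right, and I would conclude that $\ldd$ satisfies the T-axiom iff $R''_\Diamond$ is reflexive and the S4-axiom iff $R''_\Diamond$ is transitive.

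The point that requires care --- and where distribution-freeness could in principle bite --- is that the logic only forces the T- and S4-inequalities on Galois sets, not on arbitrary subsets, so the ``only if'' halves of the classical correspondences must be run with instantiating sets that are themselves Galois. As the sections $xR''_\Box$ and $yR''_\Diamond$ are stable, respectively co-stable, by construction, this obstacle dissolves, and the remainder is the usual classical bookkeeping together with the priming identities.
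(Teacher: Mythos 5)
Your proposal is correct and follows essentially the same route as the paper: both reduce to the classical powerset correspondences via Propositions~\ref{box as restriction} and~\ref{dual box as restriction} and transfer the $\ldd$ case to $\lbvert$ through the dual isomorphism $A\mapsto A'$. The only real difference is that for the `only if' directions you instantiate at the stable sections $xR''_\Box$ and $yR''_\Diamond$, whereas the paper's detailed versions (Propositions~\ref{reflexive box}--\ref{transitive diamond}) instantiate at principal elements; your explicit attention to the fact that the witnessing sets must themselves be Galois is a point the paper's short argument leaves implicit.
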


The next four Propositions offer alternative characterizations of the frames. We prove the first two and leave the proofs of the other two to the interested reader.

\begin{prop}
\label{reflexive box}
  The following are equivalent:
  \begin{enumerate}
  \item $\lbb A\subseteq A$, for any $A\in\gpsi$
  \item $\lbb{}\rperp\{y\}\subseteq{}\rperp\{y\}$, for all $y\in Z_\partial$
  \item  $\boxminus y\leq y$, for all $y\in Z_\partial$
  \item The frame relation $R_\Box\subseteq Z_\partial\times Z_\partial$ is reflexive
  \item The double dual $R''_\Box\subseteq Z_1\times Z_1$ of $R_\Box$ is reflexive
  \end{enumerate}
\end{prop}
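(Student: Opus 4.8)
The plan is to prove the chain of implications $(1)\Rightarrow(2)\Rightarrow(3)\Rightarrow(4)\Rightarrow(5)\Rightarrow(1)$, exploiting at each turn the fact that open elements $\rperp\{y\}$ and their associated points $y\in Z_\partial$ mediate between the abstract operator $\lbb$ on $\gpsi$ and the concrete frame relations. The implication $(1)\Rightarrow(2)$ is immediate, since each $\rperp\{y\}$ is a particular stable set. For $(2)\Rightarrow(3)$ I would use the computation behind Proposition~\ref{box as restriction}, together with Definition~\ref{lower box defn} and Definition~\ref{point operators defn}: since the frame is refined, $\lbb\rperp\{y\}=\lbb^\ell\rperp\{y\}=\rperp\{\boxminus y\}$ (using that $\rperp\{y\}$ is an open element, equivalently a closed element of the appropriate form, and that $\lbb^\ell$ agrees with $\lbb^u$ on the intermediate structure); then $\rperp\{\boxminus y\}\subseteq\rperp\{y\}$ is, by definition of the order on $Z_\partial$ via $\{u\}'\subseteq\{w\}'$, exactly the inequality $\boxminus y\leq y$. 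Wait --- here I should be careful that $\boxminus$ is being used as a sorted map on $Z_1\cup Z_\partial$ via $S^{11}_\Box$, as stipulated in Definition~\ref{point operators defn}, so that $\boxminus y$ makes sense for $y\in Z_\partial$ and $\lbb\rperp\{y\}=\rperp\{\boxminus y\}$ holds.

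For $(3)\Rightarrow(4)$ I would unwind the definitions: $\boxminus y\leq y$ means $\Gamma(\boxminus y)\subseteq\Gamma y$, i.e. $S^{11}_\Box y\subseteq \Gamma y$, which by Definition~\ref{S11box defn} says that whenever $z R''_\Box\subseteq\Gamma y$ then $z\in\Gamma y$, i.e. $y\leq z$; dualized appropriately through the Galois maps and the definitions $R'_\Box v=(R^{\partial\partial}_\Box v)'$, $R''_\Box=(R'_\Box)'$, this should be equivalent to $R_\Box$ being reflexive. Actually the cleanest route is probably to observe that $(3)$ is the statement that $\lbvert$ (the box operator on $\gphi$ defined by $\lbvert B=(\ldvert B')'$) --- no, more directly: $(3)$ says $\boxminus$ is deflationary on $Z_\partial$, and reflexivity of $R_\Box\subseteq Z_\partial\times Z_\partial$ says $vR_\Box v$ for all $v$, i.e. $v\in R_\Box v=\Gamma(\boxminus v)$ by (F3), i.e. $\boxminus v\leq v$; so $(3)\Leftrightarrow(4)$ is immediate from (F3) and separation. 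Then $(4)\Leftrightarrow(5)$ follows because $R''_\Box$ is built from $R_\Box$ by two applications of the Galois operations, and reflexivity transfers: if $vR_\Box v$ for all $v\in Z_\partial$, I need to check $xR''_\Box x$ for all $x\in Z_1$, which amounts to showing $x\upv y$ whenever $xR'_\Box y$, i.e. whenever $\Gamma x\subseteq R'_\Box y$... I would verify this using the monotonicity/closedness facts (closed elements $R_\Box v$, the characterization $R_\Box v\subseteq\{x\}\rperp$ iff $\Gamma x\subseteq R'_\Box v$ from the proof of Proposition~\ref{box as restriction}), and conversely recover reflexivity of $R_\Box$ from that of $R''_\Box$ by the analogous dual calculation.

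Finally $(5)\Rightarrow(1)$ closes the loop: Proposition~\ref{box as restriction} says $\lbb$ is the restriction to stable sets of the classical dual image operator $\lbminus_R$ generated by $R''_\Box$, and a reflexive accessibility relation makes a classical dual image operator deflationary ($\lbminus_R U=\{x\mid xR''_\Box\subseteq U\}\subseteq U$ whenever $x R''_\Box x$, since then $x\in xR''_\Box\subseteq U$); restricting to $A\in\gpsi$ gives $\lbb A\subseteq A$.

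The main obstacle I anticipate is getting the $(4)\Leftrightarrow(5)$ transfer of reflexivity between $R_\Box$ and its double Galois dual $R''_\Box$ exactly right --- the two relations live on different sorts ($Z_\partial$ versus $Z_1$), so "reflexivity" is not literally the same statement, and the equivalence really rests on (F3) (every section $R_\Box v$ is a closed element $\Gamma(\boxminus v)$) plus the bookkeeping of the Galois maps; I would want to state it as: $R_\Box$ reflexive iff $\boxminus v\leq v$ for all $v\in Z_\partial$ iff (using that $\boxminus$ as a sorted map satisfies $\lbb^\ell\rperp\{y\}=\rperp\{\boxminus y\}$ and the refined-frame identity $S^{11}_\Box x=\Gamma(\boxminus x)$) $\boxminus x\leq x$ for all $x\in Z_1$ iff $R''_\Box$ reflexive, rather than trying to argue about $R''_\Box$ directly from a point-free perspective. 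Everything else is routine unwinding of the definitions established earlier in the paper.
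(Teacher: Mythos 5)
Your proof is correct and follows essentially the same route as the paper: the cyclic chain $(1)\Rightarrow(2)\Rightarrow(3)\Rightarrow(4)\Rightarrow(5)\Rightarrow(1)$, with the identity $\lbb{}\rperp\{y\}={}\rperp\{\boxminus y\}$ and axiom (F3) linking (2), (3) and (4), the computation $xR''_\Box z$ iff $\forall v(x\upv\boxminus v\lra z\upv v)$ for $(4)\Rightarrow(5)$, and Proposition~\ref{box as restriction} for $(5)\Rightarrow(1)$. The only superfluous detour is your appeal to $\lbb^\ell$ and refinedness in the step $(2)\Rightarrow(3)$: since statement (3) concerns the point operator $\boxminus$ on $Z_\partial$ given directly by (F3) (via $R_\Box y=\Gamma(\boxminus y)$), and not its $S^{11}_\Box$-induced extension to $Z_1$, the identity $\lbb{}\rperp\{y\}={}\rperp\{\boxminus y\}$ already follows from (F3) and the definition of $\lbb$, which is all the paper uses.
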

\begin{proof}
Assuming $\lbb A\subseteq A$, we have in particular $\lbb{}\rperp\{y\}\subseteq{}\rperp\{y\}$. This is equivalent to ${}\rperp\{\boxminus y\}\subseteq{}\rperp\{y\}$, i.e. $\boxminus y\leq y$. Since $R_\Box y=\Gamma(\boxminus y)$, we obtain that $yR_\Box y$ holds, for any $y\in Z_\partial$, i.e. $R_\Box$ is reflexive. To show that this implies reflexivity of $R''_\Box$, first we calculate its defining condition. 

We have that $yR_\Box v$ iff $\boxminus v\leq y$ and then $uR'_\Box v$ iff $u\upv\boxminus v$. Consequently, $xR''_\Box z$ iff $\forall v(xR'_\Box v\lra z\upv v)$ iff $\forall v(x\upv\boxminus v\lra z\upv v)$. 

By the above, $xR''_\Box x$ holds iff $\forall v(x\upv\boxminus v\lra x\upv v)$. Let $v\in Z_\partial$ be such that $x\upv\boxminus v$, i.e. $xR'_\Box v$, which is equivalent, by definition, to $\forall y(yR_\Box v\lra x\upv y)$. The case assumption is that $R_\Box$ is reflexive, hence $vR_\Box v$ holds and then $x\upv v$ follows, hence $R''_\Box$ is indeed reflexive.

Reflexivity of $R''_\Box$ immediately implies that $\lbb A\subseteq A$, for any $A\in\gpsi$, by the fact that $x\in\lbb A$ iff $\forall z(xR''_\Box z\lra z\in A)$. 
\end{proof}

\begin{prop}
\label{transitive box}
  The following are equivalent
  \begin{enumerate}
    \item $\lbb A\subseteq \lbb\lbb A$, for all $A\in\gpsi$
    \item $\lbb{}\rperp\{y\}\subseteq\lbb\lbb{}\rperp\{y\}$
    \item $\boxminus y\leq{\boxminus\boxminus} y$, for all $y\in Z_\partial$
    \item $R_\Box\subseteq Z_\partial\times Z_\partial$ is transitive
    \item The double dual $R''_\Box\subseteq Z_1\times Z_1$ of $R_\Box$ is transitive.
  \end{enumerate}
\end{prop}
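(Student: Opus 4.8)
The strategy is to mirror the proof of Proposition~\ref{reflexive box} for necessity, replacing reflexivity arguments by transitivity arguments throughout, and to establish the cycle of implications $(1)\Rightarrow(2)\Rightarrow(3)\Rightarrow(4)\Rightarrow(5)\Rightarrow(1)$. The implications $(1)\Rightarrow(2)$ is trivial by instantiating the universally quantified $A$ at an open element ${}\rperp\{y\}$. For $(2)\Rightarrow(3)$, recall that $\lbb{}\rperp\{y\}={}\rperp\{\boxminus y\}$ by the computation in Proposition~\ref{K-prop} (where it is shown that $\lbb{}\rperp\{y\}={}\rperp\{\boxminus y\}$ using the point operator $\boxminus$), so $\lbb\lbb{}\rperp\{y\}={}\rperp\{\boxminus\boxminus y\}$, and the inclusion ${}\rperp\{\boxminus y\}\subseteq{}\rperp\{\boxminus\boxminus y\}$ unwinds by separation and the order definition to the point inequality $\boxminus y\leq\boxminus\boxminus y$.

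For $(3)\Rightarrow(4)$: since $R_\Box v=\Gamma(\boxminus v)$ by~\eqref{point ops and relations}, the relation $yR_\Box v$ holds iff $\boxminus v\leq y$; now assume $yR_\Box v$ and $vR_\Box w$, i.e. $\boxminus v\leq y$ and $\boxminus w\leq v$. Monotonicity of $\boxminus$ (noted after Definition~\ref{point operators defn}) applied to $\boxminus w\leq v$ gives $\boxminus\boxminus w\leq\boxminus v\leq y$, and then hypothesis (3) gives $\boxminus w\leq\boxminus\boxminus w\leq y$, i.e. $yR_\Box w$, so $R_\Box$ is transitive. For $(4)\Rightarrow(5)$, I would proceed exactly as in the reflexive case: recall from Proposition~\ref{reflexive box} that $xR''_\Box z$ iff $\forall v(x\upv\boxminus v\to z\upv v)$, equivalently $xR''_\Box z$ iff $\forall v(xR'_\Box v\to z\upv v)$ where $uR'_\Box v$ iff $u\upv\boxminus v$. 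To show transitivity of $R''_\Box$, assume $xR''_\Box z$ and $zR''_\Box p$, and let $v$ with $xR'_\Box v$; I must derive $p\upv v$. From $xR''_\Box z$ I would want $zR'_\Box v$ so that $zR''_\Box p$ delivers $p\upv v$ — but $zR'_\Box v$ means $z\upv\boxminus v$, which is weaker than $z\upv v$, so one needs transitivity of $R_\Box$ here: from $xR'_\Box v$, i.e. $\forall y(yR_\Box v\to x\upv y)$, and the transitivity $(4)$, whenever $yR_\Box(\boxminus v)$ one gets — using $R_\Box(\boxminus v)=\Gamma(\boxminus\boxminus v)$ and $\boxminus\boxminus v$ above $\boxminus v$ — that $yR_\Box v$, hence $x\upv y$; thus $xR'_\Box(\boxminus v)$, which after applying $xR''_\Box z$ (in the form $\forall w(xR'_\Box w\to z\upv w)$ with $w=\boxminus v$) yields $z\upv\boxminus v$, i.e.\ $zR'_\Box v$. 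Then $zR''_\Box p$ gives $p\upv v$, as required. Finally $(5)\Rightarrow(1)$ is immediate from the characterization $x\in\lbb A$ iff $\forall z(xR''_\Box z\to z\in A)$ (established in Proposition~\ref{soundness for box} via Proposition~\ref{box as restriction}): if $x\in\lbb A$, then for any $z$ with $xR''_\Box z$ and any $w$ with $zR''_\Box w$, transitivity gives $xR''_\Box w$ hence $w\in A$, so $z\in\lbb A$; therefore $x\in\lbb\lbb A$.

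\textbf{Main obstacle.} The routine steps are the ones involving only the point operator $\boxminus$ and its monotonicity $(2\!\leftrightarrow\!3\!\leftrightarrow\!4)$ and the trivial instantiations $(1\!\to\!2)$ and $(5\!\to\!1)$. The genuinely delicate step is $(4)\Rightarrow(5)$, the passage from transitivity of the frame relation $R_\Box$ on $Z_\partial$ to transitivity of its double Galois dual $R''_\Box$ on $Z_1$: the naive chase $xR''_\Box z$, $zR''_\Box p$ does not immediately compose because $R''_\Box$ is defined through $R'_\Box$, whose section condition involves $\boxminus v$ rather than $v$, so one must interpose the point inequality $\boxminus v\leq\boxminus\boxminus v$ and re-expand the definition of $R_\Box$ at the shifted argument $\boxminus v$. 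This is precisely the analogue of the corresponding subtlety already handled in the proof of Proposition~\ref{reflexive box}, so I would present it in parallel, and it is the part worth writing out in full; everything else can be safely compressed or left to the reader.
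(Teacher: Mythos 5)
Your proposal is correct and follows essentially the same route as the paper: the chain $(1)\Rightarrow(2)\Rightarrow(3)\Rightarrow(4)$ via the identity $\lbb{}\rperp\{y\}={}\rperp\{\boxminus y\}$ and monotonicity of $\boxminus$, transitivity of $R''_\Box$ obtained by upgrading $xR'_\Box v$ to $xR'_\Box(\boxminus v)$ through the point inequality $\boxminus v\leq{\boxminus\boxminus}v$, and $(5)\Rightarrow(1)$ from the characterization $x\in\lbb A$ iff $xR''_\Box\subseteq A$. The only cosmetic difference is that the paper derives $xR'_\Box(\boxminus v)$ directly from $x\upv\boxminus v$ and the monotonicity of $\upv$ rather than by unfolding $R_\Box(\boxminus v)=\Gamma({\boxminus\boxminus}v)$, and it proves $(5)\Rightarrow(1)$ by contradiction; these are equivalent.
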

\begin{proof}
Assuming $\lbb A\subseteq \lbb\lbb A$, for any $A\in\gpsi$, we have in particular $\lbb{}\rperp\{y\}\subseteq\lbb\lbb{}\rperp\{y\}$, for any $y\in Z_\partial$. This is equivalent to ${}\rperp\{\boxminus y\}\subseteq{}\rperp\{{\boxminus\boxminus} y\}$ and then by definition of the order it follows that $\boxminus y\leq{\boxminus\boxminus} y$.

For transitivity of $R_\Box$, assume $yR_\Box v R_\Box w$, i.e. $\boxminus w\leq v$ and $\boxminus v\leq y$. Then by the case assumption and monotonicity of the point operator $\boxminus$, we obtain $\boxminus w\leq {\boxminus\boxminus} w\leq\boxminus v\leq y$, i.e. $yR_\Box w$ holds.

Note that, transitivity of $R_\Box$ implies that the inclusion $\lbb A\subseteq \lbb\lbb A$ holds, so that (1)--(3) are pairwise equivalent.

To show that $R''_\Box$ is transitive, assume $xR''_\Box uR''_\Box z$ and recall that $xR''_\Box z$ iff $\forall v(xR'_\Box v\lra z\upv v)$ iff $\forall v(x\upv \boxminus v\lra z\upv v)$. Let then $v\in Z_\partial$ be such that $x\upv\boxminus v$. By case assumption, $\boxminus v\leq{\boxminus\boxminus} v$ and since $\upv$ is increasing in both argument places we obtain that $x\upv{\boxminus\boxminus} v$. This means that $xR'_\Box\boxminus v$ and given the definition of $xR''_\Box$ as the Galois dual of $xR'_\Box$ and the assumption that $xR''_\Box u$,  we get $u\upv\boxminus v$. But $uR''_\Box z$, by assumption, and then given the defining condition above for $R''_\Box$ and our last conclusion that $u\upv\boxminus v$, it follows that $z\upv v$. Thereby $xR''_\Box z$ does indeed hold.

Finally, we assume that $R''_\Box$ is transitive and we show that if $A\in\gpsi$, then $\lbb A\subseteq \lbb\lbb A$ holds, where recalle that $x\in\lbb A$ iff $\forall z(xR''_\Box z\lra z\in A)$.

Let $x\in\lbb A$. If $x\not\in\lbb\lbb A$, let $u$ be such that $xR''_\Box u$, but $u\not\in\lbb A$. There is then $z$ such that $uR''_\Box z$ but $z\not\in A$. This contradicts the assumption $x\in\lbb A$ since by transitivity we obtain given $u$ and $z$ above that $xR''_\Box z$, hence $z\in A$.
\end{proof}

\begin{prop}
\label{reflexive diamond}
  The following are equivalent
\begin{enumerate}
  \item $A\subseteq\ldd A$, for all $A\in\gpsi$
  \item $\Gamma x\subseteq\ldd\Gamma x$, for all $x\in Z_1$
  \item $\diamondvert x\leq x$, for all $x\in Z_1$   
  \item $R_\Diamond$ is reflexive
  \item The double dual $R''_\Diamond\subseteq Z_\partial\times Z_\partial$ of $R_\Diamond$ is reflexive.
  \end{enumerate}
\end{prop}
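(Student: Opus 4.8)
The plan is to run the cycle $(1)\Rightarrow(2)\Rightarrow(3)\Rightarrow(4)\Rightarrow(5)\Rightarrow(1)$, order-dually to the proof of Proposition~\ref{reflexive box} but on the diamond side, where the closed elements $\Gamma x$ play the role that the open elements $\rperp\{y\}$ played for $\lbb$. The implication $(1)\Rightarrow(2)$ is immediate, since every $\Gamma x$ is a Galois stable set (Lemma~\ref{basic facts}). For $(2)\Rightarrow(3)$ I would first record the identity $\ldd\Gamma x=\Gamma(\diamondvert x)$: by axiom (F2) the relation $R_\Diamond$ is decreasing in its second argument, so $\ldvert\Gamma x=\bigcup_{w\in\Gamma x}R_\Diamond w=R_\Diamond x$, and $R_\Diamond x=\Gamma(\diamondvert x)$ is closed by (F3), hence stable, so $\ldd\Gamma x=(\ldvert\Gamma x)''=\Gamma(\diamondvert x)$. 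Then $(2)$ reads $\Gamma x\subseteq\Gamma(\diamondvert x)$, which, since $x\mapsto\Gamma x$ is order-reversing, is equivalent to $\diamondvert x\leq x$, i.e. $(3)$. The same identity, together with complete additivity of $\ldd$ from Theorem~\ref{dist from section stability} and $A=\bigvee_{x\in A}\Gamma x$, also yields $(2)\Rightarrow(1)$ directly, which is a convenient shortcut closing the cycle early.

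For $(3)\Rightarrow(4)$: from $R_\Diamond x=\Gamma(\diamondvert x)$, the inequality $\diamondvert x\leq x$ gives $x\in\Gamma(\diamondvert x)=R_\Diamond x$, i.e. $xR_\Diamond x$. For $(4)\Rightarrow(5)$ I would unfold the two Galois dualisations defining $R''_\Diamond$ exactly as in the proof of Proposition~\ref{reflexive box}: from $R'_\Diamond z=(R_\Diamond z)\rperp$ one computes $zR'_\Diamond y$ iff $\diamondvert z\upv y$, hence $w\in yR''_\Diamond$ iff $\forall z\in Z_1(\diamondvert z\upv y\lra z\upv w)$; reflexivity of $R_\Diamond$, i.e. $\diamondvert z\leq z$ by $(3)$, together with the fact that $\upv$ is increasing in each argument place (Lemma~\ref{basic facts}(1)), then gives $yR''_\Diamond y$ for all $y$. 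Finally $(5)\Rightarrow(1)$ is the argument already sketched before Proposition~\ref{reflexive/transitive prop}: by the computation in the proof of Proposition~\ref{soundness for diamond}, $(\ldd A)'=\{y\in Z_\partial\midsp yR''_\Diamond\subseteq A'\}$, so reflexivity of $R''_\Diamond$ forces $(\ldd A)'\subseteq A'$, i.e. $A\subseteq\ldd A$; equivalently one may simply cite the $\ldd$-half of Proposition~\ref{reflexive/transitive prop}.

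Every step is a short calculation, so there is no real obstacle; the only point that demands care is $(4)\Rightarrow(5)$, where one must keep track of the sort swaps $Z_1\leftrightarrow Z_\partial$ along the passage from $R_\Diamond$ to $R'_\Diamond$ to $R''_\Diamond$ --- this is the order-dual of the corresponding step for $R''_\Box$ in Proposition~\ref{reflexive box}, and a mis-sorted Galois prime is the main (minor) pitfall. The next Proposition (transitivity) would be proved in the same spirit, replacing reflexivity by transitivity throughout and invoking monotonicity of the point operator $\diamondvert$ at the $R''_\Diamond$ step, precisely as in Proposition~\ref{transitive box}.
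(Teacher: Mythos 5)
Your proof is correct and is exactly the order-dual of the paper's detailed proof of Proposition~\ref{reflexive box}, which is what the paper intends by leaving this one to the reader; the key identity $\ldd\Gamma x=\ldvert\Gamma x=R_\Diamond x=\Gamma(\diamondvert x)$ is the same one the paper uses in Lemma~\ref{D lemma} and Proposition~\ref{5Prop}, and all the remaining steps check out. (Only trivial quibble: in the $(4)\Rightarrow(5)$ step the Galois dual should be written $yR'_\Diamond z$ rather than $zR'_\Diamond y$, since $R'_\Diamond\subseteq Z_\partial\times Z_1$ --- a notational slip you yourself anticipate and which does not affect the argument.)
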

\begin{proof}
  Left to the interested reader.
\end{proof}

\begin{prop}
\label{transitive diamond}
The following are equivalent
\begin{enumerate}
  \item $\ldd\ldd A\subseteq\ldd A$, for any stable set $A$
  \item $\ldd\ldd\Gamma x\subseteq\ldd\Gamma x$, for any point $x\in Z_1$
  \item $\diamondvert x\leq{\diamondvert\diamondvert} x$,  for any point $x\in Z_1$
  \item $R_\Diamond\subseteq Z_1\times Z_1$ is transitive
  \item The double dual $R''_\Diamond\subseteq Z_\partial\times Z_\partial$ of $R_\Diamond$ is transitive.
\end{enumerate}
\end{prop}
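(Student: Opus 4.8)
The plan is to imitate the proof of Proposition~\ref{transitive box} essentially line for line, replacing $\lbb,\boxminus,R_\Box,R''_\Box$ by $\ldd,\diamondvert,R_\Diamond,R''_\Diamond$, and to supply the one new ingredient forced by order-duality: where the box proof uses the clean universal clause ``$x\in\lbb A$ iff $\forall z(xR''_\Box z\to z\in A)$'', one must here work on the co-stable side, using that $(\ldd A)'=\lbvert A'$, where $\lbvert$ is the dual box operator on $\gphi$ which, by Proposition~\ref{dual box as restriction}, is the restriction of the powerset dual image operator $\lbvert_R$ generated by $R''_\Diamond$. Concretely I would establish $(1)\Rightarrow(2)$, $(2)\Leftrightarrow(3)$, $(3)\Leftrightarrow(4)$ and $(3)\Rightarrow(5)$, and close the loop by $(5)\Rightarrow(1)$; the latter is in fact already Proposition~\ref{reflexive/transitive prop}, so strictly only the first four items need a fresh argument.

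Implication $(1)\Rightarrow(2)$ is trivial, since $\Gamma x\in\gpsi$. For $(2)\Leftrightarrow(3)$ I would first compute $\ldd\Gamma x$: by axiom (F3), $\ldvert\Gamma x=\bigcup_{x\le w}R_\Diamond w=\bigcup_{x\le w}\Gamma(\diamondvert w)$, and monotonicity of $\diamondvert$ collapses this union to $\Gamma(\diamondvert x)$, which is already Galois stable by Lemma~\ref{basic facts}; hence $\ldd\Gamma x=(\ldvert\Gamma x)''=\Gamma(\diamondvert x)$ and, iterating, $\ldd\ldd\Gamma x=\Gamma(\diamondvert\diamondvert x)$. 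Since $\Gamma a\subseteq\Gamma b$ iff $b\le a$, the inclusion $\ldd\ldd\Gamma x\subseteq\ldd\Gamma x$ says precisely $\diamondvert x\le\diamondvert\diamondvert x$. For $(3)\Leftrightarrow(4)$ I would argue exactly as in the box case via the point translation $wR_\Diamond x$ iff $\diamondvert x\le w$: from $aR_\Diamond b$ and $bR_\Diamond c$, i.e. $\diamondvert b\le a$ and $\diamondvert c\le b$, monotonicity of $\diamondvert$ gives $\diamondvert\diamondvert c\le\diamondvert b\le a$, and (3) gives $\diamondvert c\le\diamondvert\diamondvert c\le a$, i.e. $aR_\Diamond c$; conversely, applying transitivity of $R_\Diamond$ to the chain $\diamondvert\diamondvert x\,R_\Diamond\,\diamondvert x\,R_\Diamond\,x$ yields $\diamondvert x\le\diamondvert\diamondvert x$.

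For $(3)\Rightarrow(5)$ I would first record the defining condition of the double dual: $R'_\Diamond x=(R_\Diamond x)'=\{\diamondvert x\}'$ by Lemma~\ref{basic facts}, so $yR'_\Diamond x$ iff $\diamondvert x\upv y$, and therefore $yR''_\Diamond v$ iff $\forall x\in Z_1(\diamondvert x\upv y\to x\upv v)$. Then, given $yR''_\Diamond v$, $vR''_\Diamond w$ and an arbitrary $x$ with $\diamondvert x\upv y$: using (3) and monotonicity of $\upv$ we get $\diamondvert\diamondvert x\upv y$, so instantiating the first premise at the point $\diamondvert x$ gives $\diamondvert x\upv v$, and instantiating the second premise at $x$ gives $x\upv w$; hence $yR''_\Diamond w$, i.e. $R''_\Diamond$ is transitive. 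Finally, for $(5)\Rightarrow(1)$ one may simply cite Proposition~\ref{reflexive/transitive prop}; alternatively, arguing directly, by Proposition~\ref{soundness for diamond} and Proposition~\ref{dual box as restriction} one has $(\ldd A)'=\{y\mid yR''_\Diamond\subseteq A'\}=\lbvert_R(A')=\lbvert A'$, whence $\ldd A=(\lbvert A')'$ and $\ldd\ldd A=(\lbvert\lbvert A')'$; transitivity of $R''_\Diamond$ gives the classical box inclusion $\lbvert_R B\subseteq\lbvert_R\lbvert_R B$, so $\lbvert A'\subseteq\lbvert\lbvert A'$, and applying the order-reversing map $(\,\cdot\,)'$ returns $\ldd\ldd A\subseteq\ldd A$.

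I expect the main obstacle to be purely notational rather than conceptual: $\ldd$ is a join-preserving (diamond-type) operator and so has no clean universally-quantified membership clause, which forces the step $(5)\Rightarrow(1)$ (and, if one avoids citing Proposition~\ref{reflexive/transitive prop}, its proof) to be carried out on the co-stable side through $\lbvert$ and $R''_\Diamond$, and requires computing $R'_\Diamond$ and $R''_\Diamond$ correctly. The single non-mechanical move is the instantiation trick in $(3)\Rightarrow(5)$ --- applying the premise $\forall x(\diamondvert x\upv y\to x\upv v)$ at the point $\diamondvert x$ after first lifting $\diamondvert x\upv y$ to $\diamondvert\diamondvert x\upv y$ via (3) --- which is the exact analogue of the corresponding step in the proof of Proposition~\ref{transitive box}.
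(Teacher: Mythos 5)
Your proof is correct, and since the paper's own ``proof'' of Proposition~\ref{transitive diamond} is simply left to the reader, yours supplies exactly the intended argument: the order-dual of the proof of Proposition~\ref{transitive box}, using $\ldd\Gamma x=\Gamma(\diamondvert x)$ (from (F3) and monotonicity of $\diamondvert$), the point translation $wR_\Diamond x$ iff $\diamondvert x\leq w$, the computed clause $yR''_\Diamond v$ iff $\forall x(\diamondvert x\upv y\to x\upv v)$ with the instantiation at $\diamondvert x$, and the passage to the co-stable side via $\lbvert$ (Proposition~\ref{dual box as restriction}) or the citation of Proposition~\ref{reflexive/transitive prop} to close the loop. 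All steps check out against the definitions and lemmas the paper provides.
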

\begin{proof}
  Left to the interested reader.
\end{proof}

\begin{rem}\label{conradie-palmiziano rem}
In  Conradie and Palmigiano \cite{conradie-palmigiano} a correspondent for the T-axiom in a non-distributive setting is calculated, as an example of applying the ALBA algorithm. The authors work within the framework of RS-frames and calculate that the axiom $\Box\varphi\proves\varphi$ is valid in any RS-frame where the accessibility relation $S_\Box$ is a subrelation of the Galois relation $\upv$ (which they denote by $I$). This is only in appearance different from the reflexivity condition for $R_\Box$ that we have calculated. As explained in \cite{kata2z}, every example of relational modeling of a logical operator in a non-distributive setting that has been given in the RS-frames approach uses an accessibility relation $S$ that is the Galois dual relation $R'$ of the relation used in this author's approach. We leave it to the reader to verify that $R_\Box,R_\Diamond$ are reflexive iff, respectively, $R'_\Box\subseteq{\upv}$ and $(R'_\Diamond)^{-1}\subseteq{\upv}$. 
\end{rem}
\begin{rem}
A different approach is taken in Brezhanishvilli, Dmitrieva, de Groot and Morachini \cite{choice-free-dmitrieva-bezanishvili}, where weak, distribution-free, modal logic systems are discussed. The system of their focus is a distribution-free version of Dunn's \texttt{PML} (Positive Modal Logic) \cite{pml}. A single relation $R$ in frames generates both $\Box,\Diamond$, the interpretation for modal formulae is standard, except that in pursuing a way to define frames that validate at least one of Dunn's interaction axioms they find it necessary to abandon distribution of diamonds over joins. The authors do not work with canonical extensions, they model their logic(s) of interest in (topological) semilattices, they develop a related correspondence theory and they calculate that $\Box\varphi\proves\varphi$ and $\varphi\proves\Diamond\varphi$ are valid in semilattice frames satisfying the conditions $\forall x\exists y(xRy$ and $y\leq x)$ and $\forall x\exists y(xRy$ and $x\leq y)$, respectively. In other words, reflexivity is a property of the composite relations $R\circ{\leq}$ and $R\circ{\geq}$, respectively.
\end{rem}

\subsection{Axiom B and Residuated Box/Diamond }
\label{B section}
Let $\mathfrak{F}=(s,Z,I,R^{\partial\partial}_\Box,R^{11}_\Diamond, T^{\partial 1\partial})$, where we displayed the sort of the frame relations, be a frame satisfying at least axioms (F1)--(F4) of Table~\ref{refined frames axioms} and assume the frame also satisfies the following axiom
\begin{tabbing}
\hskip5mm\=(FB)\hskip2mm $\forall x\in Z_1\forall y\in Z_\partial(xR'_\Box y \leftrightarrow yR'_\Diamond x)$,
\end{tabbing}
in other words $R'_\Diamond=(R'_\Box)^{-1}$. Given definitions, axiom (FB) is equivalent to the  condition $\Gamma x\subseteq{}\rperp\{\boxminus y\}$ iff $\Gamma(\diamondvert x)\subseteq{}\rperp\{y\}$ in the intermediate structure. In turn, this is equivalent to the following condition, which is more convenient to use
\begin{tabbing}
\hskip5mm\=(FB)\hskip2mm $\forall x\in Z_1\forall y\in Z_\partial(x\upv\boxminus y \leftrightarrow \diamondvert x\upv y)$.
\end{tabbing}

\begin{prop}
\label{B prop in frames}
In a frame satisfying at least axioms (F1)--(F4) of Table~\ref{refined frames axioms} together with (FB) the operators $\lbb,\ldd$ in the full complex algebra of the frame are residuated, i.e. for all stable sets $A,C$ we have $A\subseteq\lbb C$ iff $\ldd A\subseteq C$.
\end{prop}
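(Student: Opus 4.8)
The plan is to evaluate the operators $\lbb$ and $\ldd$ of $\mathfrak{F}^+$ on the generating elements of $\gpsi$ in terms of the point operators $\boxminus:Z_\partial\to Z_\partial$ and $\diamondvert:Z_1\to Z_1$ attached by axiom (F3) to the relations $R^{\partial\partial}_\Box$ and $R^{11}_\Diamond$, and then to read the adjunction off pointwise from axiom (FB). Only axioms (F1)--(F4) will be used; in particular (F5), which would be needed to regard $\boxminus$ as an operator on $Z_1$, is not invoked.

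First I would establish the two normal forms $\lbb C=\bigcap_{C\upv y}{}\rperp\{\boxminus y\}$ and $\ldd A=\bigvee_{x\in A}\Gamma(\diamondvert x)$, valid for all $A,C\in\gpsi$. For the values on generators: since by (F3) $R_\Box v=\Gamma(\boxminus v)$ and $\boxminus$ is monotone, the union $\ldminus\,\Gamma y=\bigcup_{v\in\Gamma y}R_\Box v$ collapses to $\Gamma(\boxminus y)$, whence $\lbb\,{}\rperp\{y\}=(\ldminus(\,{}\rperp\{y\})')'=(\ldminus\,\Gamma y)'={}\rperp\{\boxminus y\}$ by Lemma~\ref{basic facts} and separation; dually $\ldd\,\Gamma x=(\ldvert\,\Gamma x)''=\Gamma(\diamondvert x)$. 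The normal forms then follow from density, using $A=\bigvee_{x\in A}\Gamma x$ and $C=\bigcap_{C\upv y}{}\rperp\{y\}$ (Lemma~\ref{basic facts}) together with the fact that $\ldd$ distributes over arbitrary joins of stable sets (Theorem~\ref{dist from section stability}, via smoothness (F4)) and that $\lbb$ distributes over arbitrary intersections of stable sets (Propositions~\ref{box as restriction} and~\ref{soundness for box}).

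Finally I would unfold each inclusion into a pointwise condition on the Galois relation, using the elementary facts of Lemma~\ref{basic facts} that $x\in{}\rperp\{w\}$ iff $x\upv w$, that Galois sets are upward closed, and that $W''\subseteq G$ iff $W\subseteq G$ for any Galois set $G$: one finds that $A\subseteq\lbb C$ holds iff $x\upv\boxminus y$ for all $x\in A$ and all $y\in Z_\partial$ with $C\upv y$, and that $\ldd A\subseteq C$ holds iff $\diamondvert x\upv y$ for all such $x,y$. Axiom (FB), in its stated form $x\upv\boxminus y\leftrightarrow\diamondvert x\upv y$, identifies these two conditions, giving $A\subseteq\lbb C$ iff $\ldd A\subseteq C$. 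The argument is a short chain of definitional unfoldings with no genuine obstacle; the one step calling for a little care is this last translation back and forth between inclusions of (co)stable sets and $\upv$-statements about the points $\boxminus y$ and $\diamondvert x$, all of which is routine bookkeeping from Lemma~\ref{basic facts}.
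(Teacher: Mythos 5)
Your proposal is correct and follows essentially the same route as the paper: both reduce $\lbb C$ and $\ldd A$ to the normal forms $\bigcap_{C\upv y}{}\rperp\{\boxminus y\}$ and $\bigvee_{x\in A}\Gamma(\diamondvert x)$ via the point operators on generators and the (co)distribution properties, and then read both inclusions off as the pointwise conditions $x\upv\boxminus y$ and $\diamondvert x\upv y$ (for $x\in A$, $C\upv y$), which (FB) identifies. Your write-up merely spells out in more detail the computation of $\lbb{}\rperp\{y\}$ and $\ldd\Gamma x$ that the paper takes as already established.
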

\begin{proof}
The claim is equivalent to $A\subseteq\bigcap_{C\upv y}{}\rperp\{\boxminus y\}$ iff $\bigvee_{x\in A}\Gamma(\diamondvert x)\subseteq C$ iff $C'\subseteq \bigcap_{x\in A}\{\diamondvert x\}\rperp$.

Assuming $A\subseteq\bigcap_{C\upv y}{}\rperp\{\boxminus y\}$, let $C\upv y$ and $x\in A$. From the assumption we get $x\upv\boxminus y$ and, by axiom (FB), this is equivalent to $\diamondvert x\upv y$. This shows $C'\subseteq \bigcap_{x\in A}\{\diamondvert x\}\rperp$. Conversely, from the hypothesis $C'\subseteq \bigcap_{x\in A}\{\diamondvert x\}\rperp$, if $x\in A$ and $C\upv y$ we obtain $\diamondvert x\upv y$ and then by axiom (FB) $x\upv\boxminus y$ follows. Thus $A\subseteq\bigcap_{C\upv y}{}\rperp\{\boxminus y\}$.
\end{proof}

\subsection{S5-Frames}
\label{K5-frames section}
Let $\mathfrak{F}$ be a refined frame (axiomatized in Table~\ref{refined frames axioms}) and assume the following frame axiom also holds
\begin{tabbing}
(FS5)\hskip5mm\= $\forall x\in Z_1\;[(\diamondvert x)R''_\Box\subseteq R^{11}_\Diamond x$ and $R^{11}_\Diamond(\boxminus x)\subseteq S^{11}_\Box x]$.
\end{tabbing}
Given the definition of $S^{11}_\Box$ and the fact that $R^{11}_\Diamond x=\Gamma(\diamondvert x)$, $(\diamondvert x)R''_\Box\subseteq R^{11}_\Diamond x$ can be equivalently restated as  $\forall x\in Z_1\;(\diamondvert x,\diamondvert x)\in S^{11}_\Box$.

\begin{lemma}
\label{5lemma}
For a stable set $A$, if $x\in A$, then ${\boxminus\diamondvert}x\in\lbb\ldd A$.
\end{lemma}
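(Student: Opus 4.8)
The plan is to push the whole computation down to principal (closed) elements, where $\ldd$ and $\lbb$ are computed by the point operators $\diamondvert$ and $\boxminus$ of Definition~\ref{point operators defn}, and then invoke monotonicity of $\lbb$. The identity I want to land on is $\lbb\ldd\Gamma x=\Gamma(\boxminus\diamondvert x)$, from which the statement is immediate once one knows $\Gamma x\subseteq A$.

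First, since $A$ is a Galois set and $x\in A$, it is an upset (Lemma~\ref{basic facts}, item~3), so $\Gamma x\subseteq A$. Next I compute $\ldd\Gamma x$: by axiom (F3) we have $R_\Diamond w=\Gamma(\diamondvert w)$ for every $w\in Z_1$, so by the definition of the image operator in \eqref{diamond image op} together with monotonicity of $\diamondvert$ one gets $\ldvert\Gamma x=\bigcup_{x\leq w}\Gamma(\diamondvert w)=\Gamma(\diamondvert x)$; since $\Gamma(\diamondvert x)$ is closed, hence Galois stable, $\ldd\Gamma x=(\ldvert\Gamma x)''=\Gamma(\diamondvert x)$ (this is exactly the identity already used in Lemma~\ref{D lemma} and in the proof of Proposition~\ref{D-prop}). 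As $\ldd$ distributes over joins (Theorem~\ref{dist from section stability}), it is monotone, so from $\Gamma x\subseteq A$ we obtain $\Gamma(\diamondvert x)=\ldd\Gamma x\subseteq\ldd A$.

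Now I apply $\lbb$. By Proposition~\ref{soundness for box} the operator $\lbb$ distributes over intersections, hence is monotone, so $\lbb\Gamma(\diamondvert x)\subseteq\lbb\ldd A$. It remains to evaluate $\lbb\Gamma(\diamondvert x)$: by Proposition~\ref{box as restriction}, $z\in\lbb\Gamma(\diamondvert x)$ iff $zR''_\Box\subseteq\Gamma(\diamondvert x)$, i.e.\ $\lbb\Gamma(\diamondvert x)=S^{11}_\Box(\diamondvert x)$ in the notation of Definition~\ref{S11box defn}, and by axiom (F5) together with Definition~\ref{point operators defn} this section is the closed element $\Gamma(\boxminus\diamondvert x)$. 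Since $\boxminus\diamondvert x\leq\boxminus\diamondvert x$, we have $\boxminus\diamondvert x\in\Gamma(\boxminus\diamondvert x)=\lbb\Gamma(\diamondvert x)\subseteq\lbb\ldd A$, which is the claim.

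I do not expect a genuine obstacle here: the lemma is a bookkeeping step assembling identities already established on principal elements. The only points requiring care are the two identifications $\ldd\Gamma x=\Gamma(\diamondvert x)$ and $\lbb\Gamma z=\Gamma(\boxminus z)$ — the first resting on axiom (F3), the second on axiom (F5) together with the representation of $\lbb$ in Proposition~\ref{box as restriction} — and the observation that neither the second-order axiom (F6) nor the frame axiom (FS5) of S5-frames is needed for this direction; those will only be invoked when this lemma is combined with the dual inequality to derive the S5 axioms in the full complex algebra.
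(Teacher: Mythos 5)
Your proof is correct and takes essentially the same route as the paper's: reduce to $\Gamma x\subseteq A$, use monotonicity of $\lbb\ldd$, and evaluate $\lbb\ldd\Gamma x=\Gamma(\boxminus\diamondvert x)$ on the principal element. You simply fill in the two identifications $\ldd\Gamma x=\Gamma(\diamondvert x)$ and $\lbb\Gamma z=S^{11}_\Box z=\Gamma(\boxminus z)$ that the paper leaves implicit, and your remark that neither (F6) nor (FS5) is needed here is accurate.
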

\begin{proof}
The hypothesis is equivalent to $\Gamma x\subseteq A$, from which by monotonicity we obtain that $\lbb\ldd\Gamma x\subseteq\lbb\ldd A$. The left-hand side is equal to $\Gamma({\boxminus\diamondvert}x)$,  hence if $x\in A$, indeed ${\boxminus\diamondvert}x\in\lbb\ldd A$.
\end{proof}

\begin{prop}
\label{5Prop}
Let $\mathfrak{F}$ be a refined frame satisfying, in addition, axiom (FS5). Then for any stable set $A$, the inclusion $\ldd A\subseteq\lbb\ldd A$ is true.
\end{prop}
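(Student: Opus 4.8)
The plan is to establish the inclusion first on the closed principal elements $\Gamma x$ and then to lift it to an arbitrary stable set $A$ using the join decomposition $\ldd A=\bigvee_{x\in A}\ldd\Gamma x$ (Theorem~\ref{dist from section stability}). Since joins in $\gpsi$ are least upper bounds, once we know that $\ldd\Gamma x\subseteq\lbb\ldd A$ holds for every $x\in A$, we may conclude $\ldd A\subseteq\lbb\ldd A$. So the whole argument reduces to proving $\ldd\Gamma x\subseteq\lbb\ldd A$ for a fixed $x\in A$.

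For this I would use two ingredients. First, the restatement of the first conjunct of (FS5) recorded just after its statement, namely that $(\diamondvert x,\diamondvert x)\in S^{11}_\Box$ for all $x\in Z_1$; unfolding Definition~\ref{S11box defn} together with the identity $S^{11}_\Box x=\Gamma(\boxminus x)$ from Definition~\ref{point operators defn}, this says $\diamondvert x\in\lbb\Gamma(\diamondvert x)=\Gamma(\boxminus\diamondvert x)$, i.e. $\boxminus\diamondvert x\leq\diamondvert x$. Since $u\mapsto\Gamma u$ reverses the order and we have $\ldd\Gamma x=\Gamma(\diamondvert x)$ and $\lbb\ldd\Gamma x=\Gamma(\boxminus\diamondvert x)$ (the former from $\ldvert\Gamma x=R^{11}_\Diamond x=\Gamma(\diamondvert x)$ via axioms (F2), (F3) and stability of $\Gamma(\diamondvert x)$, the latter computed inside the proof of Lemma~\ref{5lemma}), the inequality $\boxminus\diamondvert x\leq\diamondvert x$ yields the S5-inclusion on the principal element, $\ldd\Gamma x=\Gamma(\diamondvert x)\subseteq\Gamma(\boxminus\diamondvert x)=\lbb\ldd\Gamma x$. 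Second, Lemma~\ref{5lemma}: from $x\in A$ it gives $\boxminus\diamondvert x\in\lbb\ldd A$, and since $\lbb\ldd A$ is a Galois set, hence an upset, this forces $\Gamma(\boxminus\diamondvert x)=\lbb\ldd\Gamma x\subseteq\lbb\ldd A$. Chaining the two inclusions, $\ldd\Gamma x\subseteq\lbb\ldd\Gamma x\subseteq\lbb\ldd A$, which is exactly what the first paragraph reduced the proof to.

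The delicate point is purely the bookkeeping with the point operators: one must check that the first conjunct of (FS5) really amounts to $\boxminus\diamondvert x\leq\diamondvert x$, and not to its converse, which hinges on the order-reversal of $u\mapsto\Gamma u$ together with the identities $\lbb\Gamma u=\Gamma(\boxminus u)$ and $\ldd\Gamma x=\Gamma(\diamondvert x)$; everything else is just monotonicity of $\lbb$ and $\ldd$ and the already-available join decomposition of $\ldd$. It is worth stressing that one cannot simply distribute $\lbb\ldd$ over the join $\bigvee_{x\in A}\Gamma x=A$, since $\lbb$ preserves only meets and not joins; it is precisely Lemma~\ref{5lemma} (monotonicity of $\lbb\ldd$ applied to $\Gamma x\subseteq A$) that serves as a surrogate for that missing distributivity.
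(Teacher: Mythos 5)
Your proof is correct and follows essentially the same route as the paper's: the same decomposition $\ldd A=\bigvee_{x\in A}\Gamma(\diamondvert x)$, the same use of Lemma~\ref{5lemma} together with the point inequality $\boxminus\diamondvert x\leq\diamondvert x$ extracted from (FS5), and upward-closedness of Galois sets to conclude. The only cosmetic difference is that you phrase the final step as a chain of inclusions of principal elements $\ldd\Gamma x\subseteq\lbb\ldd\Gamma x\subseteq\lbb\ldd A$, where the paper states the equivalent membership fact $\diamondvert x\in\lbb\ldd A$ directly.
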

\begin{proof}
Since $\ldd A=\bigvee_{x\in A}\ldd\Gamma x=\bigvee_{x\in A}\ldvert\Gamma x=\bigvee_{x\in A}\Gamma(\diamondvert x)$ it suffices to argue that for all $x\in A$, $\diamondvert x\in\lbb\ldd A$. By Lemma~\ref{5lemma}, ${\boxminus\diamondvert}x\in\lbb\ldd A$. The new axiom (FS5) is equivalent to the point inequality ${\boxminus\diamondvert}x\leq\diamondvert x$. Galois sets are increasing, hence $\diamondvert x\in\lbb\ldd A$.
\end{proof}

The dual case for the validity of $\ldd\lbb A\subseteq\lbb A$ is similar, left to the interested reader.

\section{Canonical Frame, Canonicity of Axioms and Completeness}
\label{canonical frame section}

\subsection{Canonical Frame Construction}
\label{canonical frame construction section}
\begin{thm}[Realizing Distribution Types]
\label{rep prop}
Let $\mathbf{L}_\tau=(L,\leq,\wedge,\vee,0,1,(f_j)_{j\in J})$ be a normal lattice expansion, where for each $j\in J$, $\delta(j)=(j_1,\ldots,j_{n(j)};j_{n(j)+1})$ is the distribution type of the normal lattice operator $f_j$.

Define the dual frame $(\mathbf{L}_\tau)_+=(s,Z,I,(R_j)_{j\in J},\sigma)$ of $\mathbf{L}_\tau$ as follows:
\begin{itemize}
  \item $s=\{1,\partial\}$, $Z=(Z_1,Z_\partial)$, where $Z_1=\filt(\mathbf{L}_\tau)$, $Z_\partial=\idl(\mathbf{L}_\tau)$ are the sets of non-empty lattice filters and ideals, respectively
  \item $xIy$ iff $x\cap y=\emptyset$ 
  \item For each $j\in J$ and where $\delta(j)=(j_1,\ldots,j_{n(j)};j_{n(j)+1})$ is the distribution type of $f_j$, the relation $R_j$ of sort $\sigma(j)=(j_{n(j)+1};j_1\cdots j_{n(j)})$ is defined as in the classical case, except for the sorting, by
      \[
      wR_j\vec{u} \mbox{ iff }\forall a_1,\ldots, a_n\in L(\bigwedge_{k=1}^{n(j)}(a_k\in u_k)\lra f_j\vec{a}\in w)
      \]
      and it is smooth.
\end{itemize} 
Then the full complex algebra $((\mathbf{L}_\tau)_+)^+$ of the dual frame $(\mathbf{L}_\tau)_+$ of the normal lattice expansion $\mathbf{L}_\tau$ is a canonical extension of $\mathbf{L}_\tau$ with embedding map sending a lattice element to the set of filters containing it and such that if $\delta(j)=(j_1,\ldots,j_{n(j)};1)$ (the operator $f_j$ outputs joins) then $\overline{F}_j^1$ is the $\sigma$-extension of $f_j$, else it is its $\pi$-extension (in the sense of \cite[Definition~4.1]{mai-harding}).
\end{thm}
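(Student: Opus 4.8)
The plan is to obtain the statement as the instantiation, to the dual frame $(\mathbf{L}_\tau)_+$, of the general representation theorem for normal lattice expansions of \cite{duality2}, carrying it out in three stages. First I would show that the underlying bounded lattice of $\mathfrak{F}^+$, for $\mathfrak{F}=(\mathbf{L}_\tau)_+$, is a canonical extension of $(L,\leq,\wedge,\vee,0,1)$ via the embedding $e(a)=\widehat{a}:=\{x\in\filt(\mathbf{L}_\tau)\midsp a\in x\}$. Second I would compute each relation $R_j$ on tuples and verify the standing frame conditions. Third I would identify the derived stable-set operator $\overline{F}^1_j$ with $f^\sigma_j$ when $\delta_j(n+1)=1$ and with $f^\pi_j$ when $\delta_j(n+1)=\partial$, so that Definition~\ref{canonical extn of NLEs} applies.

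For the first stage, $e$ lands in $\gpsi$ because $\widehat{a}=\Gamma(\mathord\uparrow a)$ is a principal closed element (dually $\widehat{a}'=\Gamma(\mathord\downarrow a)$ is a closed element of $\gphi$); $e$ preserves $\wedge,\vee,0,1$, and it is injective since $a\not\leq b$ is witnessed by $(\mathord\uparrow a,\mathord\downarrow b)\in I$. I would then record that the closed elements of $\gpsi$ are exactly the $\Gamma x$ for $x\in\filt(\mathbf{L}_\tau)$ and the open elements the $\{y\}'$ for $y\in\idl(\mathbf{L}_\tau)$, with $\Gamma x=\bigcap_{a\in x}\widehat{a}$ and $\{y\}'=\bigvee_{a\in y}\widehat{a}$; density is then immediate from Lemma~\ref{basic facts}(5), since every stable $U$ equals $\bigvee_{x\in U}\Gamma x=\bigcap_{U\upv y}\{y\}'$ and so is both a meet of joins and a join of meets of elements of $e[L]$. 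Compactness I would reduce --- via $\bigcap_i\Gamma x_i=\Gamma x$ for $x$ the filter generated by $\bigcup_i x_i$, the order-dual identity for opens, and the equivalence $\Gamma x\subseteq\{y\}'\iff x\cap y\neq\emptyset$ --- to the finitary fact that the filter generated by a union of filters meets the ideal generated by a union of ideals only when finitely many already do, i.e.\ to the fact that filter and ideal generation use only finite meets and joins.

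For the second stage, unwinding the defining clause gives that $wR_j\vec{u}$ holds iff $w$ --- a filter when $\delta_j(n+1)=1$, an ideal when $\delta_j(n+1)=\partial$ --- contains $f_j(\vec{a})$ for every $\vec{a}$ with $a_k\in u_k$, i.e.\ iff $w$ contains the filter/ideal $\widehat{j}(\vec{u})$ generated by $\{f_j(\vec{a})\midsp a_k\in u_k\}$; hence $R_j\vec{u}=\Gamma(\widehat{j}(\vec{u}))$, which is (F3), while (F2) and smoothness (F4) follow from monotonicity of generation and the Galois calculus, so that $(\mathbf{L}_\tau)_+$ is a legitimate separated smooth frame and $\mathfrak{F}^+$, $\overline{F}^1_j$ are well defined. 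To see $\overline{F}^1_j$ extends $f_j$ I would feed the embedded generators $\widehat{a_k}=\Gamma(\mathord\uparrow a_k)$ into the coordinates with $j_k=1$ and the co-stable closed elements $\widehat{a_r}'=\Gamma(\mathord\downarrow a_r)$ --- as prescribed by the internal dualization in \eqref{2single-sorted} --- into those with $j_r=\partial$; then \eqref{sorted image ops}, the computation of $R_j$ above, and the monotonicity of $f_j$ dictated by $\delta_j$ (which makes $f_j(\vec{a})$ extremal over all admissible tuples, so that the filter or ideal it generates is principal) yield $\overline{F}_j$ of this tuple equal to $\widehat{f_j(\vec{a})}$, resp.\ $\widehat{f_j(\vec{a})}'$, and the outer $'$ in \eqref{2single-sorted} when $\delta_j(n+1)=\partial$ returns $\widehat{f_j(\vec{a})}$.

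For the third stage I would first restrict to principal arguments, where the second-stage computation together with the equivalence $\Gamma x\subseteq\widehat{a}\iff a\in x$ shows that $\overline{F}^1_j$ agrees there with the value assigned by the coordinatewise form of \eqref{sigma/pi-extn} to $f^\sigma_j$ when $\delta_j(n+1)=1$ (and to $f^\pi_j$ when $\delta_j(n+1)=\partial$, the $\partial$-coordinates being read through the Galois duality). To upgrade this to all of $\gpsi$: when $\delta_j(n+1)=1$, $\overline{F}^1_j$ distributes over arbitrary joins in each place by Theorem~\ref{dist from section stability}, every stable set is the join of the closed elements below it (Lemma~\ref{basic facts}(5)), and $f^\sigma_j$ is recovered from its closed-argument values by the same join formula of \eqref{sigma/pi-extn}, whence $\overline{F}^1_j=f^\sigma_j$; when $\delta_j(n+1)=\partial$, transporting these facts through the dual isomorphism $(\;)':\gpsi\iso\gphi^{\mathrm{op}}$ shows that $\overline{F}^1_j$ co-distributes over joins in the relevant place, agrees with $f^\pi_j$ on open arguments, and is recovered from those by the meet formula of \eqref{sigma/pi-extn}, whence $\overline{F}^1_j=f^\pi_j$. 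Definition~\ref{canonical extn of NLEs} then gives the claim. I expect the third stage to be the main obstacle: it requires aligning the sort-swapping of \eqref{2single-sorted} and the mixed monotonicity of $R_j$ with the asymmetric closed/open --- i.e.\ $\sigma$ versus $\pi$ --- prescriptions of \eqref{sigma/pi-extn}, whereas the first stage is the (standard) filter--ideal representation and the rest is bookkeeping.
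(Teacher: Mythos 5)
Your proposal is correct, and its skeleton --- (i) canonical extension of the underlying bounded lattice via $a\mapsto\widehat{a}$, (ii) computation of the canonical relations and frame conditions, (iii) identification of $\overline{F}^1_j$ with $f_j^\sigma$ or $f_j^\pi$ --- is exactly the decomposition the paper uses. The difference is one of execution: the paper's proof is almost entirely by citation, delegating (i) to \cite[Proposition~2.6]{mai-harding} and \cite{choiceFreeStLog}, the equivalence of the relational clause in the statement with the point-operator definition of the canonical relations to \cite[Lemma~4.4]{duality2}, axioms (F1)--(F4) to \cite{duality2}, and (iii) to \cite[Proposition~28]{kata2z}, reserving its own ink for the signature-specific material (the explicit computations for $\Box,\Diamond,\ra$ and the refined-frame axioms (F5), (F6) of Table~\ref{refined frames axioms}, which go beyond the literal statement but are needed for the later completeness argument). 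You instead reconstruct the cited material directly, which buys a self-contained argument: your density/compactness reduction to the finitary behaviour of filter and ideal generation, your observation that $R_j\vec{u}=\Gamma(\widehat{j}(\vec{u}))$ with $\widehat{j}(\vec{u})$ principal on embedded generators, and your matching of $\overline{F}^1_j$ against the closed/open-element formulas of \eqref{sigma/pi-extn} are all sound and are essentially the proofs living inside the references. The one step where your directness thins out is smoothness (F4): that the sections $wR'_j\vec{u}[\;]_k$ of the Galois dual relation in argument places $1,\ldots,n(j)$ are Galois sets is the technical core of \cite[Theorem~3.12, Lemma~4.4]{duality2}, it does not follow from ``monotonicity of generation'' alone, and your third stage silently depends on it through the appeal to Theorem~\ref{dist from section stability}. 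Since the paper also leaves this to the citation, this is a matter of emphasis rather than a gap; but if you intend the proof to be genuinely self-contained, that is the one lemma you must actually write out.
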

\begin{proof}
A detailed exposition of bounded lattice representation and  duality is given in  \cite{choiceFreeStLog} and, in a context of implicative lattices, in \cite{choiceFreeHA}, both presenting special cases of the duality for normal lattice expansions in \cite{duality2}. The lattice representation and duality underlying the results of each of \cite{duality2,choiceFreeHA,choiceFreeStLog} first appeared in print in \cite{iulg,sdl}. The way we define point operators and canonical relations was first introduced in \cite{dloa}. Point operators, as we define them, are the same as the duals of lattice quasi-operators defined in the dual lattice space by Moshier and Jipsen \cite{Moshier2014a,Moshier2014b}, as we detailed in \cite[Remark~4.2, Remark~4.8]{duality2}. That the complete lattice of stable sets is identical to the complete lattice of the Moshier-Jipsen saturated sets (defined as the intersections of open filters that cover them) was proven in \cite[Proposition~4.6]{choiceFreeStLog}.

For an implicative modal lattice $\mathbf{L}$, following \cite{choiceFreeStLog,choiceFreeHA}, we let  $Z_1=\filt(\mathbf{L})$, $Z_\partial=\idl(\mathbf{L})$ be the sets of non-empty lattice filters and ideals, respectively, and we define the relation $I\subseteq\filt(\mathbf{L})\times\idl(\mathbf{L})$ by $xIy$ iff $x\cap y=\emptyset$ (with the Galois relation of the frame defined then by $x\upv y$ iff $x\cap y\neq\emptyset$). The underlying lattice of $\mathbf{L}$ is embedded in the complete lattice $\gpsi$ of stable sets of filters via the map $\zeta_1$ defined by $\zeta_1(a)=\{x\in\filt(\mathbf{L})\midsp a\in x\}=X_a$ and dually embedded in $\gphi$ via the map $\zeta_\partial(a)=\{y\in\idl(\mathbf{L})\midsp a\in y\}=Y^a$. Meets are taken to intersections and lattice joins to joins  (closures of unions) of stable sets in $\gpsi$. 

Note that $X_a=\{x\in\filt(\mathbf{L})\midsp x_a\subseteq x\}=\Gamma x_a$ and $Y^a=\{y\in\idl(\mathbf{L})\midsp y_a\subseteq y\}=\Gamma y_a$, where for a lattice element $e$ we write $x_e, y_e$ for the principal filter and ideal, respectively, generated by $e$. The sets $X_a$ are clopen elements of $\gpsi$, since $\Gamma x_a={}\rperp\{y_a\}={}\rperp(\Gamma y_a)$ (and then so are the sets $Y^a$ in $\gphi$). 

Taking the families $\mathcal{B}=\{X_a\midsp a\in\mathbf{L}\}$ and $\mathcal{C}=\{Y^a\midsp a\in\mathbf{L}\}$ as bases for topologies on $Z_1=\filt(\mathbf{L})$ and $Z_\partial=\idl(\mathbf{L})$, respectively, it was shown in \cite{choiceFreeStLog} that $\mathcal{B,C}$ generate spectral topologies (\cite[Proposition~4.2]{choiceFreeStLog}), that $\mathcal{B,C}$ are the families of compact open, Galois stable and co-stable sets, respectively, and that they are dually isomorphic lattices and sublattices of $\gpsi,\gphi$, respectively (\cite[Proposition~4.4]{choiceFreeStLog}). Furthermore, the map $a\mapsto X_a$ is a bounded lattice isomorphism (\cite[Theorem~4.5]{choiceFreeStLog}). 

That the full complex algebra of the dual frame of the lattice is a canonical extension of the lattice was proven in \cite[Proposition~2.6]{mai-harding}. 

The canonical relations $R_\Box^{\partial\partial},R^{11}_\Diamond$ and $R^{\partial 1\partial}_\ra=T$ are defined using the canonical point operators $\boxminus: Z_\partial\lra Z_\partial$, $\diamondvert:Z_1\lra Z_1$ and $\triangleright:Z_1\times Z_\partial\lra Z_\partial$, setting $yR_\Box v$ iff $\boxminus v\subseteq y$, $xR_\Diamond z$ iff $\diamondvert z\subseteq x$ and $yT^{\partial 1\partial}xv$ iff $x{\triangleright}v\subseteq y$, given the point operator definitions in~\eqref{canonical point operators}
\begin{equation}\label{canonical point operators}
\boxminus y=\bigvee\{y_{\Box a}\midsp a\in y\}\hskip1cm \diamondvert x=\bigvee\{x_{\Diamond a}\midsp a\in x\}\hskip1cm x{\triangleright}y=\bigvee\{y_{a\ra b}\midsp a\in x, b\in y\}.
\end{equation}
A point operator $\boxminus$, generating the relation $S^{11}_\Box$, is defined on filters by $\boxminus x=\bigvee\{x_{\Box a}\midsp a\in x\}$.

The reader can easily verify that the relations $R_\Box^{\partial\partial},R^{11}_\Diamond$ and $R^{\partial 1\partial}_\ra=T$ are equivalently defined by the respective instance of the clause in the statement of this Theorem. The general proof (for any normal lattice expansion and the associated canonical relations) of this fact is given in~\cite[Lemma~4.4]{duality2}. 

Operators $\lbb,\ldd$ and $\Ra$ are defined in the canonical frame from the frame relations as detailed in Section~\ref{frames introduction section} for any frame. That the so defined maps are the $\sigma$ (for $\ldd$) and $\pi$-extension (for $\lbb, \Ra$), depending on the output type of their distribution type was argued for, for any normal lattice expansion, in \cite[Proposition~28]{kata2z}.

That the lattice representation map $a\mapsto X_a$ is an isomorphism of implicative modal lattices follows from the observations below
\begin{tabbing}
$\Diamond a$\hskip7mm\=$\mapsto$\hskip2mm\= $X_{\Diamond a}=\Gamma x_{\Diamond a}=\Gamma(\diamondvert x_a)=\ldvert\Gamma x_a=\ldd\Gamma x_a=\ldd X_a$\\
$\Box a$\>$\mapsto$\> $X_{\Box a}=\Gamma x_{\Box a}={}\rperp\{y_{\Box a}\}={}\rperp\{\boxminus y_a\}=\lbb({}\rperp\{y_a\})=\lbb\Gamma x_a=\lbb X_a$\\
$a\ra b$\>$\mapsto$\> $X_{a\ra b}=X_a\Ra X_b$, by \cite[Proposition~4.8]{choiceFreeHA}.
\end{tabbing}

The axioms (F1)--(F4) for a refined frame (though the term was not used) were argued for in \cite{duality2}, for any normal lattice expansion. 

For axiom (F5), the defining condition for the relation $S^{11}_\Box$ in the canonical frame is $zS^{11}_\Box x$ iff $\boxminus x\subseteq z$, where we define $\boxminus  x$ to be the filter generated by the set $\{\Box a\midsp a\in x\}$, equivalently $\boxminus  x=\bigvee_{a\in x}x_{\Box a}$. To verify that $zS^{11}_\Box x$ iff $zR''_\Box\subseteq\Gamma x$ we verify that $S^{11}_\Box x=\lbb\Gamma x$. Since $\lbb\Gamma x=\bigcap_{x\upv y}\rperp\{y\}=\bigcap_{x\upv y}\rperp\{\boxminus y\}$, the claim to prove is equivalent to showing that for any filters $x,z$, $\boxminus  x\subseteq z$ iff $\forall y(x\upv y\lra z\upv\boxminus y)$, where in the canonical frame $\upv$ means non-empty intersection. If $\boxminus \subseteq z$ and $y$ is an ideal such that $a\in x\cap y\neq\emptyset$, then by $a\in x$ and $a\in y$ we obtain $\Box a\in\boxminus  x$ and $\Box a\in\boxminus y$. Hence $\Box a\in z\cap\boxminus y\neq\emptyset$. 

Conversely, assume $\forall y(x\upv y\lra z\upv\boxminus y)$, but suppose for a contradiction that $\boxminus  x\not\subseteq z$. Hence there is an element $a\in x$ such that $\Box a\not\in z$. But since $a\in x\cap y_a$ (where $y_a$ is the principal ideal generated by the element $a$), the case hypothesis implies that $z\upv \boxminus y_a$. Since $\boxminus y_a=y_{\Box a}$ we get $\Box a\in z$, contradiction.

Finally, axiom (F6) posits that $\lbb^u A\subseteq\lbb^\ell A$, for any stable set $A$. In the canonical frame $\lbb^u=\Box^\pi$ and $\lbb^\ell=\Box^\sigma$ are the $\pi$ and $\sigma$ extensions of the box operator. 

Having argued that the full complex algebra of the canonical frame is a canonical extension of the represented implicative modal lattice, identity of the $\sigma$ and $\pi$ extensions of the box operator holds, by \cite[Lemma~4.4]{mai-harding}. 

By the above, the proof that the canonical frame is a refined frame (axiomatized in Table~\ref{refined frames axioms}) is complete.
\end{proof}

It remains to examine the cases where additional axioms are satisfied in the implicative modal lattice.

For the lattice distribution axiom and for the case where the lattice is a Heyting algebra and the respective proofs that the full complex algebra of the canonical frame is (completely) distributive and a (complete) Heyting algebra, respectively, we refer the reader to Section~\ref{distributive/Heyting section} and to \cite[Proposition~4.7, Proposition~4.11]{choiceFreeHA}.

\subsection{Canonicity of the K, D, T, B and S4, S5 Axioms}
\begin{prop}[Canonicity of the K-Axiom]
The frame axiom (FK) holds in the canonical frame and it is equivalent to the inclusion
  $\lbb(\Gamma x\Ra{}\rperp\{y\})\subseteq\lbb\Gamma x\Ra\lbb({}\rperp\{y\})$,  for any filter $x$ and ideal $y$.
\end{prop}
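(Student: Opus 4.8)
The plan is to reduce everything to the single point inequality $\boxminus(x{\triangleright}y)\leq\boxminus x{\triangleright}\boxminus y$ in the canonical frame, for all $x\in Z_1$ and $y\in Z_\partial$; call this (PI). Here $\boxminus$ and $\triangleright$ are the canonical point operators of Theorem~\ref{rep prop} and~\eqref{canonical point operators}. First I would observe that (FK) and (PI) are equivalent in any refined frame: recalling that $vT^{\partial 1\partial}zy$ iff $z{\triangleright}y\leq v$ and, by~\eqref{point ops and relations}, that $vR^{\partial\partial}_\Box w$ iff $\boxminus w\leq v$, the hypotheses of (FK) read $\boxminus x{\triangleright}\boxminus y\leq v$ and $w\leq x{\triangleright}y$ and its conclusion reads $\boxminus w\leq v$; so (PI) together with monotonicity of $\boxminus$ and transitivity of $\leq$ yields (FK), and instantiating (FK) at $v:=\boxminus x{\triangleright}\boxminus y$, $w:=x{\triangleright}y$ recovers (PI).

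Next I would connect (PI) to the displayed inclusion. By the identities established in the proof of Proposition~\ref{K-prop} (valid in any refined frame, hence in the canonical frame by Theorem~\ref{rep prop}; they use~\eqref{Ra prop} and axiom (F5)), we have $\lbb(\Gamma x\Ra{}\rperp\{y\})={}\rperp\{\boxminus(x{\triangleright}y)\}$ and $\lbb\Gamma x\Ra\lbb({}\rperp\{y\})=\Gamma(\boxminus x)\Ra{}\rperp\{\boxminus y\}={}\rperp\{\boxminus x{\triangleright}\boxminus y\}$. Since the frame is separated, ${}\rperp\{p\}\subseteq{}\rperp\{q\}$ iff $p\leq q$ for $p,q\in Z_\partial$ (this is the definition of $\leq$ on $Z_\partial$), so the displayed inclusion holds for a given $x,y$ precisely when (PI) holds for that $x,y$. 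Together with the previous paragraph, this shows that (FK) and the displayed inclusion (quantified over all $x,y$) are equivalent in the canonical frame.

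It remains to verify (PI) in the canonical frame, which is the only point where the lattice axiom (K) is used. By~\eqref{canonical point operators}, $\boxminus x$ is the filter generated by $\{\Box a\mid a\in x\}$, $x{\triangleright}y$ is the ideal generated by $\{a\ra b\mid a\in x,\ b\in y\}$, and (for an ideal $z$) $\boxminus z$ is the ideal generated by $\{\Box c\mid c\in z\}$. Using only the monotonicity of $\Box$ and of $\ra$ (antitone in the first, monotone in the second argument), one checks that these generating sets are already closed, up to $\leq$, under the relevant finite meets/joins, so that $\boxminus x=\{e\mid\exists a\in x,\ \Box a\leq e\}$, $\boxminus z=\{f\mid\exists b\in z,\ f\leq\Box b\}$, and $x{\triangleright}y=\{c\mid\exists a\in x,\exists b\in y,\ c\leq a\ra b\}$. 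Then for $d\in\boxminus(x{\triangleright}y)$ there are $c\in x{\triangleright}y$ with $d\leq\Box c$ and $a\in x$, $b\in y$ with $c\leq a\ra b$, so by monotonicity of $\Box$ and axiom (K), $d\leq\Box c\leq\Box(a\ra b)\leq\Box a\ra\Box b$; since $\Box a\in\boxminus x$ and $\Box b\in\boxminus y$, the element $\Box a\ra\Box b$ is a generator of the ideal $\boxminus x{\triangleright}\boxminus y$, whence $d\in\boxminus x{\triangleright}\boxminus y$. This establishes (PI), and with the two reductions above the proposition follows.

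The only slightly delicate point is the reduction of the principal-filter/ideal joins of~\eqref{canonical point operators} to the explicit forms above — i.e.\ the ``closure up to $\leq$'' claims — which is precisely where the monotonicity (and, underlying it, the normality and (co)distribution) properties of $\Box$ and $\ra$ enter; once that is in place, the K-axiom closes the argument in a single line, and everything else is unwinding definitions already set up in the excerpt.
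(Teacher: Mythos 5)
Your proposal is correct and follows essentially the same route as the paper: the heart of both arguments is the computation $e\leq\Box c\leq\Box(a\ra b)\leq\Box a\ra\Box b$ in the lattice, lifted to the point inequality $\boxminus(x{\triangleright}y)\leq\boxminus x{\triangleright}\boxminus y$ on the canonical filters and ideals, with the equivalences to (FK) and to the displayed inclusion obtained by unwinding the definitions of the canonical relations and the identities $\lbb(\rperp\{x\triangleright y\})=\rperp\{\boxminus(x\triangleright y)\}$ and $\Gamma(\boxminus x)\Ra\rperp\{\boxminus y\}=\rperp\{\boxminus x\triangleright\boxminus y\}$. You merely spell out more explicitly the directedness of the generating sets and the two-way reduction between (FK) and the point inequality, which the paper leaves implicit.
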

\begin{proof}
It follows from the definition of the point operators that $c\in x{\triangleright}y$ iff there exist elements $a\in x, b\in y$ such that $c\leq a\ra b$ and then $e\in\boxminus(x{\triangleright} y)$ iff there are elements $a\in x, b\in y$ such that $e\leq\Box(a\ra b)$. Since the K-axiom is assumed for the lattice (logic), we obtain that $e\leq\Box(a\ra b)\leq\Box a\ra\Box b$. Since $\Box a\in\boxminus x$ and $\Box b\in\boxminus y$ we get $e\in\boxminus x{\triangleright}{\boxminus} y$. By the definition of the canonical relation $R^{\partial\partial}_\Box$ we may conclude that (FK) holds in the canonical frame.

The inclusion in the statement of the current Lemma is equivalent to the inclusion $\boxminus(x{\triangleright} y)\subseteq \boxminus x{\triangleright}{\boxminus} y$. 
\end{proof}

\begin{prop}[Canonicity of the D-Axiom]
  The canonical frame for the minimal logic $\mathbf{\Lambda}^\ra_{\Box\Diamond}+\{\Box\varphi\proves\Diamond\varphi\}$ is a D-frame. In other words, the D-axiom is canonical.
\end{prop}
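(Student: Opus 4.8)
The plan is to reduce the frame axiom (FD), $\forall x\in Z_1\, S^{11}_\Box x\subseteq R_\Diamond x$, to an inclusion between the canonical point operators and then read that inclusion off directly from the D-axiom of the logic. By Lemma~\ref{D lemma}, in any refined frame (FD) is equivalent to the point inequality $\diamondvert x\leq\boxminus x$ for every $x\in Z_1$; since Theorem~\ref{rep prop} guarantees that the canonical frame is refined, it suffices to verify this inequality there. Recall that on the set $Z_1=\filt(\mathbf{L})$ of filters the frame order $\leq$ is set inclusion (so that, e.g., $X_a=\Gamma x_a$), and that by~\eqref{canonical point operators} and Definition~\ref{point operators defn} the relevant point operators are $\diamondvert x=\bigvee\{x_{\Diamond a}\midsp a\in x\}$, the filter generated by $\{\Diamond a\midsp a\in x\}$, and $\boxminus x=\bigvee\{x_{\Box a}\midsp a\in x\}$, the filter generated by $\{\Box a\midsp a\in x\}$. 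Under these identifications the required inequality $\diamondvert x\leq\boxminus x$ is exactly the inclusion $\diamondvert x\subseteq\boxminus x$.

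To establish $\diamondvert x\subseteq\boxminus x$ it suffices, since $\diamondvert x$ is the least filter containing $\{\Diamond a\midsp a\in x\}$, to show that each such generator $\Diamond a$ lies in $\boxminus x$. Fix $a\in x$. Then $\Box a\in\boxminus x$ by the definition of $\boxminus x$, and the D-axiom, being a theorem of $\mathbf{\Lambda}^\ra_{\Box\Diamond}+\{\Box\varphi\proves\Diamond\varphi\}$, yields $\Box a\leq\Diamond a$ in the Lindenbaum--Tarski algebra; since $\boxminus x$ is a filter, hence upward closed, it follows that $\Diamond a\in\boxminus x$. Thus $\{\Diamond a\midsp a\in x\}\subseteq\boxminus x$, whence $\diamondvert x\subseteq\boxminus x$, i.e. $\diamondvert x\leq\boxminus x$. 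By Lemma~\ref{D lemma} the canonical frame satisfies (FD), so it is a D-frame, and the D-axiom is canonical.

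Equivalently, one can phrase the argument purely relationally: unwinding definitions, $zS^{11}_\Box x$ holds iff $\Box a\in z$ for all $a\in x$, and $zR_\Diamond x$ holds iff $\Diamond a\in z$ for all $a\in x$; assuming the former, for any $a\in x$ we get $\Box a\in z$, and then $\Box a\leq\Diamond a$ together with upward closure of the filter $z$ gives $\Diamond a\in z$, so $S^{11}_\Box x\subseteq R_\Diamond x$. There is no real obstacle here: the only points requiring attention are getting the orientation of the frame order on filters right, so that the point inequality of Lemma~\ref{D lemma} translates into $\diamondvert x\subseteq\boxminus x$ (and not the reverse), and the elementary remark that checking the D-axiom on each generator of $\diamondvert x$ is enough because filters are closed upward.
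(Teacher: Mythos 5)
Your proof is correct and follows essentially the same route as the paper's: both reduce (FD) to the filter inclusion $\diamondvert x\subseteq\boxminus x$ and verify it generator-by-generator, using $\Box a\in\boxminus x$, the inequality $\Box a\leq\Diamond a$ from the D-axiom, and upward closure of the filter $\boxminus x$. The extra care you take over the orientation of the order on filters and the relational restatement are consistent with the paper and add nothing problematic.
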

\begin{proof}
Proving canonicity of the D-axiom is equivalent to proving that the inequality $\diamondvert x\subseteq\boxminus x$ holds for any filter $x$ in the canonical frame. Since $\diamondvert x=\bigvee_{a\in x} x_{\Diamond a}$, it is enough to show that for any filter $x$, if $a\in x$, then ${\Diamond} a\in\boxminus x$. But the latter is defined by $\boxminus x=\bigvee_{a\in x}x_{\Box a}$ and since the D-axiom holds in the lattice (logic), it is clear that if $a\in x$ then since $\Box a\in\boxminus x$ and $\Box a\leq \Diamond a$ we have $\Diamond a\in\boxminus x$. Hence (FD) holds in the canonical frame.
\end{proof}

To prove canonicity of the T and S4 axioms, it suffices by Propositions~\ref{reflexive box}, \ref{transitive box}, \ref{reflexive diamond} and \ref{transitive diamond} to prove the corresponding point inequations in the intermediate structure, as shown in the following list
\begin{tabbing}
(T$\Diamond$)\hskip4mm\= $\varphi\proves\Diamond\varphi$\hskip1.5cm\= $\diamondvert x\subseteq x$,\hskip1cm\= for all filters $x$\\
(S4$\Diamond$)\> ${\Diamond\Diamond}\varphi\proves\Diamond\varphi$\> $\diamondvert x\subseteq{\diamondvert\diamondvert}x$,\> for all filters $x$\\
(T$\Box$) \> $\Box\varphi\proves\varphi$\> $\boxminus y\subseteq y$,\> for all ideals $y$\\
(S4$\Box$)\> $\Box\varphi\proves{\Box\Box}\varphi$\> $\boxminus y\subseteq{\boxminus\boxminus}y$,\> for all ideals $y$.
\end{tabbing}

\begin{prop}[Canonicity of the T and S4 Axioms]\label{T4 prop canonical}
Let $\Delta$ be a subset of the above set of T and S4 axioms and $\mathbf{\Delta}=
\mathbf{\Lambda}^\ra_{\Box\Diamond}+\Delta$ the corresponding axiomatic extension of the minimal implicative modal logic. Then $\mathbf{\Delta}$ is canonical.
\end{prop}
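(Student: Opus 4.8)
The plan is to carry out the reduction already flagged in the text preceding the statement. By Propositions~\ref{reflexive box}, \ref{transitive box}, \ref{reflexive diamond} and \ref{transitive diamond}, in any refined frame the validity in the full complex algebra of each of T$\Box$, S4$\Box$, T$\Diamond$, S4$\Diamond$ is equivalent, respectively, to one of the point inequalities $\boxminus y\subseteq y$ and $\boxminus y\subseteq{\boxminus\boxminus}y$ (for all ideals $y$), $\diamondvert x\subseteq x$ and $\diamondvert x\subseteq{\diamondvert\diamondvert}x$ (for all filters $x$), evaluated in the intermediate structure. Since by Theorem~\ref{rep prop} the dual frame $(\mathbf{L})_+$ of any implicative modal lattice $\mathbf{L}\in\mathrm{V}(\mathbf{\Delta})$ is a refined frame whose full complex algebra is a canonical extension of $\mathbf{L}$, and since (by the completeness-via-canonicity principle recorded in Definition~\ref{canonical extn of NLEs}) this property is what turns algebraic canonicity into completeness in the relational semantics, it suffices to verify, for each axiom occurring in $\Delta$, that the matching point inequality holds in $(\mathbf{L})_+$ whenever $\mathbf{L}$ satisfies that axiom. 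Algebraic canonicity of $\mathbf{\Delta}$ itself is in any case immediate from \cite[Theorem~6.3]{mai-harding}, the T and S4 axioms being inequations between terms in a single operator, $\Box$ a dual operator and $\Diamond$ an operator; the content of the present argument is the identification of the first-order frame conditions and the check that they survive the canonical construction.

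Each of the four verifications is a one-line unwinding of the definitions~\eqref{canonical point operators} of the canonical point operators, the relevant lattice inequality, and the fact that filters are upsets and ideals downsets. For T$\Box$: as $\boxminus y$ is the ideal generated by $\{\Box a\mid a\in y\}$, it is enough that $\Box a\in y$ whenever $a\in y$, and this holds because $\Box a\leq a$ and $y$ is a downset; hence $\boxminus y\subseteq y$. For S4$\Box$: from $a\in y$ we get $\Box a\in\boxminus y$ and therefore $\Box\Box a\in{\boxminus\boxminus}y$; since $\Box a\leq\Box\Box a$ and ${\boxminus\boxminus}y$ is a downset, $\Box a\in{\boxminus\boxminus}y$, so $\boxminus y\subseteq{\boxminus\boxminus}y$. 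Dually, for T$\Diamond$: $\diamondvert x$ is the filter generated by $\{\Diamond a\mid a\in x\}$, and if $a\in x$ then $a\leq\Diamond a$ puts $\Diamond a$ in the upset $x$, so $\diamondvert x\subseteq x$. For S4$\Diamond$: $a\in x$ yields $\Diamond a\in\diamondvert x$ and hence $\Diamond\Diamond a\in{\diamondvert\diamondvert}x$; since $\Diamond\Diamond a\leq\Diamond a$ and ${\diamondvert\diamondvert}x$ is an upset, $\Diamond a\in{\diamondvert\diamondvert}x$, so $\diamondvert x\subseteq{\diamondvert\diamondvert}x$. The four cases are mutually independent, so the argument applies to an arbitrary subset $\Delta$ of the listed axioms, with $\boxminus$ and $\diamondvert$ read as the sorted point operators on ideals and filters respectively.

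I do not expect a genuine obstacle; the only care required is bookkeeping. One must keep in mind that the order $\preceq$ on points of $(\mathbf{L})_+$ is inclusion of filters (resp.\ ideals), so that the lattice inequalities translate into point inequalities of the correct orientation; that $\boxminus$ and $\diamondvert$ are monotone on ideals and filters; and that Propositions~\ref{reflexive box}--\ref{transitive diamond} are exactly what licenses passing from these point inequalities on the intermediate structure to validity on all of $\gpsi$. Beyond that, everything is a direct computation with~\eqref{canonical point operators}.
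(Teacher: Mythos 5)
Your proposal is correct and follows essentially the same route as the paper: reduce, via Propositions~\ref{reflexive box}--\ref{transitive diamond}, to the four point inequalities for $\boxminus$ on ideals and $\diamondvert$ on filters, and then verify each one in the canonical frame by unwinding the definitions in~\eqref{canonical point operators}. The only (cosmetic) difference is that you argue at the level of the generators $\Box a$, $\Diamond a$ of the principal ideals/filters, whereas the paper picks an arbitrary element $e$ of $\boxminus y$ or $\diamondvert x$ and bounds it by a single such generator; both computations are sound.
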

\begin{proof}
By Propositions~\ref{definability}, \ref{completeness} and \ref{duality}, we may equivalently work with implicative modal algebras satisfying the corresponding axioms.

Assume $a\leq\Diamond a$ holds and let $x$ be any (nonempty) filter. By definition of $\diamondvert x=\bigvee_{a\in x}x_{\Diamond a}$ (where $x_b$ designates the principal filter generated by a lattice element $b$), $e\in\diamondvert x$ iff there is some lattice element $a\in x$ such that $\Diamond a\leq e$. But then $a\leq\Diamond a\leq e$, so that $e\in x$. This proves the inclusion $\diamondvert x\subseteq x$.

Assume ${\Diamond\Diamond}a\leq\Diamond a$ and let $x$ be a filter. To prove that $\diamondvert x\subseteq{\diamondvert\diamondvert}x$, let $e\in\diamondvert x$, so that $\Diamond a\leq e$, for some $a\in x$. Then ${\Diamond\Diamond}a\in{\diamondvert\diamondvert}x$ and given ${\Diamond\Diamond}a\leq\Diamond a\leq e$ we obtain $e\in{\diamondvert\diamondvert}x$.

Assume $\Box a\leq a$ and let $y$ be an ideal. Recall that $\boxminus y=\bigvee_{a\in y}y_{\Box a}$, where we use $y_b$ to stand for the principal ideal generated by the lattice element $b$. To prove $\boxminus y\subseteq y$, let $e\in\boxminus y$, so that $e\leq\Box a$ for some $a\in y$. By the hypothesis $\Box a\leq a$ and since $y$ is an ideal we obtain $\Box a\in y$ and then also $e\in y$.

Assume $\Box a\leq{\Box\Box}a$ and let $y$ be an ideal. To prove $\boxminus y\subseteq{\boxminus\boxminus}y$ let $e\in\boxminus y$. Then $e\leq\Box a$ for some $a\in y$. Using the definition of the point operator $\boxminus$, we have $\Box a\in\boxminus y$ and ${\Box\Box}a\in{\boxminus\boxminus}y$. But then we have $e\leq\Box a\leq{\Box\Box}a\in{\boxminus\boxminus}y$, which is an ideal, so $e\in{\boxminus\boxminus}y$.
\end{proof}

\begin{prop}[Canonicity of the B-Axioms]
The logic $\mathbf{\Lambda}=\mathbf{\Lambda}^\ra_{\Box\Diamond}+\{\varphi\proves{\Box\Diamond}\varphi, {\Diamond\Box}\varphi\proves\varphi\}$ is canonical.
\end{prop}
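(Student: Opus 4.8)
The plan is to reduce the statement, via Propositions~\ref{definability}, \ref{completeness} and \ref{duality}, to a purely algebraic/frame-theoretic claim: for an arbitrary implicative modal lattice $\mathbf{L}$ validating the two B-inequalities $a\leq\Box\Diamond a$ and $\Diamond\Box a\leq a$, show that the full complex algebra of its dual frame $(\mathbf{L})_+$ again validates them. Since by Theorem~\ref{rep prop} this full complex algebra is a canonical extension of $\mathbf{L}$, that is exactly what canonicity of $\mathbf{\Lambda}$ requires. It is worth flagging at the outset \emph{why} the shortcut used for the T and S4 axioms is unavailable here: the Gehrke--Harding preservation result cited in Section~\ref{reflexive section} applies to lattice equations built from operators only, or from dual operators only, whereas the B-axioms couple the dual operator $\Box$ with the operator $\Diamond$ inside a single inequality; hence a direct argument through the frame axiom (FB) of Section~\ref{B section} is needed.

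The first step is to verify that the dual frame $(\mathbf{L})_+$ satisfies (FB) in the convenient form $\forall x\in Z_1\,\forall y\in Z_\partial(x\upv\boxminus y\leftrightarrow\diamondvert x\upv y)$. Recall that in the dual frame $x\upv y$ means $x\cap y\neq\emptyset$, that $\boxminus y=\bigvee\{y_{\Box a}\midsp a\in y\}$ for an ideal $y$, and that $\diamondvert x=\bigvee\{x_{\Diamond a}\midsp a\in x\}$ for a filter $x$. Using monotonicity of $\Box$ and of $\Diamond$ (from axioms (M$\Box$),(M$\Diamond$)) together with closure of ideals downward and of filters upward, one simplifies the two intersection conditions to witness statements: $x\upv\boxminus y$ holds iff $\Box a\in x$ for some $a\in y$, and $\diamondvert x\upv y$ holds iff $\Diamond b\in y$ for some $b\in x$. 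Thus (FB) becomes the biconditional
\[
(\exists a\in y)\ \Box a\in x \quad\Longleftrightarrow\quad (\exists b\in x)\ \Diamond b\in y .
\]
For the left-to-right direction, given $a\in y$ with $\Box a\in x$, take $b=\Box a\in x$; since $\Diamond\Box a\leq a$ and $y$ is a downset, $\Diamond b=\Diamond\Box a\in y$. For the converse, given $b\in x$ with $\Diamond b\in y$, take $a=\Diamond b\in y$; since $b\leq\Box\Diamond b$ and $x$ is an upset, $\Box a=\Box\Diamond b\in x$. Hence (FB) holds in $(\mathbf{L})_+$.

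With (FB) in hand, Proposition~\ref{B prop in frames} gives that the operators $\lbb,\ldd$ of the full complex algebra are residuated, i.e. $A\subseteq\lbb C$ iff $\ldd A\subseteq C$ for all stable sets $A,C$. Instantiating $C=\ldd A$ yields $A\subseteq\lbb\ldd A$, and instantiating $A=\lbb C$ yields $\ldd\lbb C\subseteq C$, for every stable set. Since $\val{\Box\Diamond\varphi}=\lbb\ldd\val{\varphi}$ and $\val{\Diamond\Box\varphi}=\ldd\lbb\val{\varphi}$ under any valuation (by Propositions~\ref{soundness for box} and \ref{soundness for diamond}), these two inclusions are precisely the validity of $\varphi\proves\Box\Diamond\varphi$ and $\Diamond\Box\varphi\proves\varphi$ in the full complex algebra of $(\mathbf{L})_+$. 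As that algebra is a canonical extension of $\mathbf{L}$ by Theorem~\ref{rep prop}, the variety of equivalent algebras is closed under canonical extensions, so $\mathbf{\Lambda}$ is canonical, and completeness in the relational semantics follows. I expect the main obstacle to be the bookkeeping in the first step, namely correctly collapsing the intersection conditions $x\upv\boxminus y$ and $\diamondvert x\upv y$ to the single-witness form (where one must track which monotonicity and which closure property is invoked, given that $\boxminus y$ and $\diamondvert x$ are defined by join/meet generation rather than by a single generator) and then selecting the witness in each direction so that exactly one of the two B-inequalities is used; everything after (FB) is immediate from the machinery already developed.
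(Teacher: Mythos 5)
Your proposal is correct and follows essentially the same route as the paper: verify frame axiom (FB) in the canonical frame using the B-inequalities (the paper phrases them equivalently as the residuation law $\Diamond a\leq b$ iff $a\leq\Box b$, but the witness-chasing through filters and ideals is the same), then invoke Proposition~\ref{B prop in frames} to get residuation of $\lbb,\ldd$ on stable sets and hence validity of both B-axioms in the full complex algebra. Your explicit reduction of $x\upv\boxminus y$ and $\diamondvert x\upv y$ to single-witness form is a correct elaboration of a step the paper leaves implicit.
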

\begin{proof}
Assume that $\Diamond a\leq b$ iff $a\leq \Box b$ holds in an implicative modal algebra and let $\mathfrak{F}^+$ be the full complex algebra of its dual frame. We show that for any filter $x$ and ideal $y$ the condition $x\upv\boxminus y$ iff $\diamondvert x\upv y$, i.e. axiom (FB), holds in the canonical frame.

If $e\in x\cap\boxminus y$, let $a\in y$ such that $e\leq\Box a\in\boxminus y$. By residuation in the modal lattice $\Diamond e\leq a$, but since $e\in x$, then $\diamondvert e\in\diamondvert x$, which is a filter, hence also $a\in x$. But then $a\in\diamondvert x\cap y\neq\emptyset$, i.e. $\diamondvert x\upv y$.

Conversely, if $a\in\diamondvert x\cap y$, let $e\in x$ be such that $\diamondvert e\leq a$. By residuation this is equivalent to $e\leq\Box a$ and since $a\in y$ we get $\Box a\in\boxminus y$. Since $e\in x$, we have $\Box a\in x\cap\boxminus y\neq\emptyset$, i.e. $x\upv\boxminus y$.
\end{proof}

\begin{prop}[Canonicity of Axiom S5]
For any filter $x$, ${\boxminus\diamondvert}x\subseteq\diamondvert x$ and $\boxminus x\subseteq\diamondvert\boxminus x$. In other words, the frame axiom (FS5) is valid in the canonical frame.
\end{prop}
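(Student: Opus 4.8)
The plan is to follow the template of the canonicity proofs for the D, T and S4 axioms. By Propositions~\ref{definability}, \ref{completeness} and \ref{duality} it suffices to argue that the two point inclusions of the statement hold in the dual frame of an arbitrary implicative modal lattice $\mathbf{L}$ validating the S5 inequalities $\Diamond a\leq\Box\Diamond a$ and $\Diamond\Box a\leq\Box a$. Recall from~\eqref{canonical point operators} that for a filter $x$ the filter $\diamondvert x=\bigvee\{x_{\Diamond a}\midsp a\in x\}$ is the upward closure of $\{\Diamond a\midsp a\in x\}$ (by monotonicity of $\Diamond$ and the fact that $x$ is closed under meets), and similarly $\boxminus x=\bigvee\{x_{\Box a}\midsp a\in x\}$ is the upward closure of $\{\Box a\midsp a\in x\}$ (using (M$\Box$)); hence to prove an inclusion between such filters it is enough to check it on generators. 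As phrased via Proposition~\ref{5Prop}, axiom (FS5) amounts precisely to the conjunction of the point inclusions $\boxminus\diamondvert x\subseteq\diamondvert x$ and $\boxminus x\subseteq\diamondvert\boxminus x$, so establishing these two inclusions completes the proof.

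For $\boxminus\diamondvert x\subseteq\diamondvert x$, I would take $e\in\boxminus\diamondvert x$; then $\Box c\leq e$ for some $c\in\diamondvert x$, and $\Diamond a\leq c$ for some $a\in x$. Monotonicity of $\Box$ together with $\Diamond a\leq\Box\Diamond a$ yields $\Diamond a\leq\Box\Diamond a\leq\Box c\leq e$; since $\Diamond a\in\diamondvert x$ and $\diamondvert x$ is a filter, $e\in\diamondvert x$. For $\boxminus x\subseteq\diamondvert\boxminus x$, I would take $e\in\boxminus x$ and pick $a\in x$ with $\Box a\leq e$. As $\Box a\in\boxminus x$, the element $\Diamond\Box a$ is one of the generators of $\diamondvert\boxminus x$, and $\Diamond\Box a\leq\Box a$ gives $\Diamond\Box a\leq\Box a\leq e$; since $\diamondvert\boxminus x$ is a filter, $e\in\diamondvert\boxminus x$. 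Invoking Proposition~\ref{5Prop}, we conclude that (FS5) holds in the canonical frame.

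I do not expect a genuine obstacle here; the argument is as routine as the D/T/S4 canonicity proofs, each step being a one-line application of monotonicity and of an S5 inequality of $\mathbf{L}$. What needs mild care is the bookkeeping: keeping apart the two incarnations of $\boxminus$ — the one on ideals that generates $R^{\partial\partial}_\Box$ and the one on filters that generates $S^{11}_\Box$ and occurs in the statement — checking that $\diamondvert x$ and $\boxminus x$ really are the upward closures of their sets of generators so that it suffices to test generators, and keeping the direction of the inclusions aligned with the fact that the frame order on $\filt(\mathbf{L})$ is set inclusion.
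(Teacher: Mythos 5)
Your argument is correct and is essentially the paper's own proof: the same chain $\Diamond a\leq\Box\Diamond a\leq\Box c\leq e$ for the first inclusion, with the second inclusion (which the paper leaves to the reader as "similar") worked out exactly as intended via $\Diamond\Box a\leq\Box a\leq e$. Your extra remarks on directedness of the generator sets and on testing inclusions on generators are sound and only make explicit what the paper uses implicitly.
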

\begin{proof}
If $e\in{\boxminus\diamondvert}x$, then there is some element $c\in\diamondvert x$ such that $\Box c\leq e$, by definition of $\boxminus u=\bigvee_{c\in u}x_{\Box c}$. Also, for $c$ to be in $\diamondvert x=\bigvee_{a\in x}x_{\Diamond a}$ there must be an $a\in x$ such that $\Diamond a\leq c$. For such $a$ we have ${\Box\Diamond} a\leq\Box c\leq e$. But the lattice satisfies $\Diamond a\leq{\Box\Diamond}a$ hence $\Diamond a\leq e$. Since $a\in x$, $\Diamond a\in\diamondvert x$, thereby also $e\in\diamondvert x$.

The argument for $\boxminus x\subseteq\diamondvert\boxminus x$ is similar, left to the reader.
\end{proof}

Note that for a stable set $A$, if $x\in A$, then ${\boxminus\diamondvert} x\in\lbb\ldvert A$, by Lemma~\ref{5lemma}. Then the argument in Proposition~\ref{5Prop} establishes that for any stable set $A$ of filters the inclusions $\ldd A\subseteq \lbb\ldd A$ and $\ldd\lbb A\subseteq\lbb A$ hold and this shows that the axioms S5 are canonical.

\section{Distributive and Heyting Frames}
\label{distributive/Heyting section}
By a {\em distributive frame} we mean a frame  $\mathfrak{F}=(s,Z,I,(R_j)_{j\in J},\sigma)$  such that its complete lattice of stable sets $\gpsi$ is (completely) distributive. Proposition~\ref{upper bound rel prop} identifies a condition for the frame to be distributive. That the canonical frame of a distributive lattice satisfies the condition in this Proposition was proven in \cite[Proposition~5.4, Case 5]{choiceFreeStLog}.

\begin{prop}[\mbox{\cite[Proposition~3.13]{choiceFreeHA}}]
\label{upper bound rel prop}
Let $\mathfrak{F}=(s,Z,I,(R_j)_{j\in J},\sigma)$ be a frame and $\gpsi$ the complete lattice of stable sets.   If all sections of the Galois dual relation $R'_\leq$ of the upper bound relation $R_\leq$ (where $uR_\leq xz$ iff both $x\preceq u$ and $z\preceq u$) are Galois sets, then $\gpsi$ is completely distributive.\telos
\end{prop}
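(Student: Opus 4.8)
The plan is to identify the binary meet of the lattice $\gpsi$ with the stable-set operator canonically attached to the ternary relation $R_\leq$, and then to obtain distributivity directly from Theorem~\ref{dist from section stability}. (Although $R_\leq$ is not among the signature relations $R_j$, the image-operator machinery of Section~\ref{full complex algebras section} uses only a relation together with smoothness of its Galois dual, so it applies unchanged.)

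First I would compute the operator generated by $R_\leq$. Because frames are assumed separated, the section $R_\leq xz = \{u \in Z_1 \midsp x\leq u \text{ and } z\leq u\}$ is precisely $\Gamma x\cap\Gamma z$, so the image operator on $\powerset(Z_1)\times\powerset(Z_1)$ is $F_\leq(U,W)=\bigcup_{x\in U,\,z\in W}(\Gamma x\cap\Gamma z)$. For Galois stable $A,C$ this equals $A\cap C$: the inclusion $F_\leq(A,C)\subseteq A\cap C$ holds since Galois sets are increasing, so $x\in A$ forces $\Gamma x\subseteq A$ and likewise $\Gamma z\subseteq C$; conversely any $u\in A\cap C$ sits in the term indexed by $x=z=u$. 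Since the intersection $A\cap C$ of two stable sets is again stable, the closure of the restriction of $F_\leq$ to Galois sets satisfies $\overline{F}_\leq(A,C)=(A\cap C)''=A\cap C=A\wedge C$. In other words, $\overline{F}_\leq$ \emph{is} the meet of $\gpsi$.

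Next I would apply the hypothesis. It says exactly that every section of the Galois dual $R'_\leq$ is a Galois set, i.e.\ that $R_\leq$ is smooth in the sense of Definition~\ref{smooth defn}; so Theorem~\ref{dist from section stability} applies and yields that $\overline{F}_\leq$ distributes over arbitrary joins of stable sets in each argument place. Combined with the previous step, this gives, for every $A\in\gpsi$ and every family $\{C_i\}_{i\in I}$ of stable sets, the identity $A\wedge\bigvee_{i\in I}C_i=\bigvee_{i\in I}(A\wedge C_i)$, and symmetrically in the first argument; hence $\gpsi$ is a complete lattice in which binary meet commutes with arbitrary joins.

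The delicate point — which I expect to be the main obstacle — is upgrading this frame law to the genuinely \emph{complete} distributive law $\bigwedge_i\bigvee_j x_{ij}=\bigvee_f\bigwedge_i x_{i,f(i)}$, since for infinite meets the latter does not follow formally from the former. Here I would bring in the double generation of $\gpsi$ by principal elements recorded in Lemma~\ref{basic facts}: every stable set is simultaneously the join of the closed elements $\Gamma x$ below it and the meet of the open elements $\{y\}'$ above it. Transporting the meet side through the Galois isomorphism $(\;)\rperp:\gpsi\iso\gphi^\mathrm{op}$ turns it into a join computation on the co-stable side, where the same distribution argument applies; carrying this through is exactly what is done in \cite[Proposition~3.13]{choiceFreeHA} (and in the canonical frame one may instead invoke compactness of the closed elements to the same effect).
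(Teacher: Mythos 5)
The paper does not actually prove this proposition; it imports it from \cite[Proposition~3.13]{choiceFreeHA}, so the comparison is with the cited argument, and your first two paragraphs reconstruct that argument faithfully and correctly: in a separated frame the section $R_\leq xz$ is $\Gamma x\cap\Gamma z$, so for stable $A,C$ one gets $F_\leq(A,C)=A\cap C$ (increasingness of Galois sets one way, the term $x=z=u$ the other way), hence $\overline{F}_\leq(A,C)=(A\cap C)''=A\cap C$ is binary meet on $\gpsi$; and the hypothesis is precisely smoothness of $R_\leq$, so Theorem~\ref{dist from section stability} --- whose proof uses nothing about $R_j$ beyond smoothness, as you rightly note --- yields $A\cap\bigvee_{i}C_i=\bigvee_{i}(A\cap C_i)$ and its mirror image. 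That join-infinite distributive law is what the cited proposition establishes and is all that Section~\ref{distributive/Heyting section} ever uses, so the proof is in substance complete at that point. Your third paragraph, by contrast, should be dropped or heavily qualified. The ``upgrade'' you sketch cannot work: transporting meets through the dual isomorphism $(\;)\rperp:\gpsi\iso\gphi^{\mathrm{op}}$ would at best give the dual meet-infinite law, and that would anyway require smoothness of an upper bound relation on $Z_\partial$, which is not among the hypotheses; more importantly, a complete lattice satisfying both infinite binary-over-arbitrary distributive laws need not satisfy Raney's law $\bigwedge_i\bigvee_j x_{ij}=\bigvee_f\bigwedge_i x_{i,f(i)}$ (any atomless complete Boolean algebra is a counterexample), so this route cannot deliver complete distributivity in the strong sense. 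Finally, closing the argument by appealing to \cite[Proposition~3.13]{choiceFreeHA} is circular, since that citation \emph{is} the statement being proved. In short: keep paragraphs one and two, which match the source's method, and delete paragraph three.
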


The condition for a frame to be a Heyting frame is that its full complex algebra $\mathfrak{F}^+=(\gpsi,\subseteq,\bigcap,\bigvee,\emptyset'',Z_1,\Ra,\ldots)$ be a (complete) Heyting algebra (with additional structure due to the modal operators). The implication construct $\Ra$ is residuated with a product operator $\bigovert$ (Proposition~\ref{Ra long and short}), which is the closure of the image operator $\bigodot$ generated by the relation $R^{111}$ of Definition~\ref{derived relations defn}. Therefore, $\mathfrak{F}^+$ is a residuated lattice. Hence it will be a Heyting algebra if $\bigovert$ is the same as intersection on stable sets. Working out the detail of this observation we have the following result.

\begin{thm}\label{Heyting thm}
  Let $\mathfrak{F}=(s,Z,I,T^{\partial 1\partial},\ldots)$ be a frame. Then the full complex algebra of the frame is a complete Heyting algebra iff the relation $R^{111}$  coincides with the upper bound relation $R_\leq$ on the frame. In that case, the implication construct is equivalently defined by $x\in (A\Ra C)$ iff $\forall z\in Z_1(x\leq z\lra(z\in A\lra z\in C))$. 
\end{thm}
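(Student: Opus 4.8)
The plan is to reduce the biconditional to the single set-theoretic identity $\bigovert=\bigcap$ on $\gpsi$. As recorded just before the statement, $\bigovert$ is residuated with $\Ra$ (Proposition~\ref{Ra long and short}), so $\mathfrak{F}^+$ is a residuated lattice with fusion $\bigovert$ and residuum $\Ra$. I would first note the elementary fact that $\mathfrak{F}^+$ is a complete Heyting algebra exactly when $\bigovert=\bigcap$: if $\mathfrak{F}^+$ is a Heyting algebra then, for each $A\in\gpsi$, both $A\bigovert(-)$ and $A\cap(-)$ are left adjoint to $A\Ra(-)$ (the former by Proposition~\ref{Ra long and short}, the latter because $\Ra$ is then the relative pseudocomplement), hence coincide by uniqueness of adjoints; conversely, if $\bigovert=\bigcap$ the residuation law for $\bigovert$ becomes the Heyting law $A\cap B\subseteq C \iff B\subseteq A\Ra C$, and since $\gpsi$ is a complete lattice this makes $\mathfrak{F}^+$ a complete Heyting algebra. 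So it suffices to prove that $\bigovert=\bigcap$ on $\gpsi$ iff $R^{111}=R_\leq$.

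For the direction from $R^{111}=R_\leq$ to $\bigovert=\bigcap$ I would compute directly: since Galois sets are upsets (Lemma~\ref{basic facts}), $A\bigodot F=\bigcup_{u\in A,\,z\in F}\{w\mid u\leq w\text{ and }z\leq w\}=A\cap F$, so $A\bigovert F=(A\cap F)''=A\cap F$. For the converse, the key lemma is that $\Gamma u\bigovert\Gamma z=R^{111}uz$ for all $u,z\in Z_1$; granting it, if $\bigovert=\bigcap$ then $R^{111}uz=\Gamma u\bigovert\Gamma z=\Gamma u\cap\Gamma z=R_\leq uz$ for all $u,z$, i.e.\ $R^{111}=R_\leq$. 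To prove the key lemma I would use that $\bigovert$, being the closure of the image operator $\bigodot$ generated by $R^{111}$, satisfies $\Gamma u\bigovert\Gamma z=\bigvee_{u'\geq u,\,z'\geq z}R^{111}u'z'$ — each section $R^{111}u'z'=(R^{\partial 11}u'z')'$ being already Galois stable as the output-section of a Galois dual relation — and then show that the section map $(p,q)\mapsto R^{111}pq$ is antitone in each argument, whence $R^{111}uz$ is the largest, and so the join, of the displayed family. Unwinding Definition~\ref{derived relations defn} gives $R^{111}pq=(R^{\partial 11}pq)'$ with $R^{\partial 11}pq=\{v\in Z_\partial\mid q\in(Tpv)'\}$, where $T=T^{\partial 1\partial}$; antitonicity in $q$ is immediate from $\upv$ being increasing (Lemma~\ref{basic facts}(1)), and antitonicity in $p$ follows from smoothness of $T$: the relevant section of the Galois dual $T'$ is a Galois set, hence an upset, so $(Tpv)'=T'pv\subseteq T'p'v=(Tp'v)'$ whenever $p\leq p'$, giving $R^{\partial 11}pq\subseteq R^{\partial 11}p'q$; priming, being order-reversing, then flips this at the level of $R^{111}$. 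This unwinding — pushing the smoothness of the one primitive relation $T$ through the Galois-dual / argument-permutation / Galois-dual chain of Definition~\ref{derived relations defn} while tracking co- and contravariance at each argument place — is the only genuinely delicate point; everything else is routine manipulation with the residuation and closure apparatus of Sections~\ref{frames introduction section}--\ref{minimal normal section}.

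Finally, for the displayed equivalent semantic clause in the ``in that case'' part, I would substitute $R^{111}=R_\leq$ into the characterization \eqref{Ra equivalent characterizations} of $A\Ra C$: the membership condition $x\in(A\Ra C)$ becomes $\forall u,z\in Z_1\,(u\in A\wedge u\leq z\wedge x\leq z\to z\in C)$, which, using once more that $A$ is an upset (so that $\exists u\in A\,(u\leq z)$ is just $z\in A$), collapses to $\forall z\in Z_1\,(x\leq z\to(z\in A\to z\in C))$, exactly as asserted.
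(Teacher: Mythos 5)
Your proof is correct, but it is worth noting that the paper itself does not actually prove this theorem: its ``proof'' is a citation to \cite[Propositions~3.15, 3.17]{choiceFreeHA}, so your self-contained argument is a genuine supplement rather than a retracing. Your reduction of ``$\mathfrak{F}^+$ is a complete Heyting algebra'' to the identity $\bigovert=\bigcap$ on $\gpsi$ is exactly the reduction the paper sketches informally in the paragraph preceding the theorem (via Proposition~\ref{Ra long and short} and uniqueness of adjoints), and the direction $R^{111}=R_\leq\Rightarrow\bigodot=\cap$ on upsets is the routine half. What your write-up adds, and gets right, is the converse: the key identity $\Gamma u\bigovert\Gamma z=R^{111}uz$ rests on the sections $R^{111}u'z'=(R^{\partial 11}u'z')'$ being automatically Galois (images of the priming maps) together with isotonicity of $(p,q)\mapsto R^{\partial 11}pq$, where isotonicity in $q$ comes from $\upv$ being increasing and isotonicity in $p$ comes from smoothness of $T^{\partial 1\partial}$ (the first section of $T'$ is stable, hence an upset), so that $R^{111}uz$ is the largest, and therefore the join, of the family $\{R^{111}u'z'\midsp u\leq u',\,z\leq z'\}$. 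This is precisely the delicate covariance bookkeeping through Definition~\ref{derived relations defn} that the paper leaves implicit, and your handling of it is sound. The derivation of the displayed semantic clause from \eqref{Ra equivalent characterizations} by substituting $R_\leq$ for $R^{111}$ and absorbing the quantifier over $u$ using that $A$ is an upset is also correct. The only point I would flag as worth one explicit sentence is the reading of ``is a complete Heyting algebra'': your forward direction (uniqueness of left adjoints) presupposes that this means $\Ra$ itself is the relative pseudocomplement of $\cap$, which is the reading forced by axioms (H1)--(H2) of Definition~\ref{implicative lattice defn} and by the paper's own framing, but it deserves to be said.
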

\begin{proof}
The claims were proven in \cite[Propositions~3.15, 3.17]{choiceFreeHA}. 
\end{proof}   
That the canonical frame of a Heyting algebra satisfies the condition of the above theorem was proven in~\cite[Proposition~4.11]{choiceFreeHA}.

\begin{rem}
When $R^{111}$ coincides with the upper bound relation $R_\leq$, then $\bigodot$ is intersection of stable and, more generally, of upper closed sets in the $\leq$ order. This follows from the fact that $U\bigodot W=\{u\midsp\exists x,z(x\in U\wedge z\in W\wedge uR_\leq xz)\}$. But note that $\bigodot$ need not be intersection of arbitrary subsets of $Z_1$. Note also that for any increasing subsets (any upsets) $U,W$, the set $U\Ra_T W$ is also an upset, when $R^{111}=R_\leq$. Hence, in the case of a Heyting frame, the residuated pair $\bigodot,\Ra_T$ in $\powerset(Z_1)$ restricts to a residuated pair in the family (complete lattice) $\mathcal{UP}(Z_1)$ of increasing subsets of $Z_1$ (upper closed subsets in the $\preceq$-order), and to a residuated pair $\bigodot,\Ra$ in $\gpsi$. The latter is the case because for stable sets $A,C$, $A\bigodot C=A\cap C$ is a stable set hence $A\bigovert C=(A\bigodot C)''=A\bigodot C=A\cap C$.
\end{rem}

\begin{thm}\label{heyting K}
If the frame $\mathfrak{F}=(s,Z,I,T^{\partial 1\partial},R^{\partial\partial}_\Box,\ldots)$ is a Heyting frame, then the K-axiom is valid in $\gpsi$. Consequently, the minimal Intuitionistic modal logic, assuming only the K-axiom, is valid in the class of Heyting frames.
\end{thm}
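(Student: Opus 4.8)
The plan is to exploit the fact that in a Heyting frame the implication construct $\Ra$ is a genuine residual of intersection on $\gpsi$, which is precisely what renders the second-order axiom (F6) unnecessary. Concretely, by Theorem~\ref{Heyting thm} the full complex algebra $\mathfrak{F}^+=(\gpsi,\subseteq,\bigcap,\bigvee,\emptyset'',Z_1,\Ra,\lbb,\ldots)$ is a complete Heyting algebra, so that for all stable sets $A,B,C$ one has the adjunction $B\subseteq A\Ra C$ iff $A\cap B\subseteq C$. Unfolding the satisfaction clauses of Table~\ref{sat} (together with $\val{\Box\varphi}=\lbb\val{\varphi}$ from the proof of Proposition~\ref{soundness for box} and $\val{\varphi\ra\psi}=\val{\varphi}\Ra\val{\psi}$ from Proposition~\ref{soundness for implication}), validity of the K-axiom in $\gpsi$ amounts to the inclusion $\lbb(A\Ra C)\subseteq\lbb A\Ra\lbb C$ for arbitrary $A,C\in\gpsi$, exactly the statement of Proposition~\ref{K-prop}.

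First I would record the two facts about $\lbb$ that are needed, both holding in every refined frame: $\lbb$ is monotone and it distributes over arbitrary intersections of stable sets. These follow from Proposition~\ref{box as restriction} (equivalently, directly from Theorem~\ref{dist from section stability}), since $\lbb$ is the restriction to stable sets of the classical dual image operator $\lbminus_R$, which preserves all set-intersections, and stable sets are closed under arbitrary intersection; moreover $\lbb$ maps $\gpsi$ into $\gpsi$, so all terms below are legitimate elements of the algebra.

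Then the argument is a short computation. Instantiating the Heyting adjunction with $B:=A\Ra C$ yields the intuitionistic modus-ponens inclusion $A\cap(A\Ra C)\subseteq C$ (this is axiom (H1) of Definition~\ref{implicative lattice defn} read in $\mathfrak{F}^+$). Applying $\lbb$, which is monotone, gives $\lbb(A\cap(A\Ra C))\subseteq\lbb C$, and since $\lbb$ preserves binary intersections of stable sets the left-hand side equals $\lbb A\cap\lbb(A\Ra C)$. Hence $\lbb A\cap\lbb(A\Ra C)\subseteq\lbb C$, and one more application of the adjunction, now with $A:=\lbb A$, $B:=\lbb(A\Ra C)$ and $C:=\lbb C$, delivers $\lbb(A\Ra C)\subseteq\lbb A\Ra\lbb C$, as required. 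Soundness of the full minimal Intuitionistic modal logic over the class of Heyting frames then follows by combining this with the soundness results of Section~\ref{soundness for minimal} for the fusion $\mathbf{\Lambda}^\ra_{\Box\Diamond}$ and with the validity of the Heyting-algebra base axioms (H1)--(H2) in $\gpsi$ (Theorem~\ref{Heyting thm} and \cite{choiceFreeHA}).

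The main obstacle here is conceptual rather than computational: one must recognise that no frame condition beyond ``Heyting frame'' is needed. In the genuinely distribution-free setting $\Ra$ fails to be the residual of $\cap$, the equivalence between $\lbb(A\Ra C)\subseteq\lbb A\Ra\lbb C$ and $\lbb A\cap\lbb(A\Ra C)\subseteq\lbb C$ breaks down, and one is forced to approximate $\lbb$ through the intermediate structure and to postulate (F6) (equivalently $\lbb^u A\subseteq\lbb^\ell A$) in order to carry the final step in the chain of inclusions in the proof of Proposition~\ref{K-prop}. In a Heyting frame the identity $\bigovert=\cap$ on stable sets (see the Remark following Theorem~\ref{Heyting thm}) reinstates the adjunction and collapses that entire second-order detour into the two-line calculation above.
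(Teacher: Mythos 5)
Your proposal is correct, and it reaches the conclusion by a genuinely different route from the paper. You work entirely at the algebraic level: you invoke the residuation law $B\subseteq A\Ra C$ iff $A\cap B\subseteq C$ supplied by Theorem~\ref{Heyting thm} (via Proposition~\ref{Ra long and short} and the identification $\bigovert=\cap$ on stable sets), combine it with monotonicity and meet-preservation of $\lbb$ (Proposition~\ref{soundness for box}), and run the standard two-step derivation of K from multiplicativity of box in a complete Heyting algebra: $A\cap(A\Ra C)\subseteq C$, apply $\lbb$, distribute over the intersection, then residuate back. The paper instead argues element-wise in the frame: it takes $u\in\lbb(A\Ra C)$, unfolds the first-order clause $x\in A\Ra C$ iff $\forall z(x\leq z\wedge z\in A\lra z\in C)$ from Theorem~\ref{Heyting thm}, and chases points through $R''_\Box$ using Lemma~\ref{monotonicity props of R double prime}, closing with exactly the modus-ponens fact $p\in A\cap(A\Ra C)\subseteq C$ that drives your calculation. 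Your version buys generality and brevity: it exhibits the result as an instance of the purely algebraic fact that any meet-preserving box on a complete Heyting algebra validates K, and it makes maximally transparent why the second-order axiom (F6) is dispensable here. The paper's version buys a direct verification of the frame-semantic satisfaction clause without passing through the abstract adjunction, which fits its overall relational style. Your diagnosis of where the distribution-free case breaks down (failure of $\Ra$ to be the residual of $\cap$, forcing the $\lbb^u$/$\lbb^\ell$ detour and axiom (F6)) matches the paper's own motivation in Sections~\ref{intermediate structures} and \ref{refined frames section}.
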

\begin{proof}
Assume $u\in\lbb(A\Ra C)$. To show that $u\in\lbb A\Ra\lbb C$, let $z\in Z_1$ be such that $u\leq z$ and $z\in\lbb A$. To get $z\in\lbb C$, which is equivalent to $zR''_\Box\subseteq C$, let $p\in Z_1$ be such that $zR''_\Box p$. It suffices to get $p\in C$. From $u\leq z, zR''_\Box p$ and Lemma~\ref{monotonicity props of R double prime} it follows that $uR''_\Box p$. Using the hypothesis $u\in\lbb(A\Ra C)$ we then have $p\in A\Ra C$. Furthermore, using the hypothesis $z\in\lbb A$ we also get $p\in A$. Given the definition of $\Ra$ in the Heyting case, $A\Ra C=\{p\midsp\forall z(p\leq z\wedge z\in A\lra z\in C)\}$, and since $p$ suttisfies the premisses of the defining implication for $A\Ra C$, we get $p\in C$, as needed. Alternatively, use the fact that $p\in A\cap(A\Ra C)\subseteq C$.
\end{proof}

By Theorem~\ref{heyting K}, no need arises in the case of a modal extension of the Intuitionistic propositional calculus to consider second-order axioms for frames. Frames for extensions with any of the axioms D, T, B, S4, or S5 can be dealt with roughly as done for the distribution-free case, but we shall not dwell on this here.

\section{Further Issues and Conclusions}
This article is part of an ongoing project. It worked through the necessary background, presenting methods and techniques that can be used to model distribution-free systems, applied to the case of modal logic. The area is largely unexplored and much remains to be done, the main interest being in finding out how much of the classical model theory of normal modal logics on a Boolean propositional basis can be actually carried over, appropriately modified, to the case of distribution-free modal logic. We outline some of the related issues.

With a notion of {\em weak bounded morphism}, introduced in \cite{duality2}, the duality of \cite{duality2} can be specialized to a duality for implicative modal lattices, as it has been already specialized to dualities for implicative lattices and Heyting algebras \cite{choiceFreeHA} and for logics with a weak negation logical operator \cite{choiceFreeStLog}. 

Introducing both a classical but sorted modal logic of frames and a sorted first-order language, the van Benthem correspondence problem for distribution-free modal logics can be resolved, as it is an instance of the problem for the logics of arbitrary normal lattice expansions, a project that has been carried out in \cite{vb}. 

The problem of calculating first-order correspondents can be reduced to the same problem but for classical, though sorted, modal logic. 
The sorted modal language of a frame is the language of the sorted powerset algebra $\mathbf{P}=(\largediamond:\powerset(Z_1)\leftrightarrows\powerset(Z_\partial):\lbbox, (F_j)_{j\in J} )$, displayed below for the case of interest in this article.
\begin{eqnarray*}
\mathcal{L}_1\ni\alpha,\eta,\zeta &=& P_i(i\in\mathbb{N})\midsp p_i(i\in\mathbb{N})\midsp\neg\alpha\midsp\alpha\cup\alpha\midsp{\diamondvert}\alpha\midsp{\bbox}\beta  \\
\mathcal{L}_\partial\ni\beta,\delta,\xi &=& Q_i(i\in\mathbb{N})\midsp q_i(i\in\mathbb{N})\midsp\neg\beta\midsp\beta\cup\beta\midsp{\diamondminus}\beta\midsp \alpha{\tright}\beta\midsp{\Box}\alpha
\end{eqnarray*}
The language is interpreted in frames as displayed in Table~\ref{sorted sat table}.

\begin{table}[!htbp]
\caption{Sorted satisfaction relation}
\label{sorted sat table}
\begin{center} ($u\in Z_1, v\in Z_\partial$ and
 $v\onto{x}y$ is alternative notation for $vT^{\partial 1\partial}xy$)\end{center}
\begin{tabbing}
$u\models P_i$\hskip7mm\=iff\hskip2mm\= $u\in V_1(P_i)$\hskip2.2cm\= $v\vmodels Q_i$ \hskip6mm\=iff\hskip2mm\= $v\in V_\partial(Q_i)$
\\[2mm]
$u\models p_i$ \> iff \> $u\in (V_1(P_i))''$ \> $v\vmodels q_i$ \>iff\> $v\in (V_1(P_i))'$
\\[2mm]
$u\models\neg\alpha$\>iff\> $u\not\models\alpha$    \>  $v\vmodels\neg\beta$ \>iff\> $v\not\vmodels\beta$
\\[2mm]
$u\models\alpha\cup\eta$ \>iff\> $u\models\alpha$ or $u\models\eta$
    \>
    $v\vmodels\beta\cup\delta$ \>iff\> $v\vmodels\beta$ or $v\vmodels\delta$
\\[2mm]
$u\models{\bbox\beta}$ \>iff\> $\forall y\in Z_\partial(uIy\lra y\vmodels\beta)$
    \>
    $v\vmodels{\Box}\alpha$ \>iff\> $\forall x\in Z_1(xIy\lra x\models\alpha)$
\\[2mm]
$u\models\diamondvert\alpha$ \>iff\>    $\exists z\in Z_1(uR^{11}_\Diamond z\wedge z\models\alpha)$
    \>
    $v\vmodels\diamondminus\beta$ \>iff\>   $\exists y\in Z_\partial(vR^{\partial\partial}_\Box y\wedge y\vmodels\beta)$
\\[4mm]
\hskip4cm\=$v\vmodels \alpha{\tright}\beta$ \hskip7mm\= iff\hskip2mm\=  $\exists x\in Z_1\exists y\in Z_\partial(x\models\alpha\wedge y\vmodels\beta\wedge (v\onto{x}y))$
\end{tabbing}
\hrulefill
\end{table}

Including a set of {\em regular variables} $p_i, q_i$ in the language, interpreted as Galois sets,  allows for regarding the language of distribution-free modal logic literally as the sublanguage of {\em regular}, or {\em stable} sentences, where $\alpha''\ra\alpha$ is valid,
\[
\mathcal{L}_r\ni\varphi,\psi,\vartheta,\chi\;=\;p_i(i\in\mathbb{N})\midsp\ufootl\midsp\dfootl\midsp \varphi\wedge\varphi\midsp\varphi\vee\varphi\midsp\varphi\circ\varphi\midsp\varphi\rfspoon\varphi\midsp{\diamonddiamond}\varphi \midsp{\boxbox}\varphi,
\]
assuming the definitional axioms $\bb\alpha\leftrightarrow(\diamondminus \alpha')'$, $\dd\alpha\leftrightarrow(\diamondvert\alpha)''$, $\alpha\rfspoon\eta\leftrightarrow(\alpha{\tright}\eta')'$, where $\alpha'\leftrightarrow{\Box}\neg\alpha$, $\beta'\leftrightarrow{\bbox}\neg\beta$ and where also it is defined that $\ufootl\leftrightarrow\top$, $\dfootl\leftrightarrow\bot''$. 

For example, despite its definition, it has been shown that the satisfaction clause for $\bb$ is $x\forces\bb\varphi$ iff $\forall z(xR''_\Box z\lra z\forces\varphi)$. Having shown that in frames the dual operator $\lbb$ is simply the restriction of a classical dual operator $\lbminus_R$, generated by the relation $R''_\Box$, we might as well introduce $\bb$ in the syntax of the sorted modal language as primitive, then include an axiom to the effect that $(\alpha''\ra\alpha)\ra((\bb\alpha)''\ra\bb\alpha)$. 

We sketched above, admittedly too roughly, how the correspondence problem for distribution-free modal logic can be reduced to the classical problem, the only difference in this setting being that the language is sorted. We leave further details for another report.

A Goldblatt-Thomason theorem has been proven by Goldblatt \cite{goldblatt-morphisms2019} for non-distributive logics, though only operators and dual operators have been explicitly treated. It should not be hard to lift the classical theorem to the sorted modal logic case and we conjecture that this can lead to deriving a related result for distribution-free modal logics as well, though clarifying the detail of this will have to be carried out in a subsequent report.

\bibliographystyle{plain}

\end{document}